\newcites{Supp}{References}
\def\BibTeX{{\rm B\kern-.05em{\sc i\kern-.025em b}\kern-.08em
    T\kern-.1667em\lower.7ex\hbox{E}\kern-.125emX}}
\DeclareMathOperator*{\argmin}{argmin}
\newcommand{\be}{\begin{equation}}
\newcommand{\ee}{\end{equation}}
\newtheorem{theorem}{Theorem}
\newtheorem*{theorem*}{Theorem}
\newtheorem*{corollary*}{Corollary}
\newtheorem{lemma}{Lemma}
\newtheorem*{lemma*}{Lemma}
\newtheorem{defn}{Definition}
\newtheorem{prop}{Proposition}
\newtheorem*{prop*}{Proposition}
\begin{document}
\title{CLAMP: Majorized Plug-and-Play for Coherent 3D Lidar Imaging}

\author{
    Tony G. Allen,
    \thanks{Tony G. Allen is with Oak Ridge National Laboratory, 1 Bethel Valley Road, Oak Ridge, TN 37830, USA, but this work was completed while at the Air Force Research Laboratory, WPAFB, OH 45433, USA, and Purdue University, West Lafayette, IN 47907, USA
    (email: \href{mailto:allentg@ornl.gov}{allentg@ornl.gov}).} 
    David J. Rabb,
    \thanks{David J. Rabb is with the Sensors Directorate, Air Force Research Laboratory, WPAFB, OH 45433, USA (email: \href{mailto:david.rabb@us.af.mil}{david.rabb@us.af.mil}).}
    \\Gregery T. Buzzard,~\IEEEmembership{Senior Member,~IEEE,}
    \thanks{Gregery T. Buzzard is with the Department of Mathematics, Purdue University, West Lafayette, IN 47907, USA (email: \href{mailto:buzzard@purdue.edu}{buzzard@purdue.edu}) and was partially supported by NSF CCF-1763896.}
    Charles A. Bouman,~\IEEEmembership{Fellow,~IEEE.}
    \thanks{Charles A. Bouman is with the School of Electrical and Computer Engineering, Purdue University, West Lafayette, IN 47907, USA (email: \href{mailto:bouman@purdue.edu}{bouman@purdue.edu}) and was partially supported by the Showalter Trust.}
    \thanks{(\textit{Corresponding author: Tony G. Allen})}
}

\markboth{IEEE TRANSACTIONS ON COMPUTATIONAL IMAGING, VOL. 11, 2025}%
{Allen, \MakeLowercase{\textit{et al.}}: CLAMP: Majorized Plug-and-Play for Coherent 3D Lidar Imaging}
    
{\maketitle}

\begin{abstract}
    Coherent lidar uses a chirped laser pulse for 3D imaging of distant targets.   
    However, existing coherent lidar image reconstruction methods do not account for the system's aperture, resulting in sub-optimal resolution. 
    Moreover, these methods use majorization-minimization for computational efficiency, but do so without a theoretical treatment of convergence.
    
    In this paper, we present Coherent Lidar Aperture Modeled Plug-and-Play (CLAMP) for multi-look coherent lidar image reconstruction.
    CLAMP uses multi-agent consensus equilibrium (a form of PnP) to combine a neural network denoiser with an accurate surrogate forward model of coherent lidar.
    Additionally, CLAMP introduces a computationally efficient FFT-based method to account for the system's aperture to improve resolution of reconstructed images.
    Furthermore, we formalize the use of majorization-minimization in consensus optimization problems and prove convergence to the exact consensus equilibrium solution. 
    Finally, we apply CLAMP to synthetic and measured data to demonstrate its effectiveness in producing high-resolution, speckle-free, 3D imagery.
    
\end{abstract}

\begin{IEEEkeywords}
    lidar, coherent imaging, majorization-minimization, plug-and-play, consensus equilibrium, deep neural networks.
\end{IEEEkeywords}

\section{Introduction}

\IEEEPARstart{C}{oherent} lidar is a method for three-dimensional (3D) imaging in which a target is flood-illuminated by a frequency-modulated laser source.
In multi-look lidar, multiple snapshots of the reflected wave are recorded on a 2D detector array by a spatial-heterodyne interferometric system~\cite{spencerSpatialHeterodyne}. 
This process yields a collection of 2D measurements over the duration of the chirped waveform. 
A 3D complex-valued volume can be recovered by a demodulation and filtering process~\cite{farhat-1980, marronThreedimensionalFineresolutionImaging1992a,marronThreedimensionalLenslessImaging1992}.  This volume can be further processed to produce a 3D image of the surfaces in the scene. 

There are several challenges in creating high-quality 3D images from multi-look lidar data. 
Recent efforts have focused primarily on correcting phase aberrations caused by non-concurrent measurements~\cite{staffordPhaseGradientAlgorithm2016}, object motion~\cite{banetEffectsofspeckledecorellation}, or atmospheric turbulence~\cite{farrissIterativePhaseEstimation2021,farrissSharpnessbasedCorrectionMethods2018, banetImageSharpening3D2021, banet3DMultiplaneSharpness2021}.
However, beyond addressing phase aberrations, reconstruction algorithms also need to reduce speckle, noise, and aperture-induced blur.

The most straightforward method to reconstruct multi-look coherent lidar images, often called speckle averaging, involves averaging the squared magnitude of back-projected, demodulated data from multiple looks.
Assuming the speckle decorrelates between looks, this averaging reduces speckle in the final reconstruction.
However, this method does not incorporate regularization or priors.

In complex-valued image reconstruction problems, researchers generally have used model-based iterative reconstruction (MBIR) methods that regularize the complex-valued reflectance of the target, assuming the scene reflectance to have locally correlated phase and magnitude~\cite{oktemSparsitybasedThreedimensionalImage2019, sar3d-pnp}.
However, in many scenarios, the phase of the reflectance is well-modeled as random, without spatial or temporal correlation~\cite{goodmanSpecklePhenomenaOptics2020, potter-cetin-sparse, munsonImageReconstructionFrequencyoffset1984}.This observation can help simplify the estimation problem by reducing its effective dimensionality.

In a related 3D near-field radar imaging problem, the authors of~\cite{mimo3d} leverage this observation to regularize only the magnitude of the reflectance, leaving the phase unregularized.
In contrast, the papers of Pellizzari et al.~\cite{pellizzariPhaseErrorEstimation2017,pellizzariSyntheticApertureLadar2017,pellizzariCoherentPlugandPlayDigital2020,pellizzariImagingDistributedvolumeAberrations2019a, pellizzariOpticallyCoherentImage2017} introduce an approach for 2D coherent lidar that uses a Bayesian framework to directly estimate the speckle-free, real-valued reflectivity of the target.
The real-valued reflectivity roughly corresponds to the image seen using broad-spectrum illumination; since the reflectivity is relatively smooth, it can be more strongly regularized to further reduce the effective dimensionality of the problem.
While estimating the reflectivity introduces a seemingly intractable, nonlinear relationship between measured data and the estimated image, Pellizzari et al. overcomes this problem by using the Expectation-Maximization (EM) algorithm~\cite{baum-em-66, baum-em-70, dempsterMaximumLikelihoodIncomplete1977}.

A primary benefit of these methods is that they enable the use of advanced real-valued image priors through Plug-and-Play (PnP) algorithms~\cite{venkatakrishnanPlugandPlayPriorsModel2013, sreehariPlugandPlayPriorsBright2016a,boumanFoundationsComputationalImaging2022}.
Despite the success of advanced priors in 2D image reconstruction~\cite{ulugbek-spm2023}, relatively little work has been done using 3D priors in lidar image reconstruction or related modalities.
It is important to use 3D priors instead, as the images involve complex 3D structures and spatial dependencies that cannot be adequately captured by 2D models.
Early research made a simplifying assumption that there is only a single surface depth at each 2D pixel~\cite{McLaughlin2016, RappGoyal2017}, while more recent works used
Total Variation (TV)~\cite{rudinTV} and mixed-norm ($\ell_{2,1}$) regularization~\cite{mclaughlinL21TV}, deep point cloud denoiser~\cite{Tachella2019Nature}, or a 3D U-Net denoiser~\cite{mimo3d} to jointly exploit range and cross-range correlations and account for multiple surface depths at each cross-range pixel.
However, these methods have not been applied to 3D coherent lidar imaging.

Another challenge in lidar reconstruction is to efficiently correct the blur caused by the finite extent of the system's aperture.
The aforementioned MBIR methods~\cite{pellizzariPhaseErrorEstimation2017,pellizzariSyntheticApertureLadar2017,pellizzariCoherentPlugandPlayDigital2020,pellizzariImagingDistributedvolumeAberrations2019a, pellizzariOpticallyCoherentImage2017} make a simplifying assumption that the forward model operator is an orthogonal matrix. This greatly reduces computational complexity, but effectively disregards the system aperture.
On a similar problem in polarimetric synthetic-aperture radar,~\cite{tuckerSpeckleSuppressionMultiChannel2020} uses the same EM algorithm approach, but addresses the aperture problem by using a greedy graph-coloring-based matrix probing method to do an approximate matrix inversion.
\begin{figure*}[ht]
    \centering
    \includegraphics[width=0.99\textwidth]{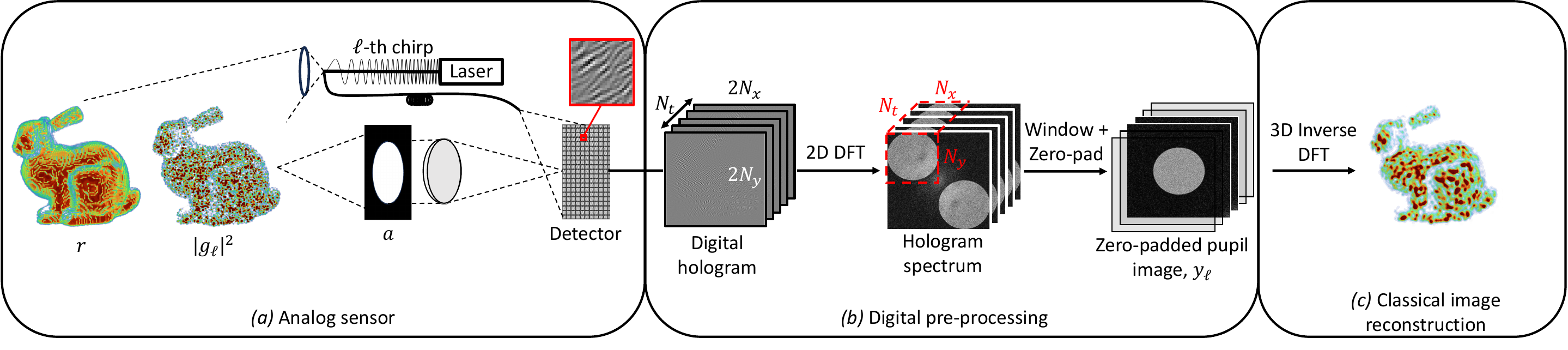}
    \caption{The measurement process and conventional data processing pipeline for a 3D coherent lidar imaging system using image plane recording geometry.
    \textit{(a)} Multiple chirped laser pulses, or looks, illuminate a target.
    Each look is assumed to generate an independent speckle pattern, $g_{\ell}$.
    Reflected light propagates to an optical system with aperture, $a$, where a lens focuses an image onto a focal plane array. 
    The image is interfered with a delayed local reference beam to form a hologram, which is sampled in time to create $N_t$ frames of size $2N_x \times 2N_y$ for each look.
    \textit{(b)} The 2D DFT of each hologram frame is taken to separate the frequency components of the target and reference beam.
    The pupil images are then windowed to size $N_x \times N_y \times N_t$, centered, and zero-padded by a factor, $q$, to form a 3D stack of complex fields, $y_{\ell}$ for the ${\ell}^{th}$ look. 
    \textit{(c)} A noisy 3D image of the target can be reconstructed from $y_{\ell}$ by the inverse 3D DFT and averaging over $\ell = 1,\ldots, L$.
    }
    \label{fig:lidarSystemDiagram}
\end{figure*}

Additionally, from a theoretical point of view, a variety of lidar reconstruction algorithms combine the EM algorithm with PnP or ADMM~\cite{boydDistributedOptimizationStatistical2010}; however, convergence results for this type of combined algorithm are limited.
In~\cite{luUnifiedAlternatingDirection2016}, the authors prove an ergodic convergence rate when using general majorization-minimization~\cite{mairalIncrementalMajorizationMinimizationOptimization2015,mairalOptimizationFirstOrderSurrogate, langeOptimizationTransferUsing} principles within ADMM\@, but they do not prove non-ergodic convergence of the iterates. 

In this paper, we present Coherent Lidar Aperture Modeled Plug-and-Play (CLAMP), a majorized PnP algorithm for multi-look coherent lidar imaging that accounts for the imaging system's aperture, which builds on our previous publication~\cite{allenMultiAgentConsensusEquilibrium}.
Using the Multi-Agent Consensus Equilibrium (MACE) framework~\cite{buzzardPlugandPlayUnpluggedOptimizationFree2018}, CLAMP combines multi-look coherent lidar data with a deep 3D image prior model to produce a 3D reconstruction with reduced noise, speckle, and increased resolution relative to traditional reconstruction methods.
Specifically, we make the following novel contributions:

\begin{itemize} 
    \item Accurate lidar aperture modeling using a computationally efficient FFT-based method;
    \item Incorporation of a deep neural network prior in 3D multi-look coherent lidar reconstruction;
    \item Formulation of a unified framework for majorization-minimization within MACE with theoretical guarantees of convergence to an exact MACE solution;
    \item Results on simulated and measured data that show CLAMP yields 3D images with reduced speckle and improved resolution.
\end{itemize}

A key component of CLAMP is the use of majorization-minimization, or surrogate function optimization, in the MACE framework.
We use convex surrogate functions in place of the proximal agents in MACE and show that this method converges to the exact MACE solution.
This surrogate-based MACE applies to many imaging problems and modalities and encompasses several common approximation methods and existing algorithms as special cases.

\section{Background}\label{sec:background}
\noindent In this section, we provide an overview of the multi-look coherent lidar imaging problem, develop the models used in our approach, and introduce the MACE framework.
We strive to provide a comprehensive background on the problem and the methods used in our approach, but for a more detailed discussion of 3D coherent lidar imaging, we refer the reader to~\cite{farhat-1980, marronThreedimensionalFineresolutionImaging1992a, marronThreedimensionalLenslessImaging1992, farrissIterativePhaseEstimation2021}.

\subsection{Coherent Lidar Imaging Measurement System}\label{sec:lidar}

\noindent Figure~\ref{fig:lidarSystemDiagram}(a) illustrates a typical coherent lidar imaging system with an image-plane recording geometry.
In this system, a series of $L$ chirped laser pulses, or looks, illuminates a target with reflectivity, $r$.
Each look generates a speckle pattern, $g_{\ell}$, which is a complex-valued field that represents the target's reflectance.
Throughout this work, we assume that the target is stationary and that the speckle realizations and measurement noise samples are statistically independent from look to look.
For each look, $\ell = 1, \ldots, L$, the corresponding complex reflectance, $g_{\ell}$ propagates to the aperture, $a$, where a thin lens forms an image of the target on a focal plane array.
The resulting image is mixed with a delayed, off-axis, local reference beam, which results in an interference pattern, or hologram, on the focal plane array.

As illustrated in Figure~\ref{fig:lidarSystemDiagram}(b), the hologram is sampled in time to create a series of $N_{t}$ frames of size $2N_x \times 2N_y$ for each look.
Intuitively, each of these hologram frames corresponds to a slightly different laser wavelength due to the chirp slope.
This stack of 2D holograms will can be used to recover 3D image of the complex reflectance of the target.
For information on the principles of 3D holographic imaging, we refer the reader to~\cite{farhat-1980, marronThreedimensionalLenslessImaging1992,marronThreedimensionalFineresolutionImaging1992a}.

Figure~\ref{fig:lidarSystemDiagram}(b) and (c) illustrate the conventional processing steps to reconstruct a 3D image of the target from this stack of $N_t$ holograms.
First, the 2D discrete Fourier transform (DFT) is taken for each frame in a look.
Since the frames are real valued, each DFT results in a complex image that contains two circular disks corresponding to the complex conjugate circular apertures in the pupil plane.
One of the two circular apertures is then embedded in a bounding square array (shown with a red dotted line) and windowed to give the pupil image.
This image is zero-padded to increase the effective sampling rate in the space and time domains and yields a larger 3D stack of pupil images, $y_{\ell}$.
Specifically, we zero-pad the pupil image by a factor of $q$ in each of the three dimensions; so the final array, $y_{\ell}$, is $q^3$ times the size of the un-padded images.
The last step in conventional reconstruction is to take the inverse 3D DFT of each $y_\ell$ and average over $\ell$, which yields a candidate reconstruction in cross-range and depth.

The final step in conventional reconstruction limits reconstruction quality in two key ways: lost spatial and temporal resolution due to the missing information in the frequency domain, and noise due to speckle.

In contrast, our algorithm replaces the conventional final 3D DFT and averaging step with an iterative model-based strategy which employs a physics-based forward model in conjunction with neural network denoisers to ensure the reconstructed image is high-resolution and free of speckle and noise.

\subsection{Forward Model}\label{sec:forward}

Assuming the target's depth is small compared to the range of the lidar system and that the fractional bandwidth of the chirp is small~\cite{farrissIterativePhaseEstimation2021}, the complex field in the pupil plane, $\tilde{y}_{\ell}: \mathbb{R}^3 \to \mathbb{C}$, for $\ell = 1, \ldots, L$, can be determined by the Fresnel propagation integral~\cite{goodmanIntroductionFourierOptics} of the complex reflectance, $\tilde{g}_{\ell}: \mathbb{R}^3 \to \mathbb{C}$ as
\begin{align}
    \tilde{y}_{\ell}(\xi, \nu, t) &= \frac{e^{2\pi i d / \lambda(t)}}{i\lambda d} \tilde{a}(\xi, \nu, t) \exp\left\{\frac{2\pi i}{2\lambda(t)d}(\xi^2 + \nu^2) \right\} \nonumber \\
    & \iiint_{-\infty}^{\infty} \left\{ \tilde{g}(x,y,z)  \exp\left\{\frac{2\pi i}{2\lambda d}(x^2 + y^2) \right\} \right\} \nonumber \\
    & \exp\left\{-\frac{2\pi i}{\lambda d}(\xi x + \nu y + \frac{1}{\lambda(t)}z) \right\} \,dx\,dy\,dz,
\end{align}
where $(\xi, \nu)$ are the coordinates in the pupil plane, $(x,y,z)$ are the coordinates of the target, $d_z$ is the distance from the object plane to the pupil plane, $\tilde{a}(\xi, \nu, t)$ is the aperture, and $\lambda(t)$ is the instantaneous wavelength of the chirp at time $t$.
By a small angle approximation, the quadratic phase term outside the integral can be ignored, and since we only consider the magnitude of the complex field, the quadratic phase term inside the integral can be absorbed into $\tilde{g}_{\ell}$ as in~\cite{pellizzariSyntheticApertureLadar2017}.
This results in a Fourier relationship,
\begin{equation}\label{eq:continuous-forward}
    \tilde{y}_{\ell}(\xi, \nu, t) = \frac{e^{2\pi i d / \lambda(t)}}{i\lambda d} \tilde{a}(\xi, \nu, t) \cdot \mathscr{F}\left[\tilde{g}_{\ell}\right]\left(\frac{\xi}{\lambda d}, \frac{\nu}{\lambda d}, \frac{1}{\lambda(t)}\right),
\end{equation}
where $\mathscr{F}$ is the continuous 3D Fourier transform. 

The measured and processed data, $y_{\ell} \in \mathbb{C}^{n}$ where $n=q^{3}N_{x}N_{y}N_{t}$, is a normalized, discrete sampling of $\tilde{y}_{\ell}$ in lexicographical order.
A discrete version of~\eqref{eq:continuous-forward} can be used to relate $y_{\ell}$ to the complex reflectance image, $g_{\ell} \in \mathbb{C}^{n}$, by
\begin{equation}
    \label{eq:forward-model}
    y_{\ell} = Ag_{\ell} + \eta_{\ell},
\end{equation}
where $A$ is a linear operator that models the imaging system, and $\eta_{\ell} \sim \text{CN}(0,\sigma_\eta^2 I)$ is circularly-symmetric complex Gaussian white noise, which is the dominant noise source~\cite{spencerShotNoise}.
The operator $A$ can be written as
\begin{equation}\label{eq:A}
    A = \mathcal{D}(a) F,
\end{equation}
where $\mathcal{F}$ is the orthonormal 3D DFT, $\mathcal{D}(\cdot)$ represents a diagonal matrix with entries given by its argument, and the vector $a$ is a binary vector that encodes the aperture of the imaging system after zero-padding.

Our ultimate goal will be to recover $r\in \mathbb{R}^{n}$, the real-valued speckle-free reflectivity at each 3D voxel.
While the data $y_{\ell}$ directly relates to $g_{\ell}$ by~\eqref{eq:forward-model}, images reconstructed from estimates of $g_{\ell}$ are degraded by speckle.
To address this, we need a forward model that relates $r$ to $g_{\ell}$.

We will assume that each look produces a sample of $g_{\ell}$ that is conditionally independent given $r$.
Using a standard model of fully developed speckle~\cite{goodmanFundamentalPropertiesSpeckle1976, goodmanSpecklePhenomenaOptics2020}, we can model the conditional distribution of $g_{\ell}$ given $r$ as being a circularly-symmetric complex Gaussian random variable with conditional distribution 
\begin{equation}\label{eq:g-given-r}
    g_{\ell}|r \sim \text{CN}(0,\mathcal{D}(r)) \ ,
\end{equation}
where $\mathcal{D}(r)$ is a diagonal covariance with entries $r$.
We can then directly model the data $y_{\ell}$ in terms of $r$ by composing the forward model~\eqref{eq:forward-model} with the speckle model of~\eqref{eq:g-given-r} to yield
\begin{equation}\label{eq:y-given-r}
    y_{\ell}|r \sim \text{CN}(0, A\mathcal{D}(r)A^H + \sigma_{\eta}^2I) \ ,
\end{equation}
where $y_{\ell}$ for $\ell =1,\dots, L$ are assumed conditionally independent given $r$.

\section{CLAMP}\label{sec:clamp} 
\noindent In this section, we present the Coherent Lidar Aperture Modeled Plug-and-Play (CLAMP) algorithm for reconstructing 3D images from multi-look coherent lidar measurements.

\subsection{MACE Formulation}\label{sec:mace}
CLAMP is built on the MACE framework. For a more detailed discussion on MACE, we refer the reader to~\cite{buzzardPlugandPlayUnpluggedOptimizationFree2018}.

Our goal is to reconstruct $r$ from a collection of $L$ coherent lidar looks, $\{ y_{\ell} \}_{\ell =1}^L$.
In order to do this, MACE balances multiple agents, such as data-fitting operators or denoisers, to form a single reconstruction~\cite{buzzardPlugandPlayUnpluggedOptimizationFree2018}.
In our case, we will use $L$ data-fitting agents, $F_{\ell}$, and a single denoising agent, $H$.
Each agent operates on an image, $w_j$, and returns a new image that is improved according to the directive of the agent.

The first $L$ agents are forward agents, $F_{\ell}$, that enforce fidelity to the measurements, $y_{\ell}$, for $\ell = 1, \ldots, L$.
Specifically, $F_{\ell}$ is a proximal map given by 
\begin{equation}\label{eq:prox}
    F_{\ell}(w_{{\ell}}) = \underset{r}{\operatorname{argmin}}\left\{ f_{\ell}(r) + \frac{1}{2\sigma^2} \left\lVert r - w_{\ell} \right\rVert^2 \right\}.
\end{equation}
where $w_{{\ell}}$ is a candidate reconstruction, and $f_{\ell} (r) = - \log p\left(y_{\ell} | r \right)$ is the negative log likelihood associated with the ${\ell}^{th}$ look.
These operators have an intuitive interpretation of taking a candidate reconstruction as input and returning an image that better fits the data from look $\ell$.
The final agent is a deep neural network image denoiser that enforces a prior on the 3D image by reducing noise and speckle. 
The consensus equilibrium of these $L+1$ agents yields an image that is both consistent with all measurements and regularized in all three dimensions.
The implementation of these agents is described in Section~\ref{sec:em-agents} and~\ref{sec:cnn}.

To formulate the MACE solution, we stack the images into a single state vector, $\mathbf{w} = [w_1, \ldots, w_{L+1},]$, and stack the agents into a single operator, $\mathbf{F}$, given by
\begin{equation}
\mathbf{F}(\mathbf{w}) = \left[F_1(w_1), \ldots, F_L (w_{L}), H (w_{L+1}) \right].
\end{equation}
In order to produce a single reconstruction that is a compromise between the agents, we average the components of $\mathbf{w}$ to form a new image, $\overline{\mathbf{w}} = \frac{1}{L+1} \sum_{j=1}^{L+1} \upsilon_j w_j$ where $\upsilon_j$ sums to one.
In this work, we will use $\upsilon_j = 1/2L$ for all $j=1,\ldots, L$ and $\upsilon_{L+1} = 1/2$, which gives equal weight to the data-fitting agents and the prior agent.
Finally, the MACE solution is determined by the solution to the equilibrium equation
\begin{equation}\label{eq:mace}
    \mathbf{F}(\mathbf{w}^*) = \mathbf{G}(\mathbf{w}^*) \ ,
\end{equation}
where $\mathbf{G}(\mathbf{w}) = [\overline{\mathbf{w}}, \ldots, \overline{\mathbf{w}}]$ is a weighted average of the components of $\mathbf{w}$. Intuitively, $\mathbf{w}^*$ is the image that is the consensus solution of the agents in $\mathbf{F}$.
The final reconstruction, $r^*$, can be obtained from any single component of $\mathbf{w}^*$, however, we typically use $r^* = \overline{\mathbf{w}}^*$ for numerical robustness.

The MACE equation of~\eqref{eq:mace} is commonly solved as a fixed point of the operator $\mathbf{T}= \left( 2 \mathbf{G} - \mathbf{I} \right)\left( 2 \mathbf{F} - \mathbf{I} \right)$ by the iterations
\begin{equation}\label{eq:mann}
    \mathbf{w} \gets (1-\rho)\mathbf{w} + \rho \mathbf{T}\mathbf{w},
\end{equation}
where $\rho \in (0,1)$.
This approach (summarized in Algorithm~\ref{alg:MACE}) is guaranteed to converge to a solution of~\eqref{eq:mace} when $\mathbf{T}$ is non-expansive.

\begin{algorithm}[ht]
    \caption{MACE}\label{alg:MACE}
    \begin{algorithmic}[1]
    \State\textbf{Input}: Initialize $\mathbf{w} \in \mathbb{R}^{n(L+3)}$, $\rho \in (0,1)$
    \While{not converged}
        \State$\mathbf{r} \gets \mathbf{F}(\mathbf{w})$
        \State$\mathbf{x} \gets 2\mathbf{r} - \mathbf{w}$
        \State$\mathbf{w} \gets \mathbf{w} + 2\rho \left( \mathbf{G}( \mathbf{x}) - \mathbf{r} \right)$
    \EndWhile
    \State\textbf{Output}: $r^* = \overline{\textbf{w}}$
    \end{algorithmic}
\end{algorithm}

\subsection{EM Surrogate Forward Agents}~\label{sec:em-agents}
Unfortunately, direct application of the MACE iterations in~\eqref{eq:mann} is computationally infeasible.
The proximal map of the forward model in \eqref{eq:y-given-r} quickly becomes intractable for even moderately large images due to the non-linear relationship between $r$ and $y_{\ell}$ introduced by speckle interference.

In order to overcome this problem, we will take the approach first proposed by Pellizzari et al.~\cite{pellizzariPhaseErrorEstimation2017,pellizzariSyntheticApertureLadar2017,pellizzariCoherentPlugandPlayDigital2020,pellizzariImagingDistributedvolumeAberrations2019a, pellizzariOpticallyCoherentImage2017} and use the EM algorithm to compute surrogate functions to the exact negative log likelihood functions.
These surrogate functions are given by
\begin{equation}\label{eq:em-surrogate}
    \Hat{f}_{\ell}(r;r_{\ell}^{\prime}) = \mathbb{E}_{g\vert y_{\ell}, r_{\ell}^{\prime}} \left[ -\log p\left(y_{\ell},g\middle\vert r\right)\right],
\end{equation}
where the expectation is taken over the conditional distribution of the latent complex reflectance $g$ given $y_{\ell}$ and $r_{\ell}^{\prime}$, an approximation of $r$. Assuming~\eqref{eq:forward-model} and~\eqref{eq:g-given-r}, and using $y_\ell|g,r = y_\ell|g$,~\cite{pellizzariPhaseErrorEstimation2017} shows that this conditional distribution of $g$ given $y_{\ell}, r_{\ell}^{\prime}$ is Gaussian with the form
\begin{align} 
\label{eq:ConditionalDisGGivenYR}
p(g \vert y_{\ell}, r_{\ell}^{\prime}) = \frac{1}{Z} 
\exp \left\{ - ( g  - \mu_{\ell} )^H \tilde{C}_{\ell}^{-1} ( g  - \mu_{\ell} ) \right\},
\end{align}
where $Z$ is a normalizing constant and $\mu_{\ell}$ and $\tilde{C}_{\ell}$ are the mean and covariance given by
\begin{align}
\label{eq:mu2}
\mu_{\ell} &= \underset{g\in \mathbb{C}^n}{\operatorname{argmin}}\left\{  \frac{1}{2\sigma_w^{2}}\lVert y_{\ell}-Ag \rVert^2 + \frac{1}{2} g^H \mathcal{D}\left(\frac{1}{r_{\ell}^{\prime}}\right) g \right\}, \\
\label{eq:Bi}
\tilde{C}_{\ell} &= \left[ \frac{1}{\sigma_{w}^{2}} A^{H}A + \mathcal{D}\left(\frac{1}{r_{\ell}^{\prime}}\right) \right]^{-1}.
\end{align}
Computing the expectation in~\eqref{eq:em-surrogate} then gives the surrogate function as
\begin{equation}\label{eq:em-surrogate-final}
    \hat{f}(r; \mu_{\ell}, \tilde{C}_{\ell}) = \sum_{j=1}^{n} \left\{ \log\left( r_j \right) + \frac{\left\lvert \mu_{\ell,j} \right\rvert^2 + \tilde{c}_{\ell,j}}{r_j}\right\},
\end{equation}
where $j$ is the voxel index, $\mu_{\ell,j}$ is the $j$th element of $\mu_{\ell}$, and $\tilde{c}_{\ell,j}$ is the $j$th diagonal element of $\tilde{C}_{\ell}$.

However, using this surrogate function is also intractable for large images since computing the covariance, $\tilde{C}_{\ell}$, requires a large matrix inversion, while computing the mean, $\mu_{\ell}$, of~\eqref{eq:mu2} directly requires the solution of a large quadratic optimization problem.
We address each of these issues separately.

First, to improve the tractability of computing $\tilde{C}_{\ell}$, we make the approximation that $A^HA = \alpha I$, where $\alpha = \lVert a \rVert_1 / n$ has the interpretation of the fraction of light passing through the aperture.
In this case, $\tilde{C}_{\ell}$ is replaced by diagonal matrix $C_{\ell}$ with diagonal entries $c_{\ell, j}$, given by
\begin{equation}
    c_{\ell, j} = \frac{\sigma_w^2 r_{\ell, j}'}{\alpha r_{\ell, j} + \sigma_w^2}.
\end{equation}
Alternatively, if a more accurate approximation is needed, one could use the method used in~\cite{tuckerSpeckleSuppressionMultiChannel2020} at the cost of more computation.
Importantly, we only make this orthogonality assumption in approximating $\tilde{C}_{\ell}$, not throughout the entire forward model.

Next, to evaluate~\eqref{eq:mu2} efficiently, we solve the minimization problem iteratively.
Rather than fully solving it at each CLAMP iteration, we compute only a single gradient descent step.
As CLAMP progresses, these incremental steps gradually minimize~\eqref{eq:mu2} over multiple iterations.

To increase stability and avoid division by zero in the objective in~\eqref{eq:mu2}, we add the noise floor $\sigma_w^2 / \alpha$ to $r_\ell^\prime$.
This yields the regularized objective 
\begin{equation}\label{eq:mu-objective}
    h (g; y_{\ell}, r_{\ell}^\prime ) = \frac{1}{2\sigma_w^{2}}\lVert y_{\ell}-Ag \rVert^2 + \frac{1}{2} g^H \mathcal{D}\left(\frac{1}{r_{\ell}^{\prime} + \sigma_w^2/\alpha}\right) g.
\end{equation}
The gradient step is then given by
\begin{align}
d &\gets - \nabla h(\mu_{\ell}; y_{\ell}, r_{\ell}^\prime ) \label{eq:h-gradient} \\
\mu_{\ell} &\gets \mu_{\ell} + \gamma_* d \label{eq:mu-update},
\end{align}
where $\gamma_*$ is the optimal step-size and is given by
\begin{equation}
    \gamma_* = \frac{(Ad)^H(y_{\ell}-A\mu_{\ell}) - \sigma_w^2 d^H\mathcal{D}(1/r_{\ell}')\mu_{\ell}}{d^H\left(A^HAd + \sigma_w^2 \mathcal{D}(1/r_{\ell}')d\right)}.
\end{equation}

Finally, using these forms for $\mu_{\ell,j}$ and $c_{\ell,j}$, the EM surrogate of~\eqref{eq:em-surrogate} can be calculated~\cite{pellizzariPhaseErrorEstimation2017} to be
\begin{align}
\label{eq:EM-surrogate}
\Hat{f}_{\ell}(r; \mu_{\ell}, c_{\ell} ) &=
\sum_{j=1}^n \left\{ \log r_{j} + \frac{ \vert \mu_{\ell,j} \vert^2 + c_{\ell,j}}{ r_{j} } \right\} .
\end{align}
From this, the CLAMP forward agent is the proximal map 
\begin{equation}
\label{eq:surrogate-prox}
\Hat{F_{\ell}}(v; \mu_{\ell}, c_{\ell} ) = \underset{r}{\operatorname{argmin}}\left\{ \Hat{f_{\ell}}(r; \mu_{\ell}, c_{\ell}) + \frac{1}{2\sigma^2} \left\lVert r - v \right\rVert^2 \right\},
\end{equation}
which can be computed in closed form by solving for a root of a cubic equation for each voxel in the image.
However, in practice, to further simplify computation and speed up implementation, we employ an additional quadratic surrogate of $\hat{f}_{\ell}$.
The details of which are given in Appendix~\ref{ap:quadratic-surrogate}.

\subsection{3D Deep Prior Agent}\label{sec:cnn}
\noindent 

\begin{figure}
    \centering
    \includegraphics[width=0.99\columnwidth]{./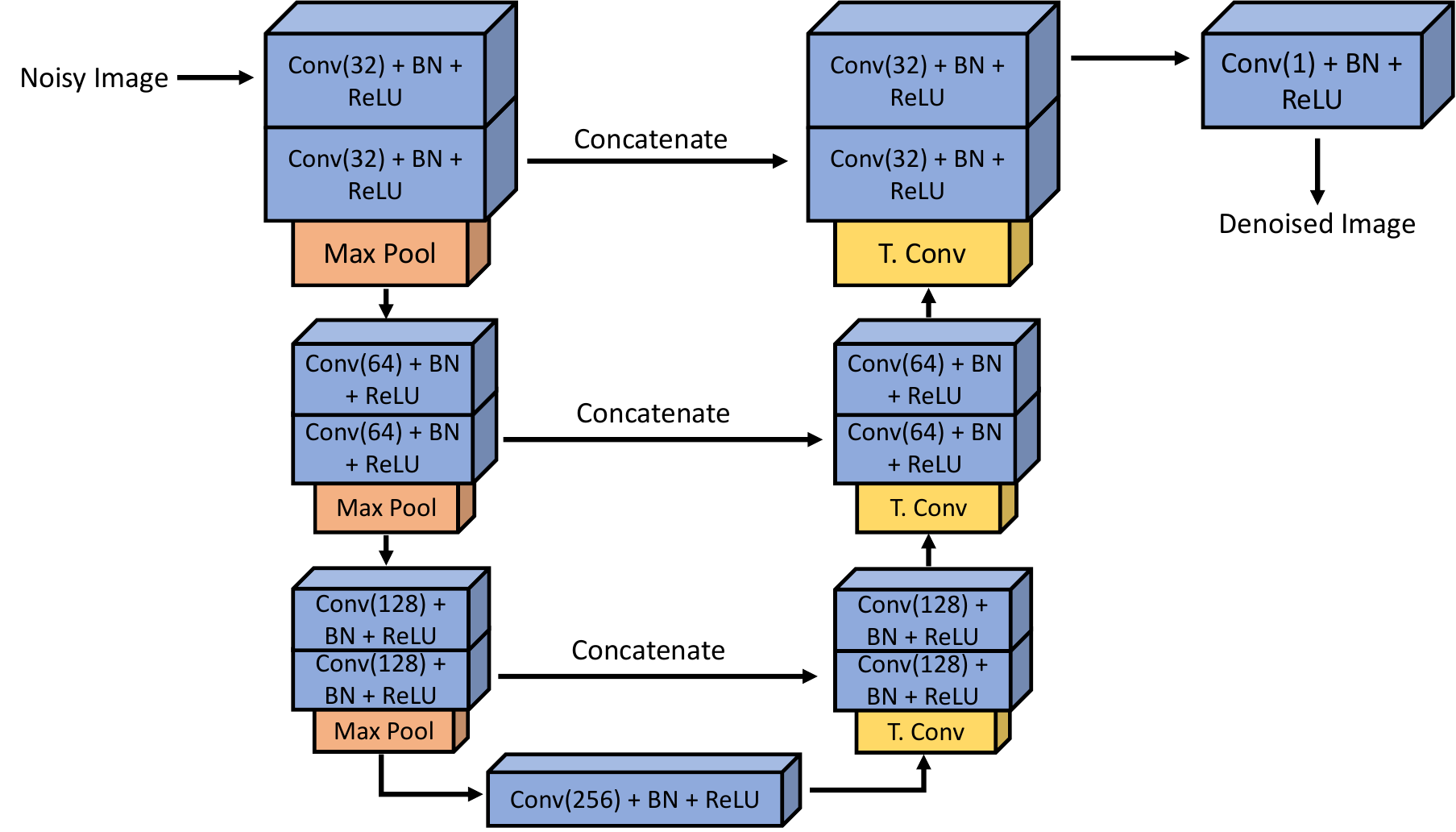}
    \caption{The architecture of the 3D UNet used as a prior agent. The number of output channels for the convolutional layers is shown in parentheses.}\label{fig:unet3d}
\end{figure}
In order to regularize the image, we use a 3D U-Net image denoiser similar to the one used in~\cite{mimo3d} for 3D radar imaging.
This model was chosen because it was shown to be effective in jointly exploiting correlations in cross-range and range dimensions.

The architecture of the 3D U-Net is shown in Figure~\ref{fig:unet3d}.
It consists of three encoding blocks, each with two convolutional layers, followed by three decoding blocks, each with two convolutional layers.
Each convolutional block is followed by a batch normalization (BN) layer and a ReLU activation function and a max pooling to reduce the dimensions of feature tensors by a factor of two, whereas the decoding stages employ a transposed convolution to upsample the feature tensor by a factor of two.
A final 3D convolutional block is used to produce the denoised image.
 
The network was trained to denoise 3D images from ShapeNet~\cite{changShapeNetInformationRich3D2015} with 10\% additive white Gaussian noise. The training set consisted of  800 models rendered on a $256 \times 256 \times 256$ grid, and an additional 200 models were used for validation.
The images were patched into $32 \times 32 \times 32$ patches in order to fit into the GPU memory.
We used the Adam optimizer with an exponentially decaying learning rate and the MSE loss function.
The initial learning rate was set to $10^{-2}$ and decayed by a factor of 0.975 every 300 training batches.
The batch size was set to 24 and the network was trained for 100 epochs.
The network weights were saved when the validation loss was lowest during training.

\begin{algorithm}[th]
    \caption{CLAMP}\label{alg:EM-MACE}\label{alg:M-MACE-ap}\label{alg:clamp}
    \begin{algorithmic}[1]
    \State\textbf{Input}: $y_{\ell}\in \mathbb{C}^n$ for $\ell =1, \ldots, L$
    \State{Initialize: $\mu_{\ell} = A^H y_{\ell}$, $r_{\ell} = \frac{1}{L} \sum_{\ell=1}^{L}|A^H y_{\ell}|^2$, $w_{\ell} = r_{\ell}$, for $\ell = 1, \ldots, L$}
    \While{not converged}
    \For{$\ell = 1,\dots,L$} c
    \State{$\forall j$, $c_{\ell,j} \gets \frac{\sigma_w^2 r_{\ell,j}}{\alpha r_{\ell,j} + \sigma_w^2}$}  \algorithmiccomment{Update $C_{\ell}$}
    \State{$d \gets - \nabla h(\mu_{\ell}; y_{\ell}, r_{\ell} )$} \algorithmiccomment{Update $\mu_{\ell}$}
    \State{$\gamma \gets \argmin_{\gamma} \left\{ h(\mu_{\ell} - \gamma d ; y_{\ell}, r_{\ell} ) \right\}$}
    \State{$\mu_{\ell} \gets \mu_{\ell} + \gamma d$}
    \State{$r_{\ell} \gets \hat{F}_{\ell}\left(w_{\ell}; \mu_{\ell}, c_{\ell} \right)$}  \algorithmiccomment{Update $r_{\ell}$ by~\eqref{eq:surrogate-prox}}
    \EndFor
    \State{$r_{L+1} \gets \textbf{H}(w_{L+1})$} \algorithmiccomment{Apply prior agent}
    \State{$\mathbf{r} \gets \left[ r_1, \ldots, r_{L+3} \right]$} \algorithmiccomment{Do MACE iteration}
    \State$\mathbf{x} \gets 2\mathbf{r} - \mathbf{w}$
    \State$\mathbf{w} \gets \mathbf{w}+ 2\rho \left( \mathbf{G}(\mathbf{x}) - \mathbf{r} \right)$
\EndWhile
    \State\textbf{Output}: $r^* = \overline{\textbf{w}}$
    \end{algorithmic}
\end{algorithm}

\subsection{Summary of CLAMP}\label{sec:clamp-summary}
\noindent 
A summary of CLAMP is given in Algorithm~\ref{alg:clamp}.
CLAMP begins with an initialization of the image 
In steps 4--9, the each forward agent is approximated by the EM surrogate, and the image is updated using the proximal map of the surrogate function to enforce fidelity to the measurements, $y_{\ell}$.
In step 11, the deep prior agent are applied to the image to regularize the image in all three dimensions.
Finally, in steps 12--14, a consensus among the agents is enforced, resulting in an image that is both consistent with the measurements and regularized in all three dimensions.

A key benefit of CLAMP is its computational efficiency compared to a direct application of MACE.
CLAMP achieves this efficiency by using EM surrogates, given in~\eqref{eq:em-surrogate-final}, in replacement of the exact negative log likelihood functions in the forward agents.
However, these surrogate approximations alone are not enough to make the algorithm computationally feasible.
The computational complexity of MACE with EM surrogates is dominated by the computation of the diagonal of the covariance matrix, $\tilde{C}_{\ell}$, in~\eqref{eq:Bi}. 
Solving for this using direct methods has a complexity of $O(n^3)$~\cite{golub}, which is generally infeasible.
Similarly, solving the optimization problem in~\eqref{eq:mu2} requires the same $O(n^3)$ operation to compute $\mu_{\ell}$ exactly.

In contrast, CLAMP uses the diagonal approximation of the covariance matrix, $C_{\ell}$, and an iterative gradient step to compute the mean, $\mu_{\ell}$.
These steps, steps 5--8 in Algorithm~\ref{alg:clamp}, are dominated by the computation of the gradient in~\eqref{eq:h-gradient} for each of the $L$ looks.
Since this gradient can be computed efficiently using the FFT, the total complexity of each iteration of the algorithm is $O(n\log(n))$~\cite{fft}.
This makes CLAMP computationally feasible for large images.
To give insight into the amount of computation required, the experiments in Section~\ref{sec:results} were run on a 3D image of size $128^3$ on a single NVIDIA Quadro RTX 8000 GPU, and each iteration took less than 0.5 seconds.

\section{Majorized MACE Theory}\label{sec:M-MACE}
\noindent A crucial aspect of CLAMP is the use of a surrogate function through the EM algorithm.
In this section, we present a generalized theory of majorization-minimization within the MACE framework with guaranteed convergence to an exact MACE solution as defined in~\eqref{eq:mace}.

Although our theory is limited to the case where all agents are proximal maps, it applies in practice to many widely used denoisers.
Many PnP algorithms (including CLAMP) use denoising agents that are not necessarily proximal maps.
However, typical neural network denoisers are trained to be minimum mean squared error (MMSE) denoisers, which were shown to be proximal maps in~\cite{gribonvalShouldPenalizedLeast2011}.
This equivalence was also demonstrated experimentally in~\cite{xuProvableConvergencePlugandPlay2020}, in which the authors show that the convergence of a PnP algorithm with an exact MMSE denoiser agrees remarkably well with the convergence of the same algorithm with a neural network denoiser.
This highlights the practical relevance of our theory.

In our setting, we modify the operator $F_i$ by approximating its objective function, $f_i$, with a surrogate function, $\Hat{f_i}$, whose proximal map, $\Hat{F_i}$, can be computed more efficiently.
We make a few standard assumptions on the surrogate functions, given in Definition~\ref{def:surrogate}, which are adapted from those made in~\cite{mairalOptimizationFirstOrderSurrogate, Fessler-MM, sun-palomar-mm}.
Commonly used surrogates, such as the Jensen surrogate used in the EM algorithm or quadratic approximations of twice differentiable functions meet these requirements.

\begin{defn}[Surrogate]\label{def:surrogate}
    Let $\Theta \subset \mathbb{R}^n$ be convex, and $f: \mathbb{R}^{n} \to \mathbb{R}$ be a convex function.
    A function $\hat{f}: \mathbb{R}^{n} \to \mathbb{R}$ is a \textbf{surrogate} of $f$ near $\xi \in\Theta$ when the following conditions hold:
    \begin{itemize}
    \item 
    \textbf{Majorization}: we have $\hat{f}(x) \geq f(x)$ for all $x \in \mathbb{R}^n$.
    \item 
    \textbf{Smoothness}: the approximation error $e \triangleq \hat{f}-f$ is differentiable, and its gradient is $L$-Lipschitz continuous. 
    Moreover, we have that $e(\xi)=0$ and $\nabla e(\xi) = 0$,
    \item \textbf{Strong convexity}: $\hat{f}$ is $p$-strongly convex with $p \geq L$.
\end{itemize}
When the point $\xi$ is relevant, we write the surrogate as $\hat{f}(x; \xi)$, and we denote the set of such surrogates as $\mathcal{S}_{L,p}(f,\xi)$.
\end{defn}
We do not require the objective function, or its surrogate, to be differentiable, nor do we require the surrogate function to be continuous as a function of $\xi$.
These are common assumptions in the literature, but are not necessary for our theory.

Majorization-minimization schemes work by alternately minimizing the surrogate function and then updating the surrogate itself, which gives updates of the form 
\begin{equation}
    r^{(k+1)} \gets \underset{r}{\operatorname{argmin}} \left\{ \Hat{f}\left(r; r^{(k)} \right) \right\}.
\end{equation}
The conditions in Definition~\ref{def:surrogate} ensure that the original objective is decreasing, $f\left(r^{(k+1)}\right) \leq f\left(r^{(k)}\right)$, and that the iterates $r^{(k)}$ converge to a minimizer of $f$.

Algorithm~\ref{alg:M-MACE} applies these principles within the MACE framework. 
Assuming we have $N$ agents, on the $k$-th iteration, we compute a surrogate $\Hat{f_i}$ of the objective function $f_i$ at $r_i^{(k)}$ for $i=1,\dots,N$. 
We define the concatenated operators
\begin{equation}\label{eq:stacked-majorized-prox}
\Hat{\mathbf{F}}^{(k)}\left(\mathbf{w}; \mathbf{r}^{(k)}\right) = \left[ \Hat{F}_1^{(k)}(w_1; r_1^{(k)}),  \dots,  \Hat{F}_N^{(k)}(w_N; r_N^{(k)}) \right]
\end{equation}
and
\begin{equation}\label{eq:That}
    \Hat{\mathbf{T}}^{(k)} = \left( 2\mathbf{G}-I \right) \left( 2\Hat{\mathbf{F}}^{(k)}\left(\cdot\, ; \mathbf{r}^{(k)}\right ) -I \right),
\end{equation}
where $\hat{F}_i^{(k)}$ is the proximal map of $\hat{f}_i^{(k)}$.
Akin to the fixed point iterations in~\eqref{eq:mann}, we analyze the fixed point of the system 

\begin{align}\label{eq:aug-mann}
    \mathbf{w}^{(k+1)} &= \rho \Hat{\mathbf{T}}^{(k)} \mathbf{w}^{(k)} + (1-\rho) \mathbf{w}^{(k)}, \\
    \mathbf{r}^{(k+1)} &= \Hat{\mathbf{F}}^{(k)}(\mathbf{w}^{(k)}; \mathbf{r}^{(k)}). \nonumber
\end{align}

Our main result is Theorem~\ref{THM:MAJORIZATION}, which shows that Algorithm~\ref{alg:M-MACE} converges to an exact MACE solution.
Part~\ref{THM:PART1}, states that any fixed point $(\mathbf{w}^*, \mathbf{r}^*)$ of~\eqref{eq:aug-mann} is an exact MACE solution as in~\eqref{eq:mace}.
Part~\ref{THM:PART2} states that convergence to a fixed point is guaranteed if the surrogate is strongly convex.  Note that if $\mathbf{r}^{(k+1)}= \mathbf{r}^{(k)}$, then $\Hat{\mathbf{F}}^{(k+1)} = \Hat{\mathbf{F}}^{(k)}$ and $\Hat{\mathbf{T}}^{(k+1)} = \Hat{\mathbf{T}}^{(k)}$, so the functions are also fixed at such a fixed point.

\begin{theorem}[Majorized-MACE Solution]\label{THM:MAJORIZATION} 
    Let $f_i: \mathbb{R}^{n} \to \mathbb{R}, i = 1, \dots, N$, be convex, and $\Hat{f_i}^{(k)}\in \mathcal{S}_{L_i, p_i} \left( f_i, r_i^{(k)}  \right)$. Let $F_i$ and $\Hat{F_i}^{(k)}$ be the proximal maps of $f_i$ and $\Hat{f_i}^{(k)}$, respectively, and $\Hat{\mathbf{T}}^{(k)}$ be the operator defined in~\eqref{eq:That}. Then:
    \begin{enumerate}[label=(\roman*)]
        \item~\label{THM:PART1} Any fixed point of the augmented system~\eqref{eq:aug-mann} is a solution to the original MACE equation given in~\eqref{eq:mace}.
        \item~\label{THM:PART2} The iterates defined by~\eqref{eq:aug-mann} converge to a fixed point, and hence a MACE solution, if one exists.
    \end{enumerate}
\end{theorem}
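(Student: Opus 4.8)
The plan is to prove the two parts separately, with Part~\ref{THM:PART1} being a direct algebraic consequence of the surrogate conditions and Part~\ref{THM:PART2} requiring a firmly-nonexpansive (averaged operator) argument. For Part~\ref{THM:PART1}, suppose $(\mathbf{w}^*, \mathbf{r}^*)$ is a fixed point of~\eqref{eq:aug-mann}. From the second line, $\mathbf{r}^* = \Hat{\mathbf{F}}^{(*)}(\mathbf{w}^*; \mathbf{r}^*)$, i.e. each component satisfies $r_i^* = \Hat{F}_i(w_i^*; r_i^*)$, so $r_i^*$ is the minimizer of $\Hat{f}_i(\cdot; r_i^*) + \frac{1}{2\sigma^2}\|\cdot - w_i^*\|^2$. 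The first-order optimality condition gives $0 \in \partial \Hat{f}_i(r_i^*; r_i^*) + \frac{1}{\sigma^2}(r_i^* - w_i^*)$. Now I would invoke the \textbf{smoothness} condition of Definition~\ref{def:surrogate}: since $e = \Hat{f}_i - f_i$ satisfies $e(r_i^*) = 0$ and $\nabla e(r_i^*) = 0$, the subdifferentials coincide at the base point, $\partial \Hat{f}_i(r_i^*; r_i^*) = \partial f_i(r_i^*)$. Hence $0 \in \partial f_i(r_i^*) + \frac{1}{\sigma^2}(r_i^* - w_i^*)$, which is exactly the optimality condition for $r_i^* = F_i(w_i^*)$. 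Thus $\mathbf{r}^* = \mathbf{F}(\mathbf{w}^*)$. Then the first line of~\eqref{eq:aug-mann} forces $\Hat{\mathbf{T}}^{(*)}\mathbf{w}^* = \mathbf{w}^*$, i.e. $(2\mathbf{G}-\mathbf{I})(2\Hat{\mathbf{F}}^{(*)} - \mathbf{I})\mathbf{w}^* = \mathbf{w}^*$; substituting $\Hat{\mathbf{F}}^{(*)}\mathbf{w}^* = \mathbf{r}^* = \mathbf{F}(\mathbf{w}^*)$ gives $(2\mathbf{G}-\mathbf{I})(2\mathbf{F}(\mathbf{w}^*) - \mathbf{w}^*) = \mathbf{w}^*$, which is the standard fixed-point form of the MACE equation~\eqref{eq:mace}, so $\mathbf{F}(\mathbf{w}^*) = \mathbf{G}(\mathbf{w}^*)$.

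For Part~\ref{THM:PART2}, the key structural fact is that the proximal map $\Hat{F}_i^{(k)}$ of a $p_i$-strongly convex function with $p_i \geq L_i$ is not merely nonexpansive but a \emph{contraction} on the relevant factor — or at least that the map $2\Hat{F}_i^{(k)} - \mathbf{I}$ is a strict contraction with modulus $< 1$. Concretely, the proximal map of a $p$-strongly convex function with parameter $\sigma^2$ is Lipschitz with constant $1/(1 + p\sigma^2) < 1$, so $2\Hat{F}_i - \mathbf{I}$ is Lipschitz with constant $\big|\frac{2}{1+p_i\sigma^2} - 1\big| = \frac{|1 - p_i\sigma^2|}{1+p_i\sigma^2} < 1$. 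Since $2\mathbf{G} - \mathbf{I}$ is an isometry (reflection through a subspace, as $\mathbf{G}$ is orthogonal projection onto the consensus subspace in the $\upsilon$-weighted inner product), $\Hat{\mathbf{T}}^{(k)}$ is a contraction with a modulus $c < 1$ that is \emph{uniform in $k$} provided the $p_i$ and $L_i$ are uniformly bounded away from the degenerate regime. The subtlety is that $\Hat{\mathbf{T}}^{(k)}$ changes with $k$ because the surrogate is re-centered at $\mathbf{r}^{(k)}$. I would handle this by treating~\eqref{eq:aug-mann} as a quasi-nonexpansive/contractive iteration with a varying operator and show: (a) the sequence $\{\mathbf{w}^{(k)}\}$ is bounded — using that $\Hat{\mathbf{T}}^{(k)}$ has a fixed point $\mathbf{w}^{(k)}_*$ with uniformly bounded displacement relative to a candidate MACE solution, or by a Fejér-monotonicity-type estimate; (b) the differences $\|\mathbf{w}^{(k+1)} - \mathbf{w}^{(k)}\| \to 0$ and, via the $\mathbf{r}$-update being Lipschitz in $\mathbf{w}$, that $\|\mathbf{r}^{(k+1)} - \mathbf{r}^{(k)}\| \to 0$; (c) any cluster point of $(\mathbf{w}^{(k)}, \mathbf{r}^{(k)})$ is a fixed point of~\eqref{eq:aug-mann} — here the remark in the text that $\mathbf{r}^{(k+1)} = \mathbf{r}^{(k)}$ implies $\Hat{\mathbf{F}}^{(k+1)} = \Hat{\mathbf{F}}^{(k)}$ is what lets a limit point be genuinely stationary; (d) combine the uniform contraction modulus with the vanishing increments to upgrade subsequential convergence to full convergence (a standard argument: if $\|\mathbf{w}^{(k+1)} - \mathbf{w}^{(k)}\| \to 0$ and along a subsequence $\mathbf{w}^{(k_j)} \to \mathbf{w}^*$ with $\mathbf{w}^*$ a fixed point of the limiting operator, then $\|\mathbf{w}^{(k)} - \mathbf{w}^*\|$ is eventually contracting and hence converges to $0$).

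The main obstacle I anticipate is step (c)/(d) of Part~\ref{THM:PART2}: controlling the \emph{drift of the operator} $\Hat{\mathbf{T}}^{(k)}$ as its center $\mathbf{r}^{(k)}$ moves. One cannot simply invoke the Krasnosel'skii–Mann theorem for a single averaged operator, because each iterate uses a different $\Hat{\mathbf{T}}^{(k)}$. The cleanest route is probably to show that the map $\mathbf{r} \mapsto \Hat{\mathbf{T}}^{(\cdot)}_{\mathbf{r}}\mathbf{w}$ is itself Lipschitz in $\mathbf{r}$ (uniformly in $\mathbf{w}$ on bounded sets) — which would follow from a joint continuity/Lipschitz assumption on the surrogate's dependence on its base point — or, failing that, to use the coupled contraction structure of the pair $(\mathbf{w}, \mathbf{r})$: define a joint norm on $(\mathbf{w},\mathbf{r})$-space and show the full update map in~\eqref{eq:aug-mann} is a contraction in that norm, so Banach's fixed point theorem applies directly and yields both existence and convergence. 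A delicate point the text flags is that the surrogate need not be continuous in $\xi$; if the proof genuinely needs such continuity, I would need to either add it as a hypothesis or argue that for the EM/Jensen surrogates actually used it holds automatically. I would resolve the main obstacle by working in the joint $(\mathbf{w}, \mathbf{r})$ space with a carefully weighted product norm, reducing Part~\ref{THM:PART2} to a single contraction-mapping statement whose hypotheses are exactly the uniform strong convexity $p_i \geq L_i$ from Definition~\ref{def:surrogate}.
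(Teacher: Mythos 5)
Your Part~\ref{THM:PART1} is essentially the paper's argument: from $r_i^* = \hat{F}_i(w_i^*; r_i^*)$ you read off $-\frac{1}{\sigma^2}(r_i^*-w_i^*) \in \partial \hat{f}_i(r_i^*; r_i^*)$, use the coincidence of subdifferentials at the base point (which the paper proves as Lemma~\ref{cor:subgrad} via directional derivatives, since $\hat{f}_i$ need not be differentiable one cannot just equate gradients), and conclude $\mathbf{r}^* = \mathbf{F}(\mathbf{w}^*)$ and $\mathbf{F}(\mathbf{w}^*) = \mathbf{G}(\mathbf{w}^*)$. That part is correct.

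Part~\ref{THM:PART2} has a genuine gap. Your key structural claim --- that $2\hat{F}_i^{(k)} - \mathbf{I}$ is a strict contraction with modulus $|1-p_i\sigma^2|/(1+p_i\sigma^2) < 1$ --- is false under the stated hypotheses. Definition~\ref{def:surrogate} only requires the \emph{error} $e = \hat{f}-f$ to be differentiable with Lipschitz gradient; $f$, and hence $\hat{f}$, may be nonsmooth (the paper says explicitly that neither is required to be differentiable). The proximal map of a $p$-strongly convex $\hat{f}$ is indeed $\frac{1}{1+p\sigma^2}$-Lipschitz, but its reflection need not contract: for $\hat{f}(x) = \frac{p}{2}x^2 + M|x|$ the prox is identically $0$ on $[-\sigma^2 M, \sigma^2 M]$, so $2\,\mathrm{prox} - \mathbf{I} = -\mathbf{I}$ there and has Lipschitz constant exactly $1$. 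Your modulus formula needs an upper smoothness bound on $\hat{f}$ that is not assumed. Moreover, your two proposed repairs for the drift of $\hat{\mathbf{T}}^{(k)}$ both conflict with the setting: Lipschitz or continuous dependence of the surrogate on its base point $\xi$ is precisely what the paper declines to assume, and a joint Banach-contraction argument in $(\mathbf{w},\mathbf{r})$ would also \emph{prove existence} of a fixed point, whereas the theorem only asserts convergence when a solution exists --- a sign that no uniform contraction is available here.

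The paper's route avoids operator contraction altogether. Taking $\rho = 1/2$ and changing variables ($u_i^{(k)} = \overline{\mathbf{r}}^{(k)} - w_i^{(k)}$, $z^{(k+1)} = \overline{\mathbf{r}}^{(k+1)} + \overline{\mathbf{u}}^{(k)}$), the iteration~\eqref{eq:aug-mann} becomes consensus ADMM, \eqref{eq:admm1}--\eqref{eq:admm2}, with surrogate proximal steps. The decisive tool is the surrogate inequality of Lemma~\ref{lemma:surrogate}, $f(x) - f(y) - \langle u, x-y\rangle \leq \frac{1}{2}\left( L\lVert y-\xi\rVert^2 - p\lVert x-y\rVert^2 \right)$ for $u \in \partial\hat{f}(x)$, which is where $p_i \geq L_i$ enters; it yields Fej\'{e}r monotonicity of the Lyapunov function $E^{(k)}$ of Lemma~\ref{lem:fejer} (note the $L_i$-weighting of the $\mathbf{r}$-block). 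Telescoping gives vanishing increments in $z$ and $\mathbf{u}$, a subsequential limit is shown to be a KKT point of~\eqref{eq:consensus-problem} (Theorem~\ref{thm:admm-converge}), and Fej\'{e}r monotonicity with respect to that very limit point upgrades subsequential to full convergence --- with no continuity in $\xi$ and no contraction modulus anywhere. If you want to salvage your operator-theoretic plan, you would have to add hypotheses (smoothness of $\hat{f}_i$, continuity of the surrogate in its base point) that the theorem as stated does not grant.
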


\begin{proof}
    Proof is in Appendix~\ref{ap:majorization}.
\end{proof}

\begin{algorithm}[t]
    \caption{Majorized-MACE}\label{alg:M-MACE}
    \begin{algorithmic}[1]
        \State\textbf{Input}: Initialize 
        $\mathbf{w}^{(0)}$, $\mathbf{r}^{(0)} \in \mathbb{R}^{n(L+3)}, \rho \in (0,1)$
        \State{$k=0$}
        \While{not converged}
            \For{$i=1,\dots,N$}
            \State{Compute surrogate $\Hat{f_i}^{(k)} \in \mathcal{S}_{L_i,p_i} \left( f_i, r_i^{(k)}  \right)$}
            \EndFor
            \State$\mathbf{r}^{(k+1)} = \Hat{\mathbf{F}}^{(k)}(\mathbf{w}^{(k)}; \mathbf{r}^{(k)})$
            \State$\mathbf{x} = 2\mathbf{r}^{(k+1)} - \mathbf{w}^{(k)}$
            \State$\mathbf{w}^{(k+1)} = \mathbf{w}^{(k)} + 2\rho \left( \mathbf{G}(\mathbf{x}) - \mathbf{r}^{(k+1)} \right)$
        \State{$k = k+1$}
        \EndWhile
    \end{algorithmic}
\end{algorithm}

\section{Results}\label{sec:results}
\noindent We now demonstrate the effectiveness of CLAMP at reconstructing high-resolution images from multi-look coherent lidar data.
We perform a series of experiments on both synthetic and experimental data to compare the performance of CLAMP with the standard speckle average method, as well as using sparsity-based regularization in the form of $\ell_{2,1}$-regularization ($\ell_2$-regularization in cross-range and $\ell_1$-regularization in range), and isotropic total variation.
These methods are implemented as described in Algorithm~\ref{alg:clamp} with their corresponding regularizing agents in place of the proposed deep prior agent.

Further, we perform multiple experiments with different zero-padding factors, $q$, to investigate the effectiveness of CLAMP at different sampling rates. Specifically, we use $q=1, 1.5,$ and $2$. This results in images of various sizes, with the number of voxels in the image being $q^3$ times the number of measurements. 

We also perform an ablation study of the aperture model in CLAMP to demonstrate its importance in achieving high-resolution reconstructions.
In this case, we use the algorithm as described in Algorithm~\ref{alg:clamp}, but with $\mathbf{a} = \mathbf{1}$, effectively removing the aperture model from the reconstruction process.
The resulting algorithm differs only in the $\mu$-update (steps 6-8 of Algorithm~\ref{alg:clamp}), which can be simplified to be a closed-form solution of~\eqref{eq:mu2}.

\subsection{Methods}\label{sec:Methods}
For each reconstruction presented in this section, we ran the corresponding algorithm for 250 iterations with $\rho = 0.5$.
Each algorithm was initialized with $r = \frac{1}{L}\sum_{\ell=1}^L \lvert A^H y_{\ell}\rvert^2$ and setting $\mathbf{w}$, $\mathbf{r}$, to be stacked copies of $r$.
Each $\mu_{\ell}$ was initialized as $\frac{1}{\alpha}A^Hy_i$.
The forward model proximal parameter, $\sigma^2$, was empirically chosen in a range of $[0.001, 1.0]$ to produce the best image quality.

To measure convergence of CLAMP, we define the convergence error based on the MACE equation~\eqref{eq:mace} as
\begin{equation}\label{eq:mace-convergence-error}
  \text{Convergence Error} = \frac{\left\lVert \Hat{\mathbf{F}}(\mathbf{w}) - \mathbf{G}(\mathbf{w}) \right\rVert}{\left\lVert \mathbf{G}(\mathbf{w}) \right\rVert}.
\end{equation}
Additionally, to show convergence of the $\mu_\ell$ updates, we define the relative residual averaged across looks as
\begin{equation}\label{eq:residual}
  \mu\text{-Residual} = \frac{1}{L} \sum_{\ell=1}^L \frac{ \left\lVert \tilde{C}_\ell^{-1} \mu_\ell - \frac{1}{\sigma_w^2}A^Hy_\ell \right\rVert}{\left\lVert \frac{1}{\sigma_w^2} A^Hy_\ell \right\rVert},
\end{equation}
where the numerator comes from the exact solution of~\eqref{eq:mu2}. 
Plots of CLAMP's convergence behavior are shown in Section~\ref{sec:experimental}.

In the synthetic data experiments, we compare the reconstructed images to a ground truth image.
We use peak-signal-to-noise ratio (PSNR), computed as
\begin{equation}\label{eq:psnr}
  \text{PSNR}(\beta^* \hat{r}, r) = 10\log_{10}\left(\frac{n}{\lVert \beta^* \hat{r} - r \rVert^2}\right),
\end{equation}
to compare the reconstructions, $\hat{r}$, at $q=2$ to the ground truth image, $r$.
This was only computed for the $q=2$ reconstructions, for that is when $r$ and $\hat{r}$ are the same size ($128 \times 128 \times 128$).
The factor $\beta^*$ is a multiplicative factor that accounts for any scaling differences between the reconstructed images and the ground truth image, and is computed as
$\beta^* = {\operatorname{argmax}}_{\beta}\left\{\text{PSNR}(\beta\hat{r}, r) \right\} = \hat{r}^T r / ||\hat{r}||^2$.
This will allow for a fair comparison of the reconstructions --- any differences in PSNR will be due to the reconstruction method and not to a scaling factor.

In sparse 3D imaging, metrics on full 3D images can be misleading if they are dominated by small volumetric differences that otherwise have little impact on the image quality.
To address this, we employ two point-cloud metrics to quantify the accuracy of the reconstruction of the surface reflectivity.
We convert a 3D image, $r$, to a point cloud, $P\subseteq \mathbb{R}^{4}$, such that each point $p = (\textbf{s}_p, r_p) \in P$ corresponds to a voxel in $r$ with a value greater than a given threshold. In this work, we use the natural threshold, the noise floor, $\sigma_w^2/\alpha$.
The first component, $\textbf{s}_p\in\mathbb{R}^3$, of $p$ represents the spatial coordinates of the point, and $r_p$ represents the reflectivity of that point. 

We quantify the spatial accuracy of the reconstructions by computing the Euclidean distance between points in the reconstruction and points in the ground truth point cloud.
For each reconstruction, $\hat{r}$, we transform the image into a point cloud $P = \left\{(\textbf{s}_p, \hat{r}_p)\right\}$. 
For each point in $P$, we compute the Euclidean distance to its closest point in the ground truth point cloud, $Q = \left\{(\textbf{t}_q, r_q)\right\}$. Let $\tilde{n}(p) = \argmin_{q\in Q} ||\textbf{s}_p - \textbf{t}_q||$ be the point in $Q$ nearest to $p$, which, in practice, we determine by a $k$-d tree search~\cite{friedman-kdtree}.
The average Euclidean distance is then given by
\begin{equation}\label{eq:pc-distance}
\text{Euclidean Distance} = \frac{1}{|P|} \sum_{p \in P} ||\textbf{s}_p - \textbf{t}_{\tilde{n}(p)}||.
\end{equation}
To avoid corrupting this metric with outliers (points in the reconstruction with no nearby neighbor in the ground truth), we remove any point with a distance to the ground truth more then 3 times the Rayleigh criterion for resolution (approximately $\qty{1.5}{} \si{cm}$). Additionally, we report the false positive rate as the proportion of removed points relative to the total number of reconstructed points.

Using this method, we can also compute the normalized root mean squared error (NRMSE) of a reconstruction as
\begin{equation}\label{eq:nrmse}
  \text{NRMSE}(\beta^*\hat{r}, r) = \sqrt{\frac{\sum_{p\in P} (\beta^*\hat{r}_p - r_{\tilde{n}(p)})^2}{\sum_{p \in P} |r_{\tilde{n}(p)}|^2}},
\end{equation}
where $\beta^*$ is computed as $
  \beta^* = {\operatorname{argmin}}_{\beta}\left\{\text{NRMSE}(\beta\hat{r}, r) \right\}.$

\subsection{Synthetic Data Generation}
In the synthetic data experiment, we generated data by simulating the multi-look coherent lidar imaging process described in Section~\ref{sec:forward}.

\begin{figure*}[ht]
  \centering
  \begin{subfigure}{\textwidth}
      \centering
      \includegraphics[width=0.5\textwidth]{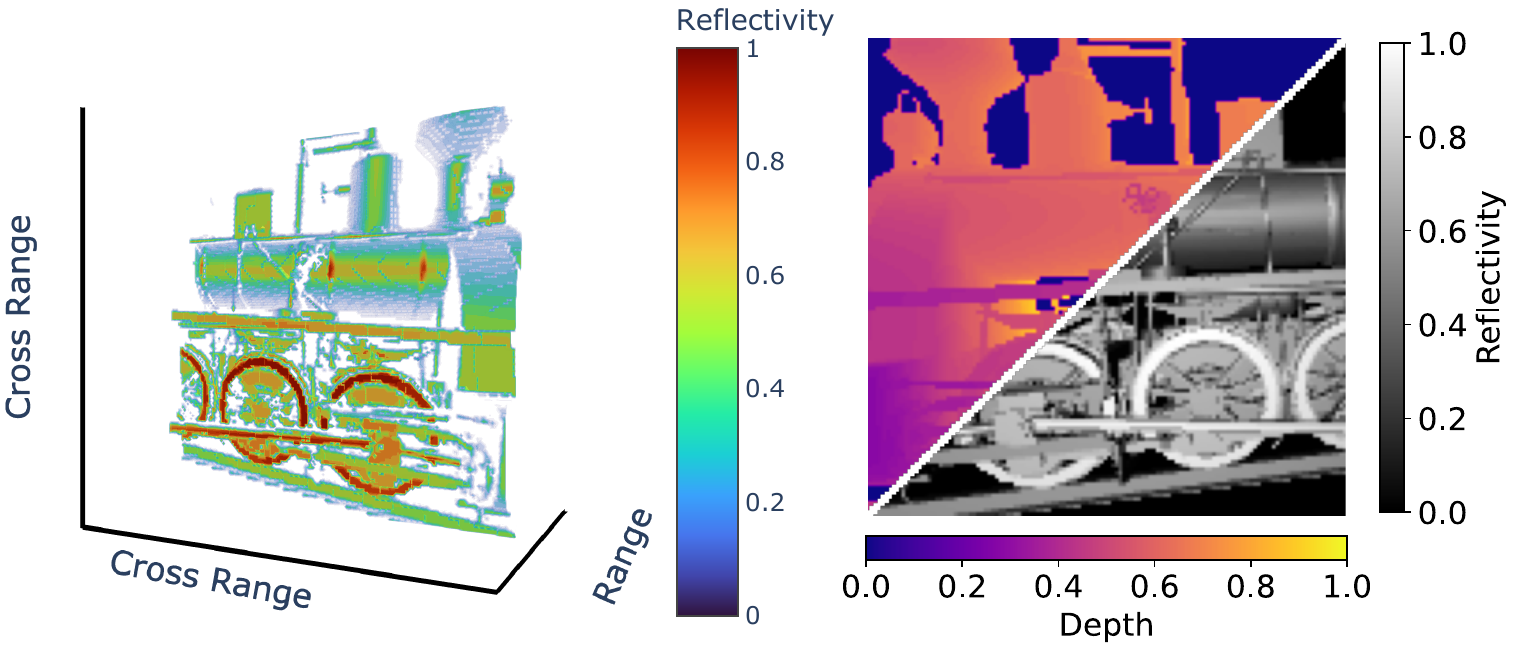} 
      \caption{Ground Truth}\label{fig:loco-truth}
  \end{subfigure}

  \begin{minipage}{0.05\textwidth}
    \vspace{-0.4cm}
    \hspace{0.1cm}
    \rotatebox{90}{\large{$q=2$} \hspace{0.7cm} \large{$q=1.5$} \hspace{0.7cm} \large{$q=1$}}
  \end{minipage}%
  \begin{minipage}{0.9\textwidth}
    \begin{subfigure}{0.22\textwidth}
      \centering
      \begin{minipage}{0.48\textwidth}
          \includegraphics[width=\textwidth]{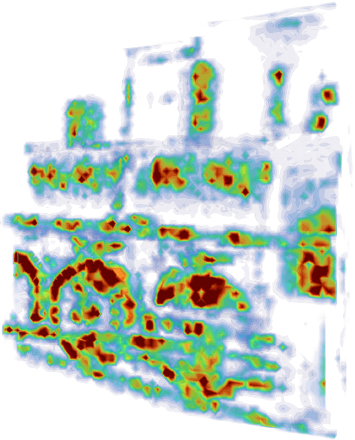}
      \end{minipage}
      \begin{minipage}{0.48\textwidth}
          \includegraphics[width=\textwidth]{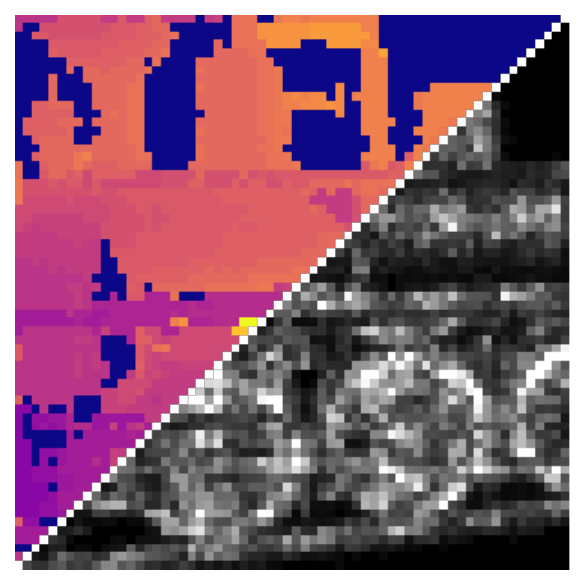}
      \end{minipage}
    \end{subfigure}
    \begin{subfigure}{0.22\textwidth}
      \centering
      \begin{minipage}{0.48\textwidth}
          \includegraphics[width=\textwidth]{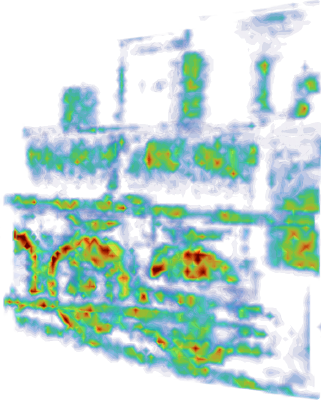}
      \end{minipage}
      \begin{minipage}{0.48\textwidth}
          \includegraphics[width=\textwidth]{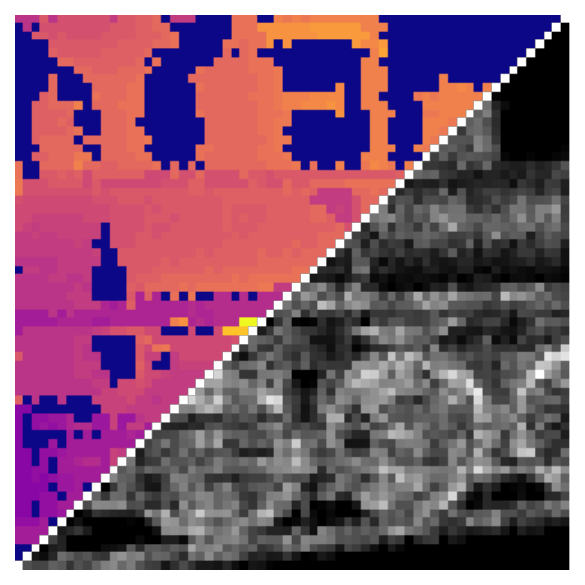}
      \end{minipage}
    \end{subfigure}
    \begin{subfigure}{0.22\textwidth}
      \centering
      \begin{minipage}{0.48\textwidth}
          \includegraphics[width=\textwidth]{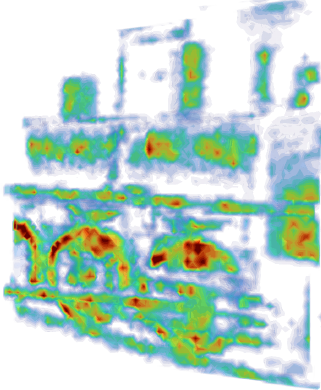}
      \end{minipage}
      \begin{minipage}{0.48\textwidth}
          \includegraphics[width=\textwidth]{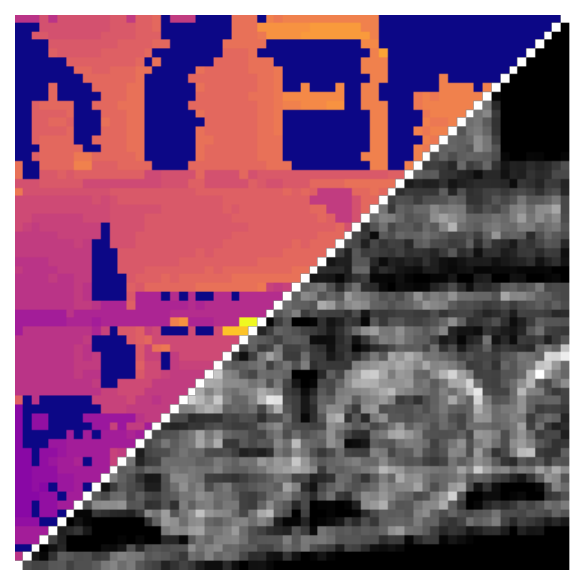}
      \end{minipage}
    \end{subfigure}
    \begin{subfigure}{0.22\textwidth}
      \centering
      \begin{minipage}{0.48\textwidth}
          \includegraphics[width=\textwidth]{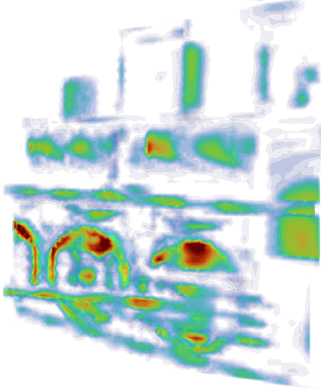}
      \end{minipage}
      \begin{minipage}{0.48\textwidth}
          \includegraphics[width=\textwidth]{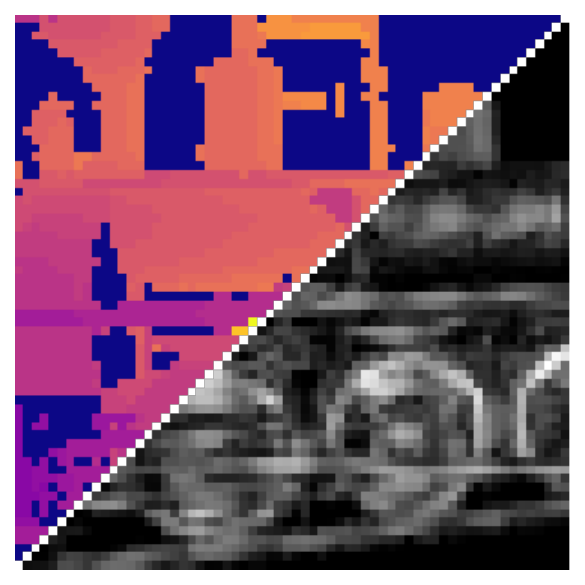}
      \end{minipage}
    \end{subfigure}
    \\
    \begin{subfigure}{0.22\textwidth}
      \centering
      \begin{minipage}{0.48\textwidth}
          \includegraphics[width=\textwidth]{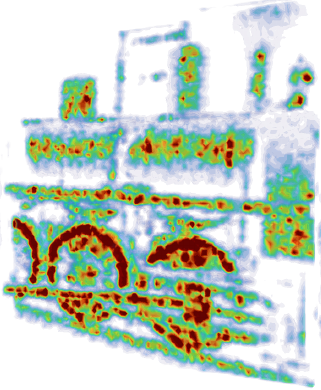}
      \end{minipage}
      \begin{minipage}{0.48\textwidth}
          \includegraphics[width=\textwidth]{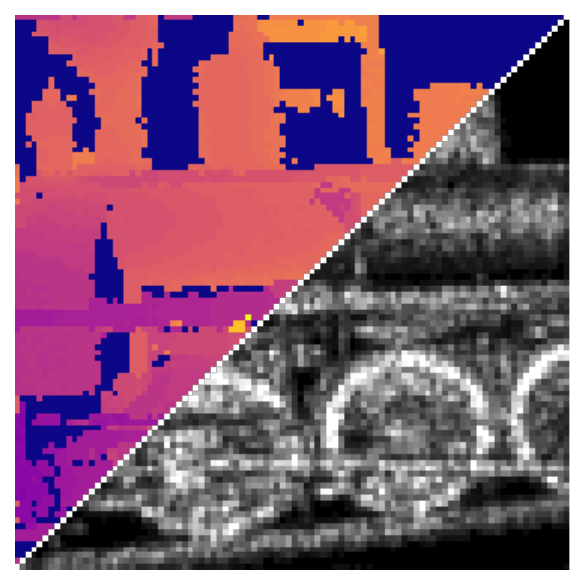}
      \end{minipage}
    \end{subfigure}
    \begin{subfigure}{0.22\textwidth}
      \centering
      \begin{minipage}{0.48\textwidth}
          \includegraphics[width=\textwidth]{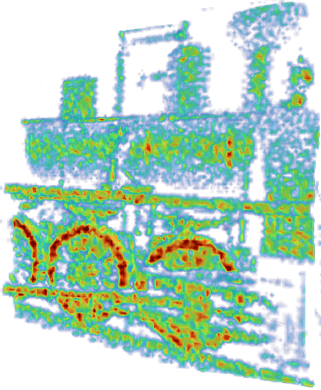}
      \end{minipage}
      \begin{minipage}{0.48\textwidth}
          \includegraphics[width=\textwidth]{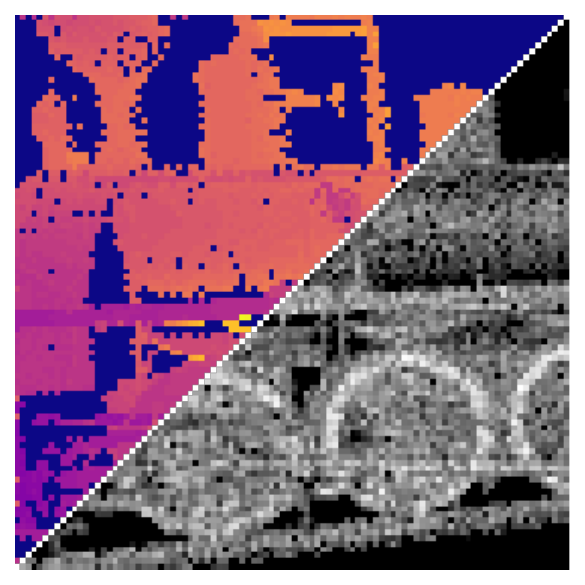}
      \end{minipage}
    \end{subfigure}
    \begin{subfigure}{0.22\textwidth}
      \centering
      \begin{minipage}{0.48\textwidth}
          \includegraphics[width=\textwidth]{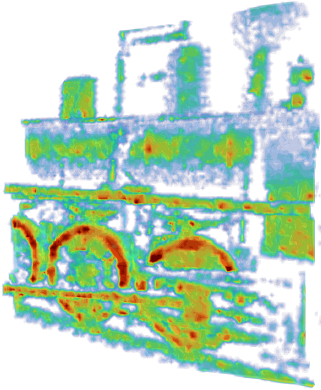}
      \end{minipage}
      \begin{minipage}{0.48\textwidth}
          \includegraphics[width=\textwidth]{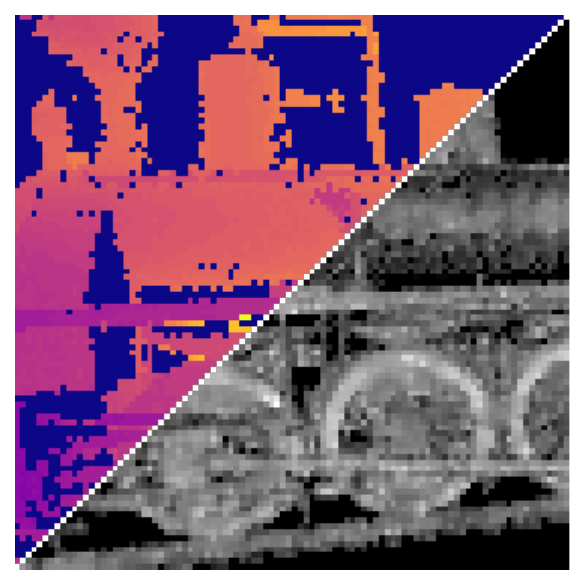}
      \end{minipage}
    \end{subfigure}
    \begin{subfigure}{0.22\textwidth}
      \centering
      \begin{minipage}{0.48\textwidth}
          \includegraphics[width=\textwidth]{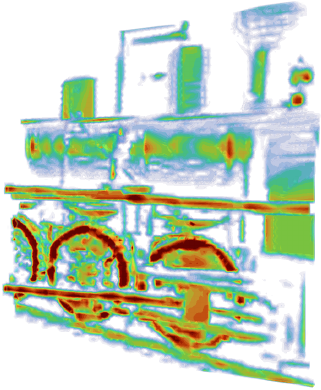}
      \end{minipage}
      \begin{minipage}{0.48\textwidth}
          \includegraphics[width=\textwidth]{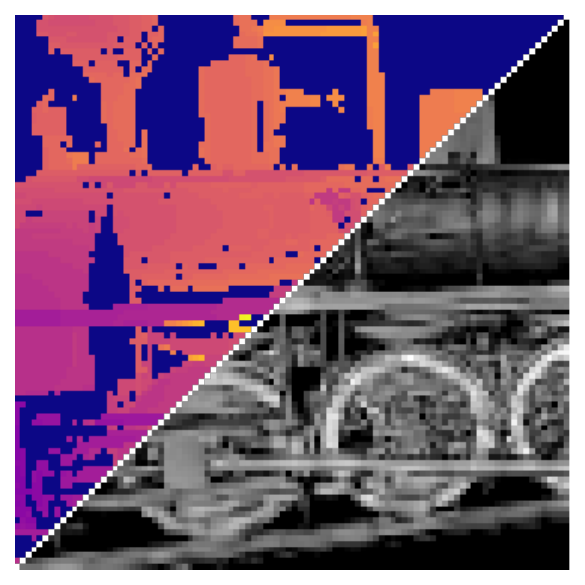}
      \end{minipage}
    \end{subfigure}
    \\
    \begin{subfigure}{0.22\textwidth}
      \centering
      \begin{minipage}{0.48\textwidth}
          \includegraphics[width=\textwidth]{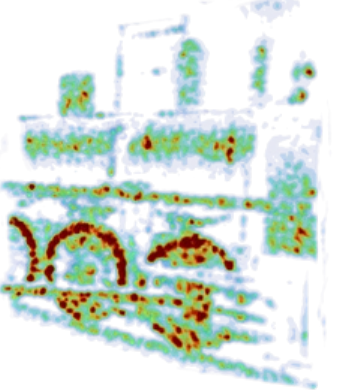}
      \end{minipage}
      \begin{minipage}{0.48\textwidth}
          \includegraphics[width=\textwidth]{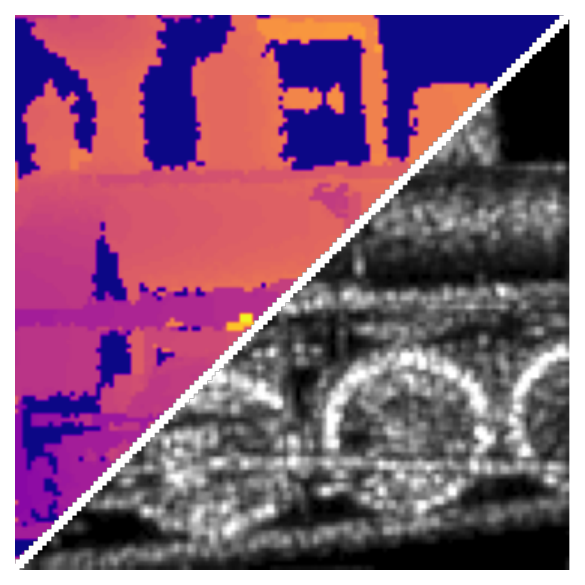}
      \end{minipage}
      \caption{Speckle Average}\label{fig:train-sa}
    \end{subfigure}
    \begin{subfigure}{0.22\textwidth}
      \centering
      \begin{minipage}{0.48\textwidth}
          \includegraphics[width=\textwidth]{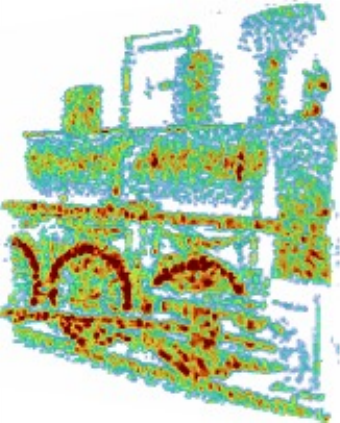}
      \end{minipage}
      \begin{minipage}{0.48\textwidth}
          \includegraphics[width=\textwidth]{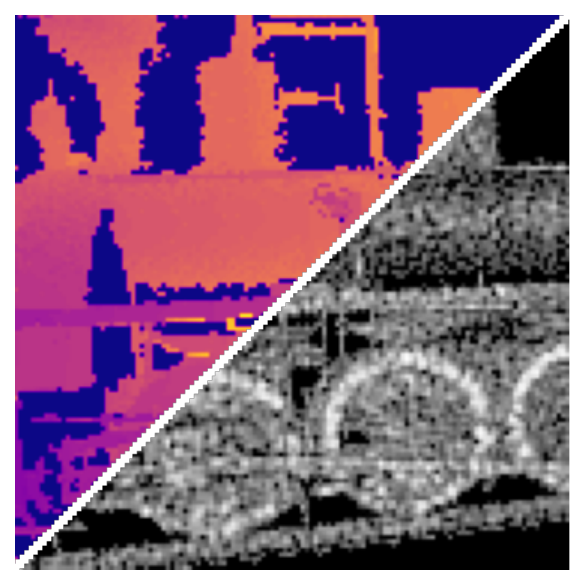}
      \end{minipage}
      \caption{$\ell_{2,1}$-regularization}\label{fig:train-l21}
    \end{subfigure}
    \begin{subfigure}{0.22\textwidth}
      \centering
      \begin{minipage}{0.48\textwidth}
          \includegraphics[width=\textwidth]{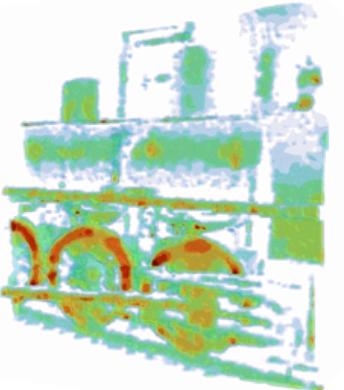}
      \end{minipage}
      \begin{minipage}{0.48\textwidth}
          \includegraphics[width=\textwidth]{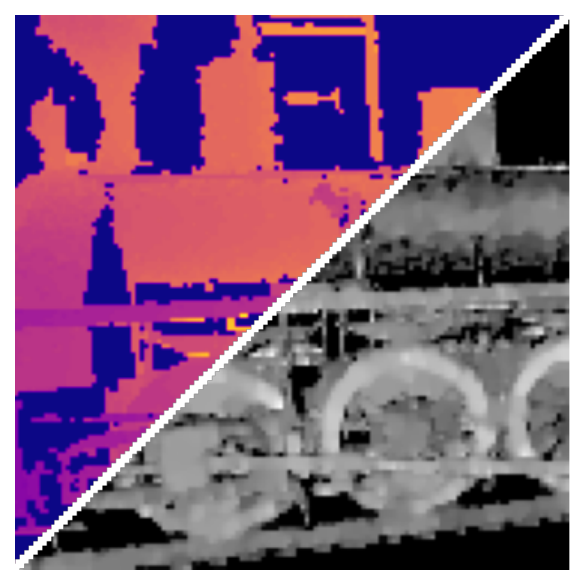}
      \end{minipage}
      \caption{TV-regularization}\label{fig:train-tv}
    \end{subfigure}
    \begin{subfigure}{0.22\textwidth}
      \centering
      \begin{minipage}{0.48\textwidth}
          \includegraphics[width=\textwidth]{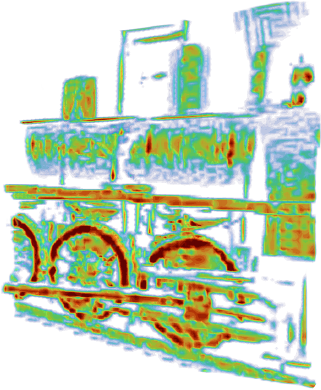}
      \end{minipage}
      \begin{minipage}{0.48\textwidth}
          \includegraphics[width=\textwidth]{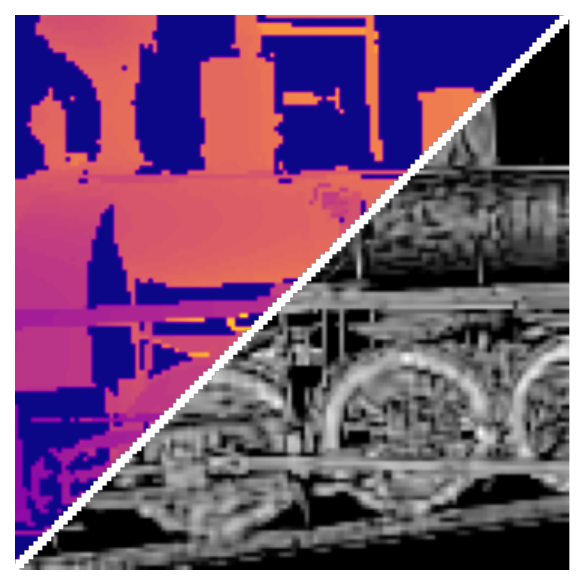}
      \end{minipage}
      \caption{\textbf{CLAMP}}\label{fig:train-clamp}
    \end{subfigure}
  \end{minipage}
  \caption{Sample reconstructions obtained from synthetic data using different zero-padding factors, $q=1, 1.5$ and 2; (a) the ground truth image of the train, (b)-(e) the reconstructions using the Speckle Average, $\ell_{2,1}$-regularization, TV-regularization, and CLAMP, respectively. Each 3D image is shown with its 2D depth and reflectivity image. The CLAMP reconstructions show the highest resolution and best agreement with the ground truth.}\label{fig:train-grid}
\end{figure*}

Each data simulation included nine looks at a 3D model of a train, shown in Figure~\ref{fig:loco-truth}.
Since 3D volumes can be difficult to visualize, we also show a pair of 2D depth and reflectivity images.
The color bars shown in this figure are representative and can be applied to all images in this paper.
The reflectivity image (shown in the bottom-right half) represents the maximum value along the depth dimension, while the depth (shown in the top-left half) is given by the location of the maximum.
These images often reveal more detail than the 3D volume image, which can be useful for visualizing the reconstruction quality.
However, due to the nature using the maximum value, undue noise can be introduced into the 2D images, particularly in the depth image.

The train was assumed to be 1 meter in size in each dimension.
Assuming the train to be a Lambertian surface, the reflectivity of each point on the target was computed as proportional to the cosine of the angle between the surface normal and the viewing direction.
The reflectivity was then normalized so that the brightest point in the image had unit reflectivity.
A speckle realization was generated by multiplying the reflectivity of each voxel in the image by a complex Gaussian random variable with unit variance.
The speckle realization was then propagated to the hologram plane by applying the forward model with parameters listed in the `Simulation' column of Table~\ref{table:params}.
The aperture was modeled as a centered, circular aperture with a diameter equal to $50\%$ of the hologram grid length.
Finally, the data is corrupted by adding complex white Gaussian noise with variance $\sigma_{\eta}^2 = 10^{-3}$.

\subsection{Experimental Data Measurement}

In addition to synthetic data, we evaluate CLAMP on data collected at the Air Force Research Laboratory.
The experiment consisted of two targets, a toy car and a hexagonal nut, which were painted with matte white paint so that the surfaces had a uniform Lambertian reflectance.
In order to measure multiple, statistically independent speckle realizations, the targets were placed on a high-precision rotation stage and rotated slightly after each measurement.

The targets were illuminated by a linear-frequency modulated waveform with central wavelength of \qty{1550}{\nm} (\qty{193.4}{THz}) and chirp rate of \qty{117.9}{\THz/s}.
The laser was split so that 95\% of the power was transmitted to the target and the remaining 5\% was used as a reference beam for holographic imaging.
The reference beam was placed off-axis in the plane of the exit pupil and pointed at the center of a focal plane array, where it interfered with a focused image of the target.
The interference pattern was recorded on a $640\times512$ pixel InGaAs photodetector with \qty{20}{\mu m} pixel pitch at 8-bit resolution. 
The camera's region of interest was narrowed to fit the image of the target, thereby allowing faster frame rates.
The toy car was measured by a $128 \times 128$ array of pixels at a frame rate of \qty{17.60}{\kHz}, and the hexagonal nut was measured by an $80 \times 80$ array at a frame rate of \qty{28.92}{\kHz}.
The noise variance was estimated to be $\sigma_{\eta}^2 = 0.0012$ for the toy car and $\sigma_{\eta}^2 = 0.0025$ for the hexagonal nut.
This estimation was done by computing the mean power within the pupil region of the hologram and dividing by the mean power outside of the pupil region.
This is similar to the noise variance used in the synthetic data experiment, which allows for a comparison of the reconstruction quality between the two experiments. 

Further details of the experimental setup, system hardware, and calibration methods can be found in~\cite[Chapter 4]{farrissIterativePhaseEstimation2021}. 
We summarize important experimental parameters in Table~\ref{table:params}.

\begin{table}
    \begin{center}
    \caption{Parameters used in our multi-look coherent lidar simulation and experimental measurements of a hexagonal nut and toy car.}\label{table:params}
    \begin{tabular}{ l c c c r }
        \toprule
        Item & Simulation & Hex. nut & Toy car & Units \\  
        \hline
        Distance & 52.9 & \qty{2.64}{} &\qty{2.64}{} &  \si{m} \\
        Central wavelength & \qty{1550}{} & \qty{1550}{} & \qty{1550}{} & \si{nm}\\
        Chirp rate & --- & \qty{117.9}{} & \qty{117.9}{} & \si{\THz}/\si{s} \\
        Chirp duration & --- & \qty{2.0}{} & \qty{2.0}{} & \si{\ms} \\
        Frame rate & --- & \qty{28.9}{} & \qty{17.6}{} & \si{\kHz} \\
        Frames, $N_t$ & $64$ & $46$ & $28$ & samples \\
        Frequency step size & \qty{0.15}{} & \qty{6.7}{} & \qty{4.2}{} & \si{\GHz} \\
        Aperture diameter & \qty{6.4}{} & \qty{6.4}{} & \qty{6.4}{} & \si{mm} \\
        Focal length & ---  & \qty{18.2}{} & \qty{18.2}{} & \si{cm} \\
        Pixel pitch & --- & \qty{20.0}{} & \qty{20.0}{} & $\mu$m \\
        Hologram grid size & $(128, 128)$ & $(80, 80)$  & $(128, 128)$ & pixels \\
        Noise variance, $\sigma_{\eta}^2$ & $0.001$ & $0.0025$ & $0.0012$ & --- \\
        SNR & 28.9 & 24.9 & 28.1 & dB \\
        \bottomrule 
    \end{tabular}
    \end{center}
\end{table}

\subsection{Synthetic Data Results}~\label{sec:synthetic}
In Figure~\ref{fig:train-grid}, we compare the reconstructions obtained from the synthetic data using the speckle average method, $\ell_{2,1}$-regularization, TV-regularization, and CLAMP using $q=1, 1.5$, and $2$.
As shown in the figure, incorporation of sparsity-based regularization, such as $\ell_{2,1}$ and TV, can reduce speckle noise and improve the resolution of the reconstruction.
However, in some cases, sparsity priors can also exacerbate speckle noise, as seen in the $\ell_{2,1}$-regularization reconstructions.
In contrast, CLAMP, with the use of a deep prior, is able to reconstruct a sharper and more detailed surface than the other methods. This is most evident in the undercarriage of the train, where the wheels and the tracks are more clearly defined in the CLAMP reconstruction with $q=2$.

\begin{figure}[!h]
  \centering
  \begin{minipage}{0.05\columnwidth}
    \vspace{-0.3cm} 
    \rotatebox{90}{\large{$q=2$} \hspace{0.5cm} \large{$q=1.5$} \hspace{0.5cm} \large{$q=1$}}
  \end{minipage}
  \hspace{0.01\columnwidth}
  \begin{minipage}{0.9\columnwidth}
    \begin{subfigure}{0.488\columnwidth}
      \centering
      \begin{minipage}{0.48\textwidth}
          \includegraphics[width=\textwidth]{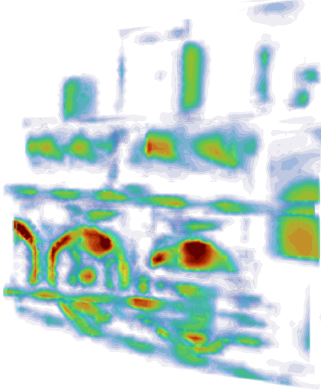}
      \end{minipage}
      \begin{minipage}{0.48\textwidth}
          \includegraphics[width=\textwidth]{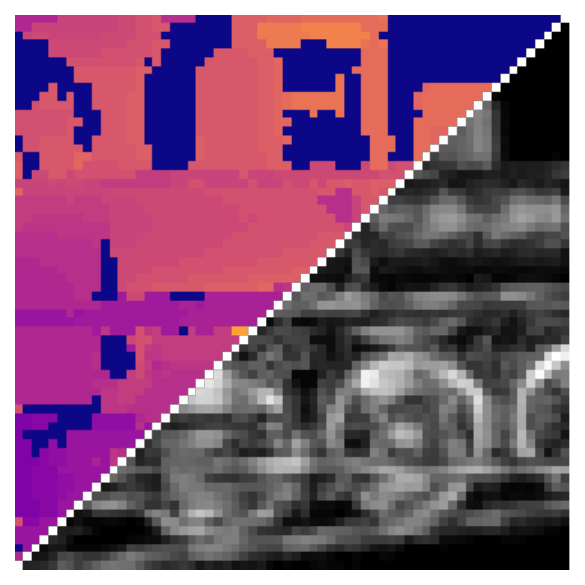}
      \end{minipage}
    \end{subfigure}
    \begin{subfigure}{0.488\columnwidth}
      \centering
      \begin{minipage}{0.48\textwidth}
          \includegraphics[width=\textwidth]{figures/train/loco3d-recon2.pdf}
      \end{minipage}
      \begin{minipage}{0.48\textwidth}
          \includegraphics[width=\textwidth]{figures/train/loco_recon_2d_q2.pdf}
      \end{minipage}
    \end{subfigure}
    \\

    \begin{subfigure}{0.488\columnwidth}
      \centering
      \begin{minipage}{0.48\textwidth}
          \includegraphics[width=\textwidth]{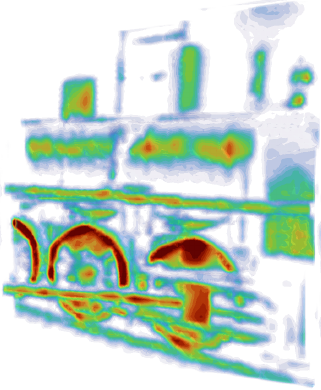}
      \end{minipage}
      \begin{minipage}{0.48\textwidth}
        \includegraphics[width=\textwidth]{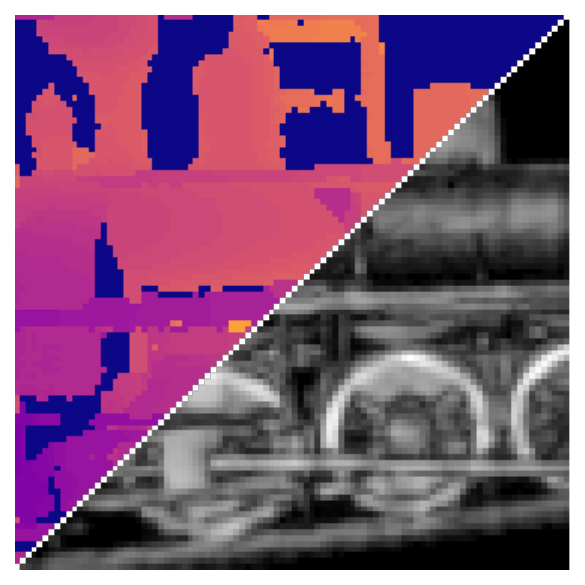}
      \end{minipage}
    \end{subfigure}
    \begin{subfigure}{0.488\columnwidth}
      \centering
      \begin{minipage}{0.48\textwidth}
          \includegraphics[width=\textwidth]{figures/train/loco3d-recon3.pdf}
      \end{minipage}
      \begin{minipage}{0.48\textwidth}
          \includegraphics[width=\textwidth]{figures/train/loco_recon_2d_q3.pdf}
      \end{minipage}
    \end{subfigure}
    \\

    \begin{subfigure}{0.488\columnwidth}
      \centering
      \begin{minipage}{0.48\textwidth}
          \includegraphics[width=\textwidth]{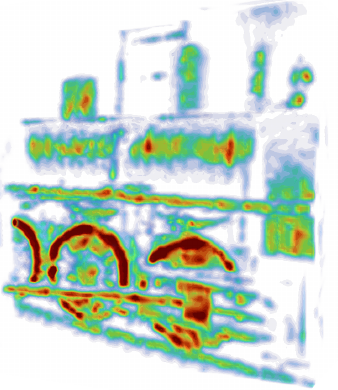}
      \end{minipage}
      \begin{minipage}{0.48\textwidth}
        \includegraphics[width=\textwidth]{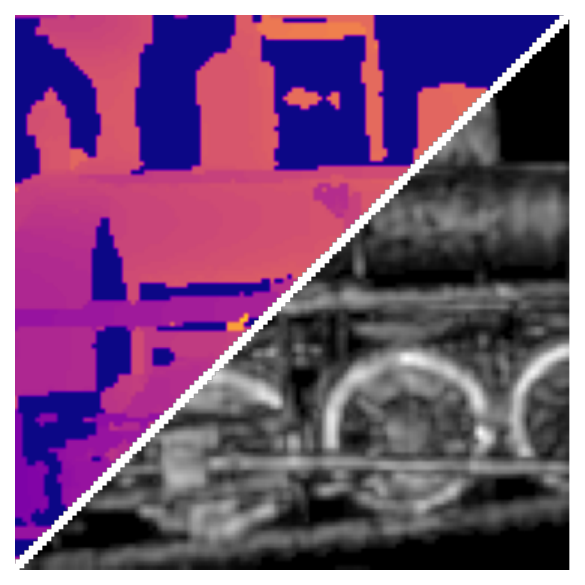}
      \end{minipage}
      \caption{No aperture model}
    \end{subfigure}
    \begin{subfigure}{0.488\columnwidth}
      \centering
      \begin{minipage}{0.48\textwidth}
          \includegraphics[width=\textwidth]{figures/train/loco3d-recon.pdf}
      \end{minipage}
      \begin{minipage}{0.48\textwidth}
          \includegraphics[width=\textwidth]{figures/train/loco_recon_2d.pdf}
      \end{minipage}
      \caption{With aperture model}
    \end{subfigure}
  \end{minipage}
  \caption{Ablation study of the aperture model in CLAMP. Each 3D image is shown with its 2D depth and reflectivity image. When the zero-padding factor $q=1$, the reconstructions are nearly identical. However, when $q=1.5$ or $q=2$, the aperture model in CLAMP improves the resolution of the reconstruction.}\label{fig:aperture-ablation}
\end{figure}

In Figure~\ref{fig:aperture-ablation}, we show the results of an ablation study of the aperture model in CLAMP.
We show reconstructions of the synthetic train data from two nearly identical CLAMP algorithms with $q=1, 1.5,$ and $2$.
The first column of reconstructions, labeled ``No aperture model,'' is the result of running CLAMP using $\mathbf{a} = \mathbf{1}$, effectively ignoring the aperture model.
The second column, labeled ``With aperture model,'' is the result of running the CLAMP algorithm as described in Algorithm~\ref{alg:clamp}.
When $q=1$, the reconstructions are nearly identical, as one should expect.
However, at $q=2$, the aperture model in CLAMP results in a sharper image and improves the ability to resolve high frequency components of the image.

In Table~\ref{table:results}, we show the PSNR, surface reflectivity NRMSE, Euclidean distance error and false positive rate for each reconstruction.
As seen from the table, the CLAMP reconstruction is more similar to the ground truth than the Speckle Average, $\ell_{2,1}$-regularization, and TV-regularization reconstructions, and with PSNR exceeding 30dB. 
The full CLAMP reconstruction also tend to have the lowest surface reflectivity NRMSE and Euclidean distance error.
This is likely since the other reconstructions are blurred and exhibit points further from the ground truth with lower reflectivity.
While the sparsity-based regularization methods also improve this metric, not to the extent of the proposed method.
The full CLAMP reconstruction shows improvement over the CLAMP reconstruction without the aperture model, particularly at higher zero-padding factors, further suggesting that the aperture model is crucial for quality, high-resolution reconstructions.
Finally, the advantage of regularization is further supported by the false positive rate, as Speckle Average reconstruction exhibits a significantly higher rate compared to the regularized methods.

\begin{table*}
  \begin{center}
    \caption{PSNRs, surface reflecitiy NRMSEs, and Euclidean distances (meters) between reconstructions and ground truth in our synthetic data experiment.}\label{table:results}
    \addtolength{\tabcolsep}{-0.4em}
    \begin{tabular}{c|c c c c c c}
    \toprule
    & \begin{tabular}{c} Zero-padding \\ Factor, $q$  \end{tabular} & \begin{tabular}{c} Speckle \\ Average  \end{tabular} &  \begin{tabular}{c} $\ell_{2,1}$ \\ Regularization  \end{tabular} & \begin{tabular}{c} TV \\ Regularization  \end{tabular} & \begin{tabular}{c} CLAMP \\ (no aperture \\  model) \end{tabular} & \begin{tabular}{c} \textbf{CLAMP} \\ \textbf{(with aperture} \\ \textbf{model)} \end{tabular} \\
    \hline
    \multirow{1}{*}{\parbox{1.5cm}{PSNR (dB)}} & 2 & 26.97 & 27.29 & 28.40 & 28.85 & \textbf{30.65} \\
    \hline
    \multirow{3}{*}{\parbox{1.5cm}{Surface Reflectivity NRMSE}} & 1 & 0.867 & 0.752 & 0.733 & \textbf{0.699} & 0.701 \\
    & 1.5 & 0.847 & 0.805 & 0.770 & 0.621 & \textbf{0.427} \\
    & 2 & 0.796& 0.742 & 0.699 & 0.662 & \textbf{0.410} \\
    \hline
    \multirow{3}{*}{\parbox{1.5cm}{Euclidean Distance (m)}}  & 1 & 0.017 & \textbf{0.011} & 0.012 & \textbf{0.011} & \textbf{0.011} \\
    & 1.5 & 0.018 & 0.031 & 0.010 & 0.031 & \textbf{0.008} \\
   & 2 & 0.018 & 0.010 & 0.013 & 0.013 & \textbf{0.009} \\
   \hline
   \multirow{3}{*}{\parbox{1.5cm}{False Positive Rate}}  
   & 1 & 0.33 & 0.054 & 0.039 & \textbf{0.029} & 0.036 \\
    & 1.5 & 0.48 & 0.031 & \textbf{0.006} & 0.021 & 0.008 \\
   & 2 & 0.56 & 0.018 & \textbf{0.004} & 0.016 & \textbf{0.004} \\
    \bottomrule
    \end{tabular}
  \end{center}
\end{table*}

\subsection{Experimental Data Results}\label{sec:experimental}
\begin{figure*}[!ht]
  \centering
  \begin{minipage}{0.05\textwidth}
    \vspace{-1.0cm}
    \hspace{0.2cm}
    \rotatebox{90}{\large{$q=2$} \hspace{0.25cm} \large{$q=1.5$} \hspace{0.25cm} \large{$q=1$}}
  \end{minipage}%
  \begin{minipage}{0.9\textwidth}
    \begin{subfigure}{0.19\textwidth}
        \centering
        \begin{minipage}{0.480\textwidth}
            \includegraphics[width=\textwidth]{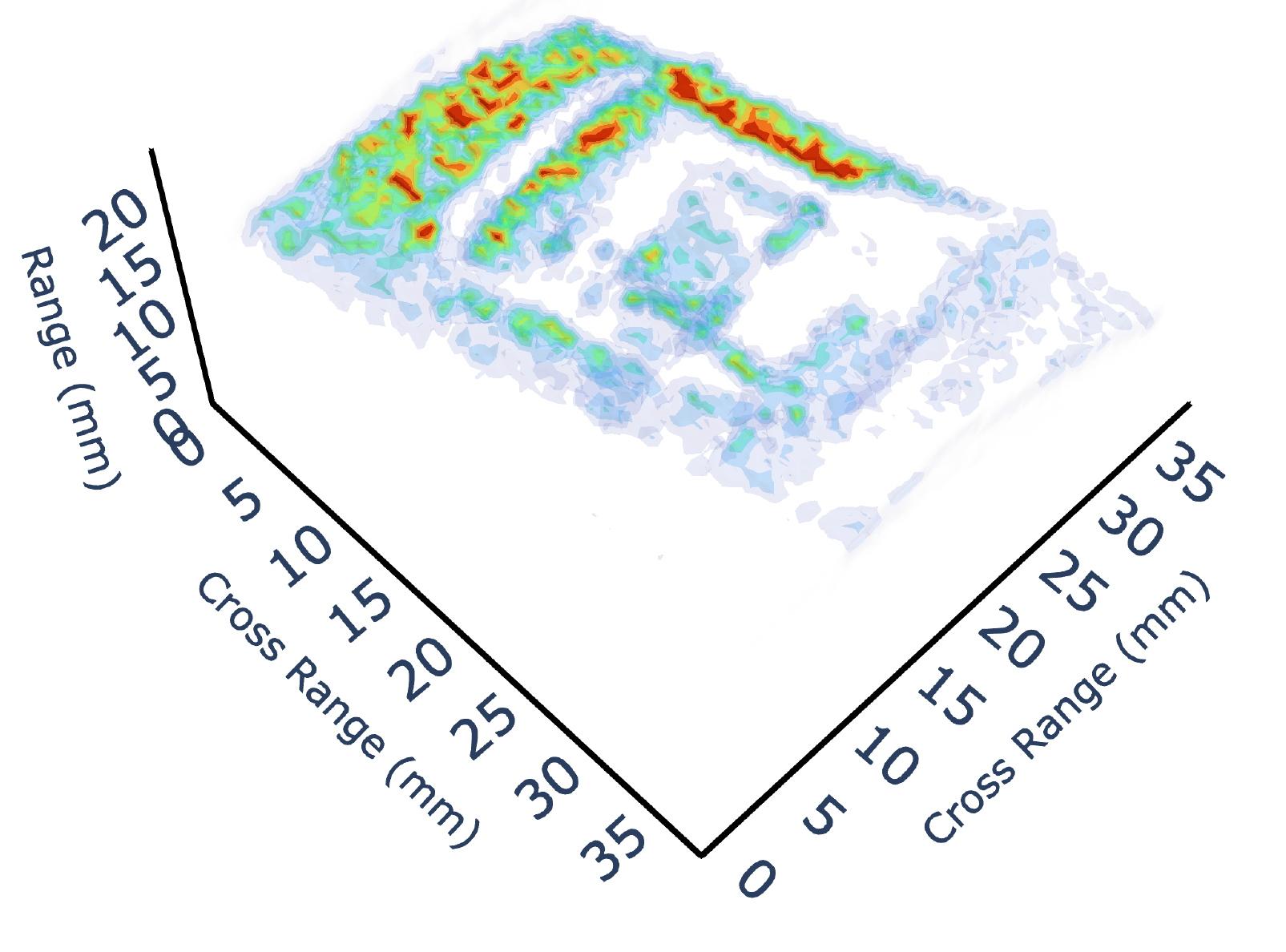}
        \end{minipage}
        \begin{minipage}{0.480\textwidth}
            \includegraphics[width=\textwidth]{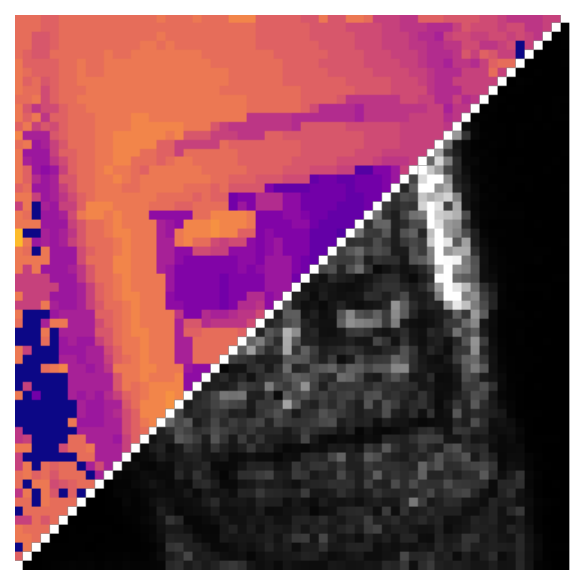}
        \end{minipage}
    \end{subfigure}
    \begin{subfigure}{0.19\textwidth}
      \centering
      \begin{minipage}{0.480\textwidth}
          \includegraphics[width=\textwidth]{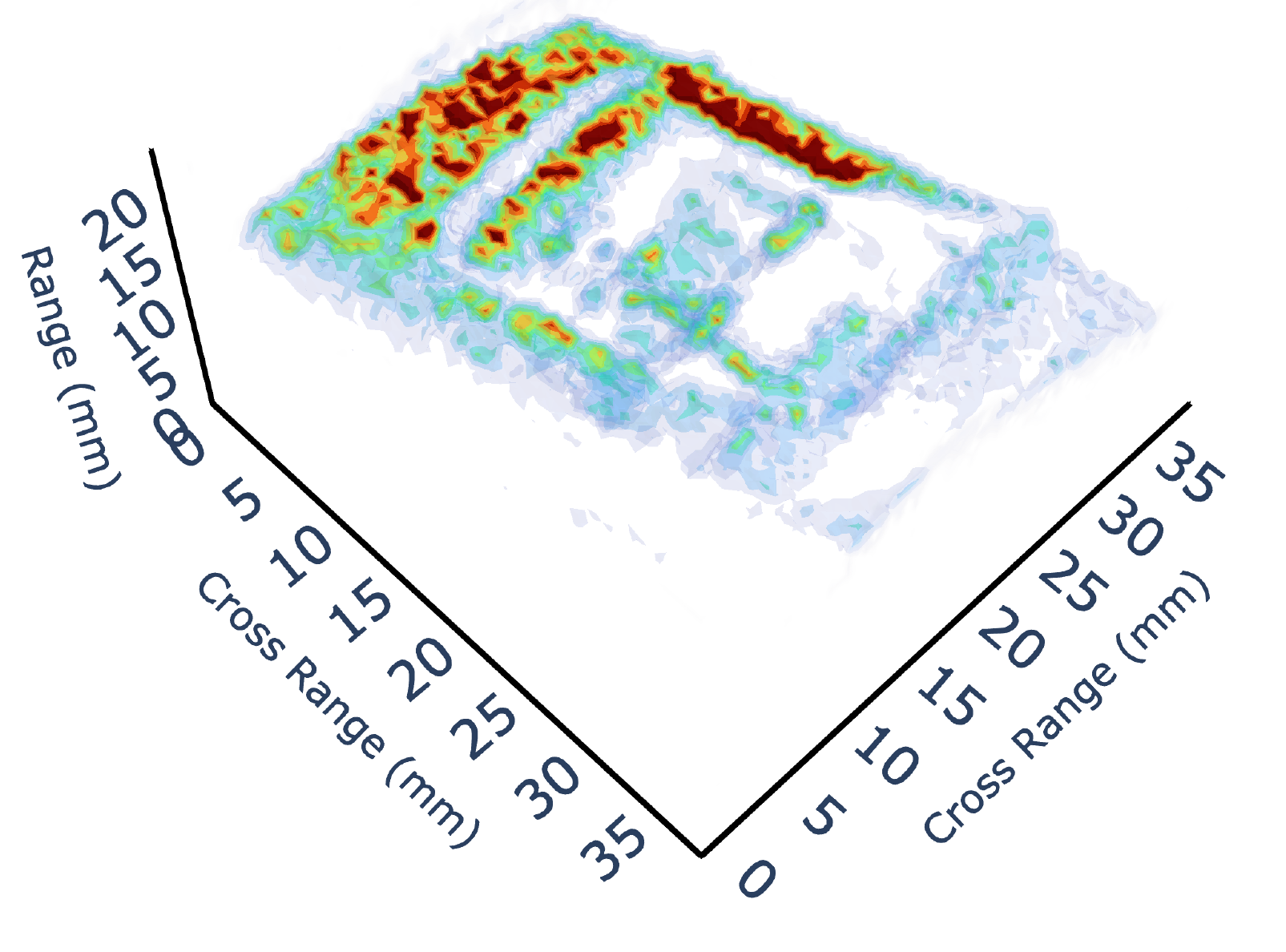}
      \end{minipage}
      \begin{minipage}{0.480\textwidth}
          \includegraphics[width=\textwidth]{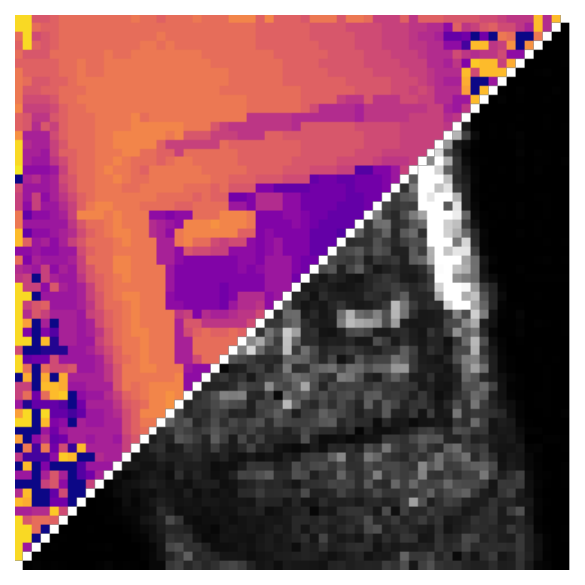}
      \end{minipage}
    \end{subfigure}
    \begin{subfigure}{0.19\textwidth}
      \centering
      \begin{minipage}{0.480\textwidth}
          \includegraphics[width=\textwidth]{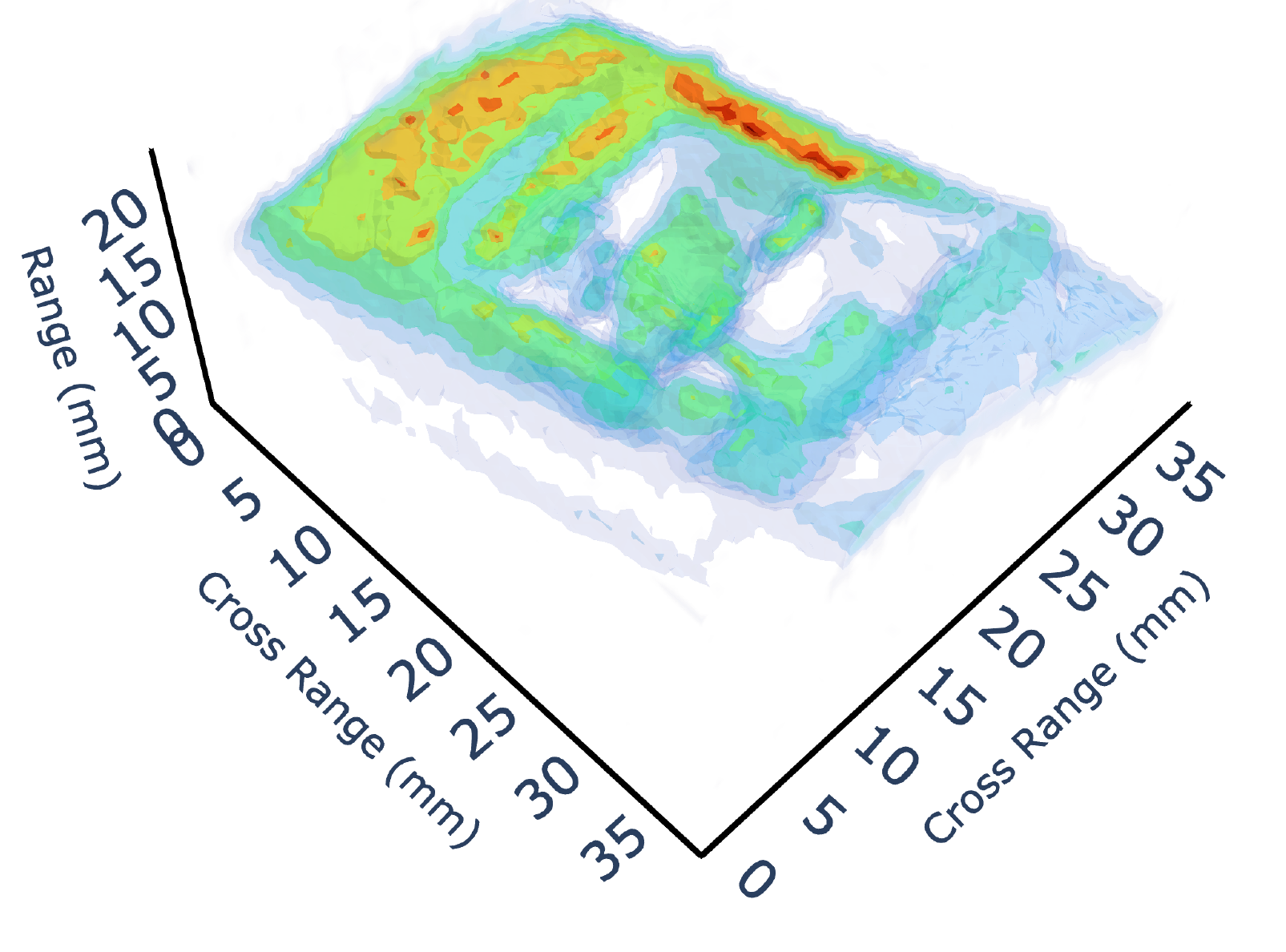}
      \end{minipage}
      \begin{minipage}{0.480\textwidth}
          \includegraphics[width=\textwidth]{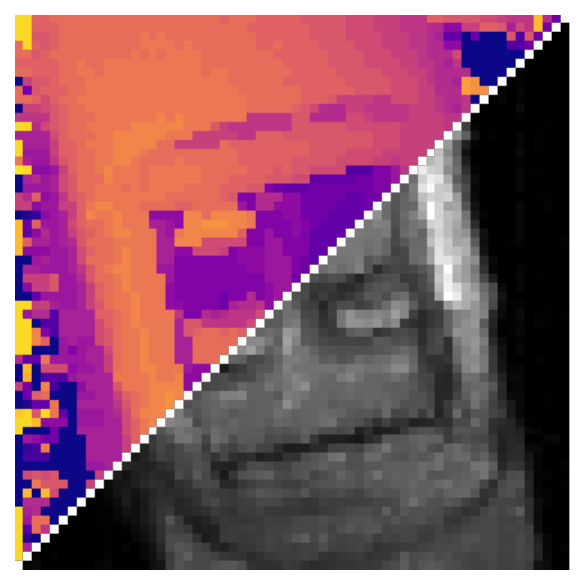}
      \end{minipage}
    \end{subfigure}
    \begin{subfigure}{0.19\textwidth}
      \centering
      \begin{minipage}{0.480\textwidth}
          \includegraphics[width=\textwidth]{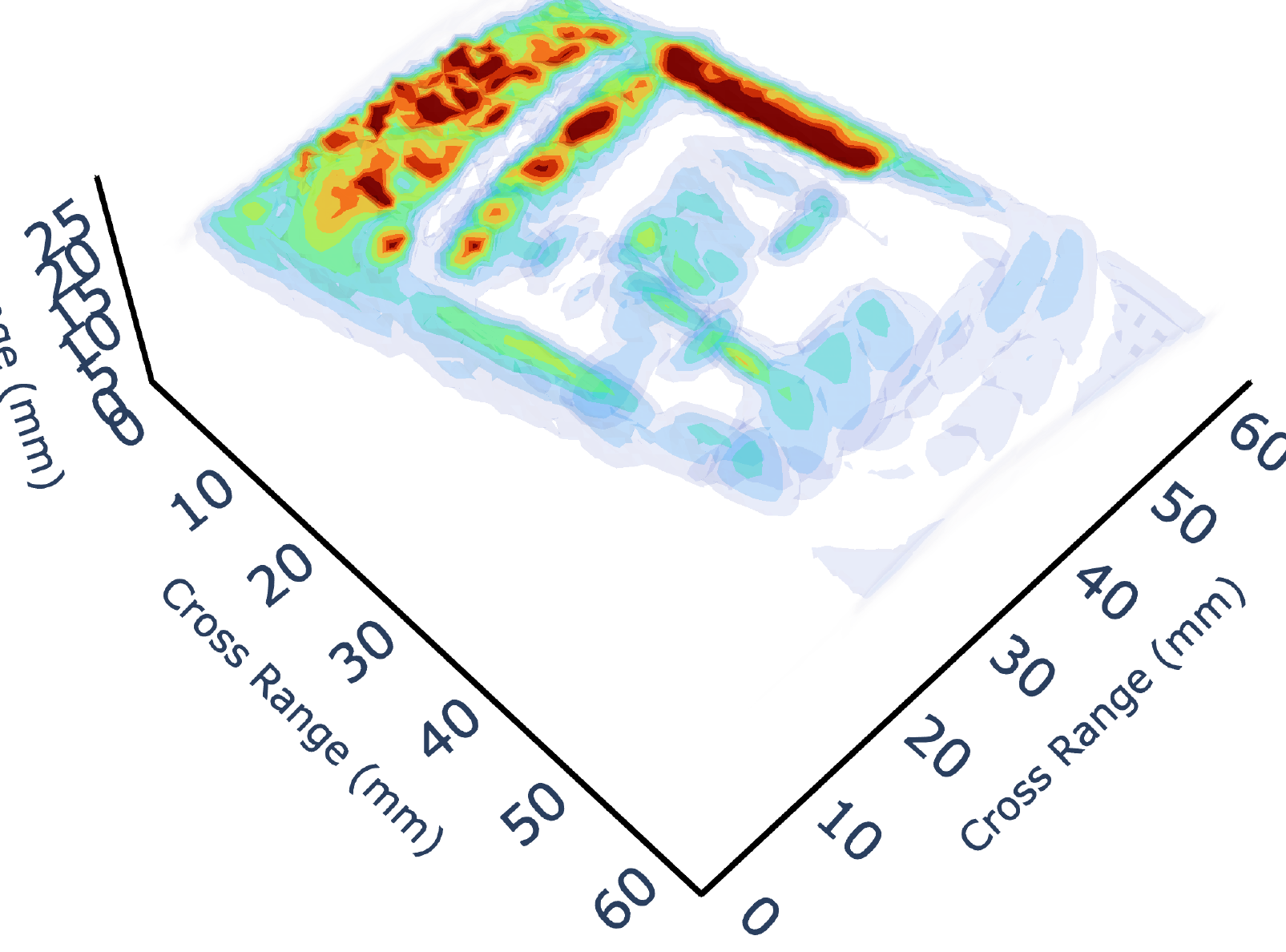}
      \end{minipage}
      \begin{minipage}{0.480\textwidth}
          \includegraphics[width=\textwidth]{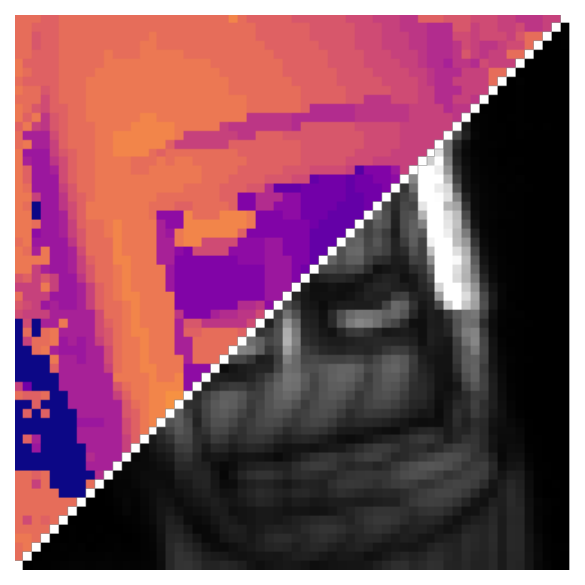}
      \end{minipage}
    \end{subfigure}
    \begin{subfigure}{0.19\textwidth}
      \centering
      \begin{minipage}{0.480\textwidth}
          \includegraphics[width=\textwidth]{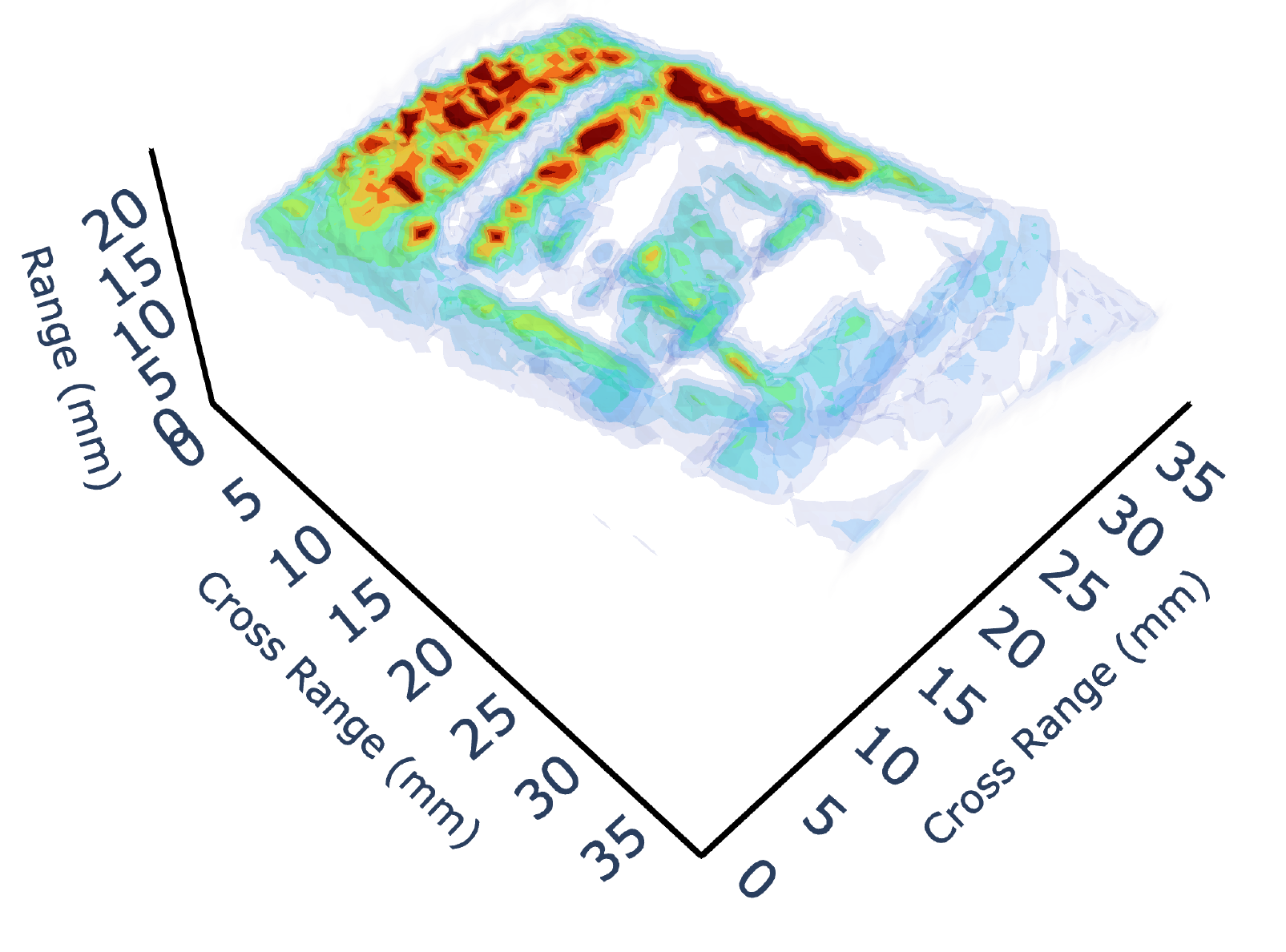}
      \end{minipage}
      \begin{minipage}{0.480\textwidth}
          \includegraphics[width=\textwidth]{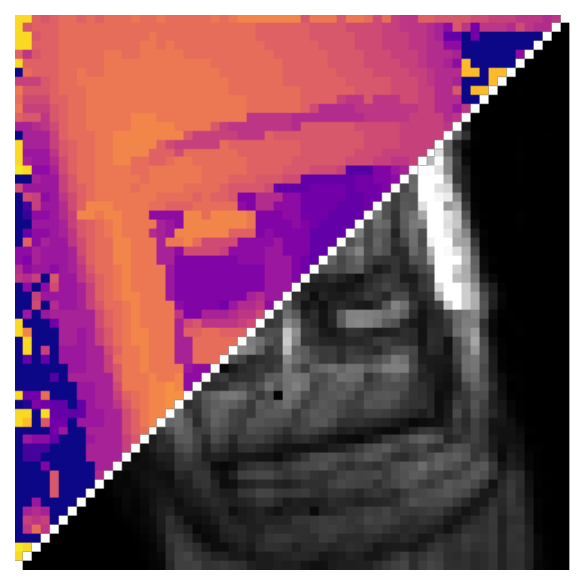}
      \end{minipage}
    \end{subfigure}
    \\
    \begin{subfigure}{0.19\textwidth}
      \centering
      \begin{minipage}{0.480\textwidth}
          \includegraphics[width=\textwidth]{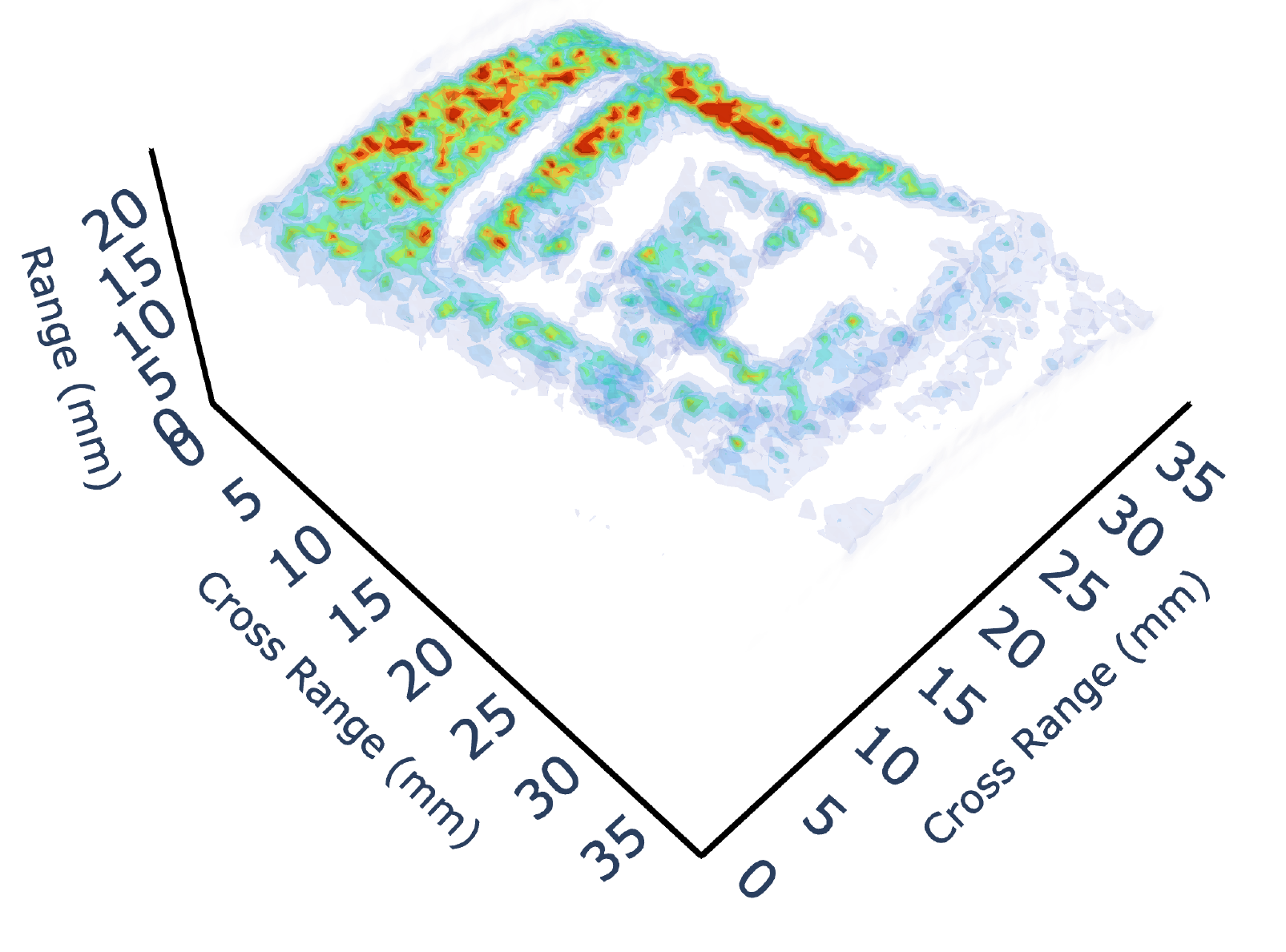}
      \end{minipage}
      \begin{minipage}{0.480\textwidth}
          \includegraphics[width=\textwidth]{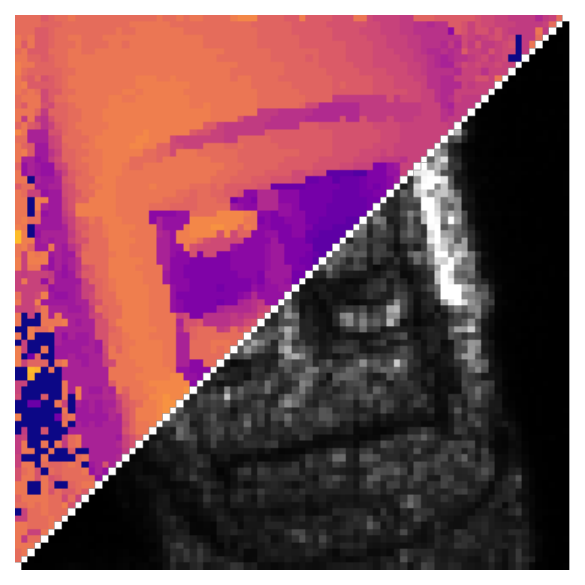}
      \end{minipage}
    \end{subfigure}
    \begin{subfigure}{0.19\textwidth}
      \centering
      \begin{minipage}{0.480\textwidth}
          \includegraphics[width=\textwidth]{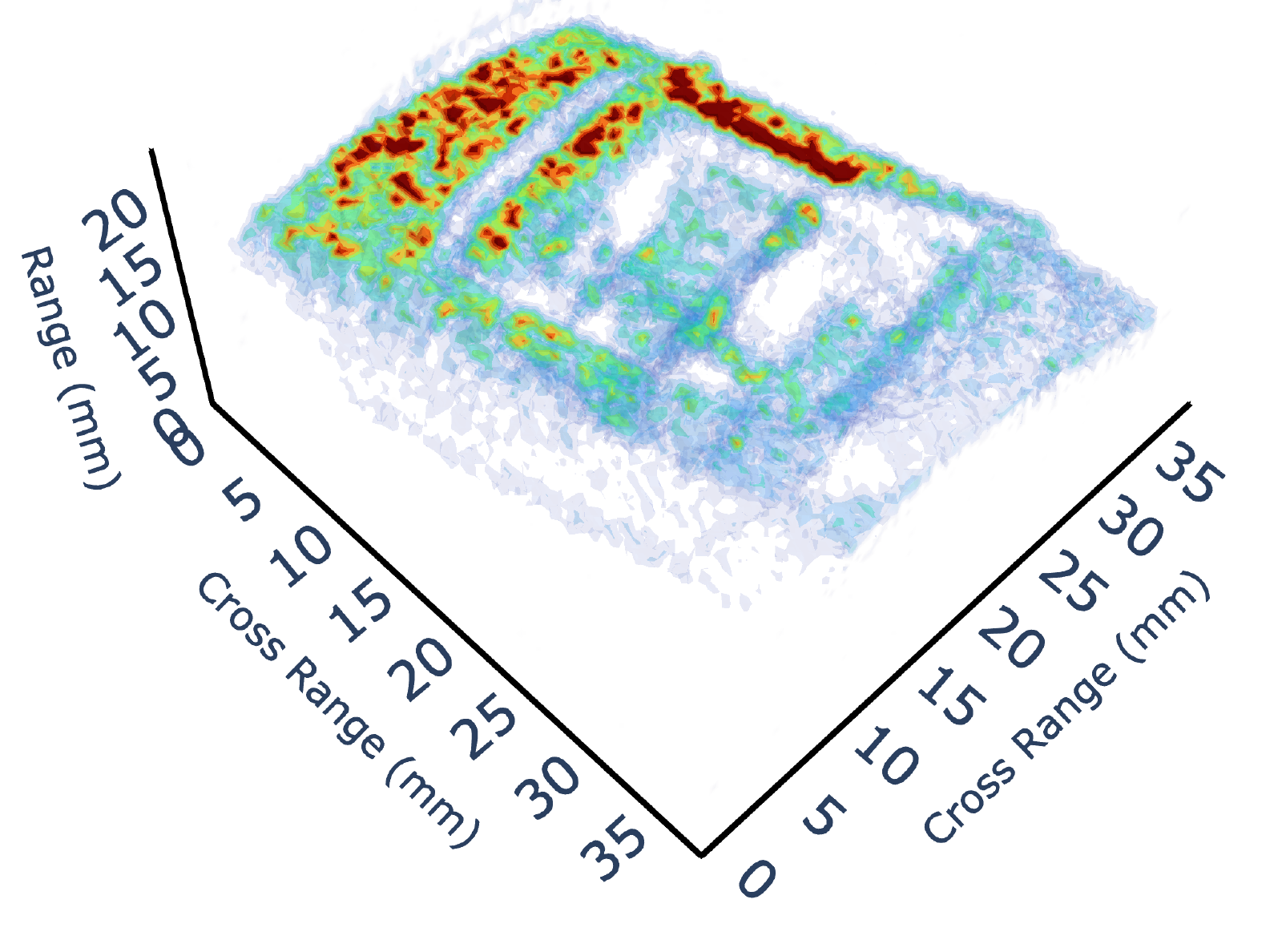}
      \end{minipage}
      \begin{minipage}{0.480\textwidth}
          \includegraphics[width=\textwidth]{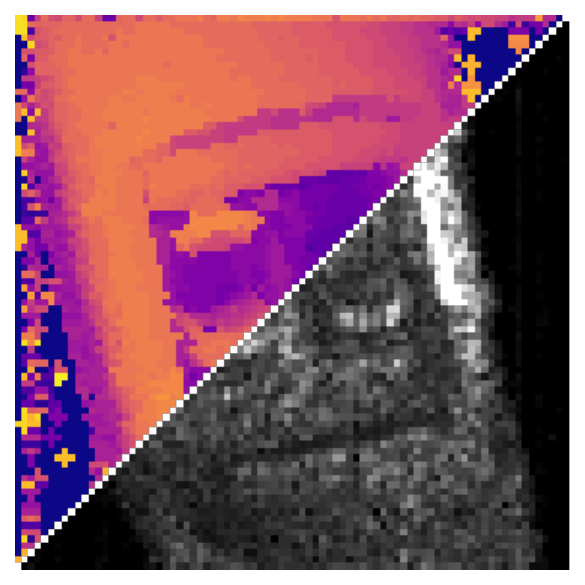}
      \end{minipage}
    \end{subfigure}
    \begin{subfigure}{0.19\textwidth}
      \centering
      \begin{minipage}{0.480\textwidth}
          \includegraphics[width=\textwidth]{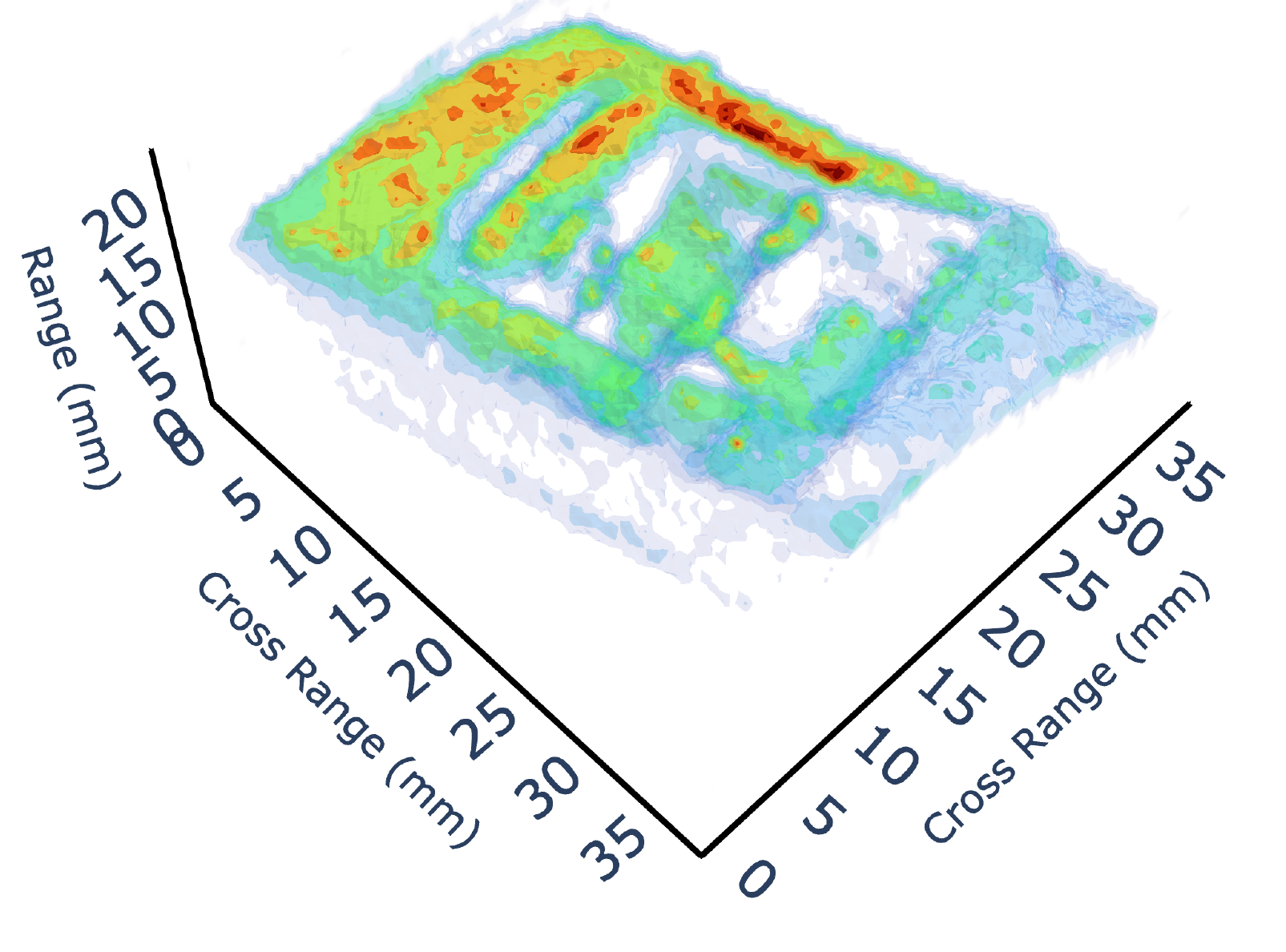}
      \end{minipage}
      \begin{minipage}{0.480\textwidth}
          \includegraphics[width=\textwidth]{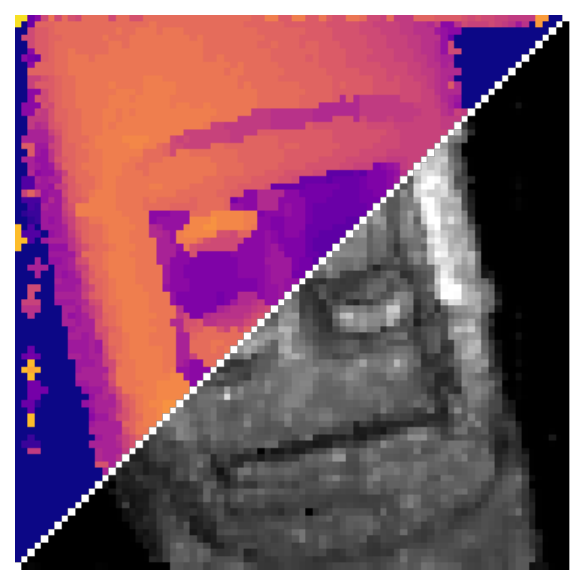}
      \end{minipage}
    \end{subfigure}
    \begin{subfigure}{0.19\textwidth}
      \centering
      \begin{minipage}{0.480\textwidth}
          \includegraphics[width=\textwidth]{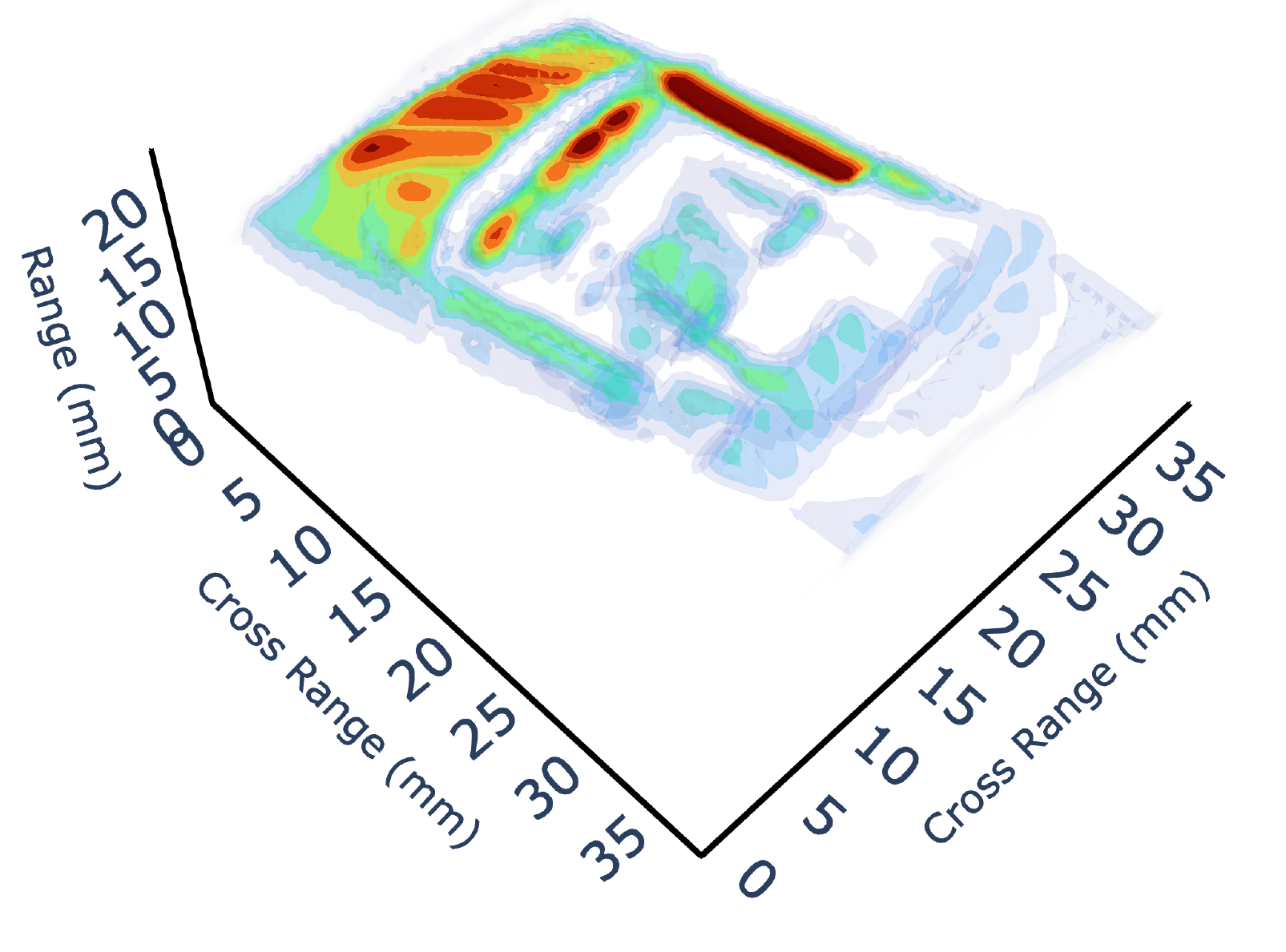}
      \end{minipage}
      \begin{minipage}{0.480\textwidth}
          \includegraphics[width=\textwidth]{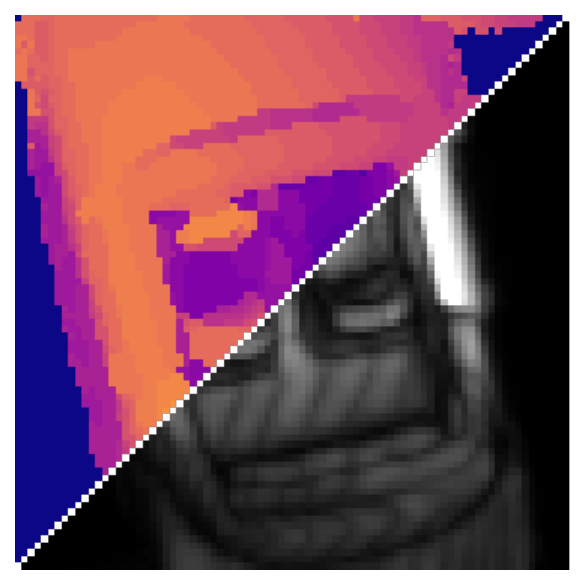}
      \end{minipage}
    \end{subfigure}
    \begin{subfigure}{0.19\textwidth}
      \centering
      \begin{minipage}{0.480\textwidth}
          \includegraphics[width=\textwidth]{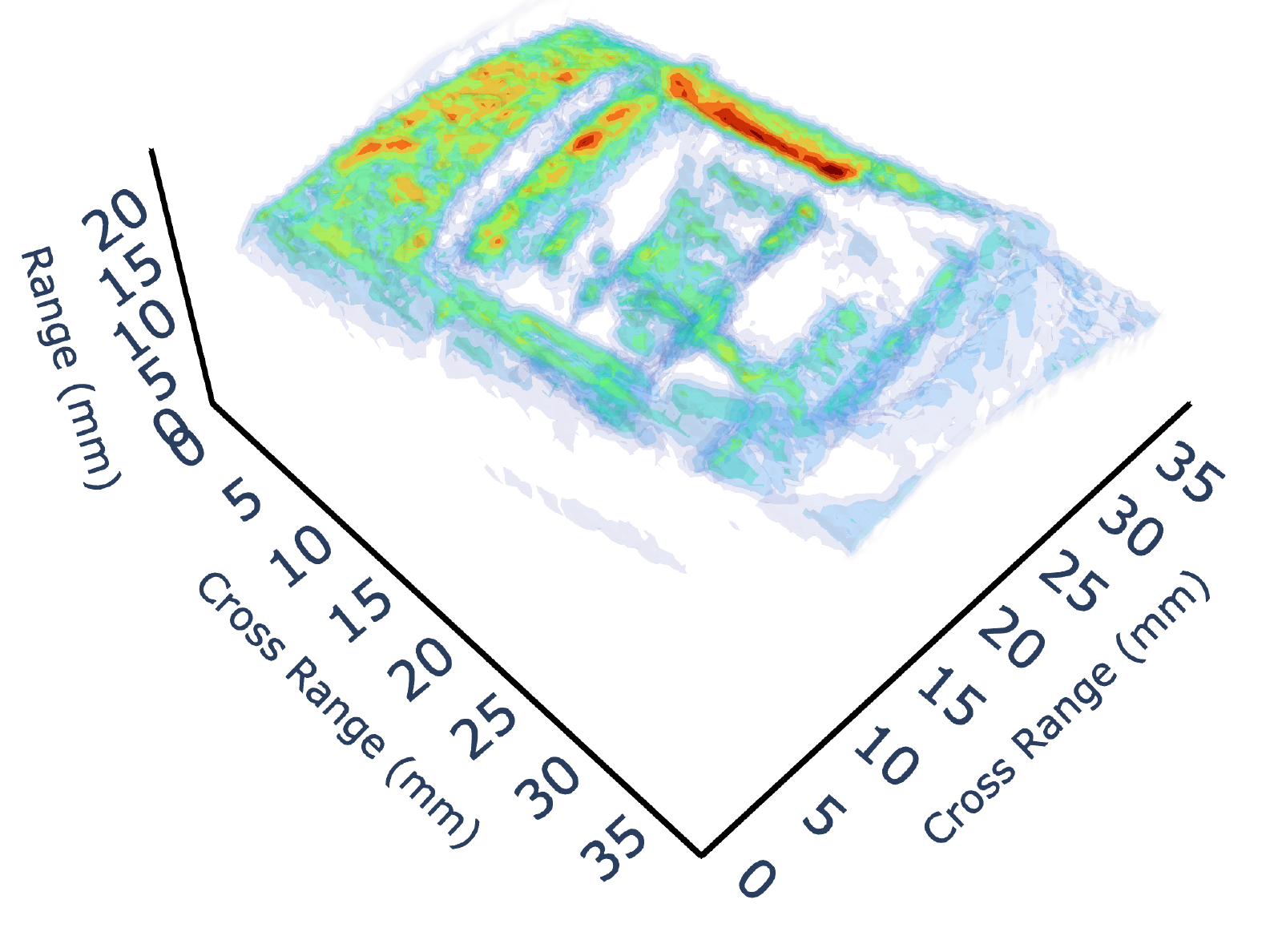}
      \end{minipage}
      \begin{minipage}{0.480\textwidth}
          \includegraphics[width=\textwidth]{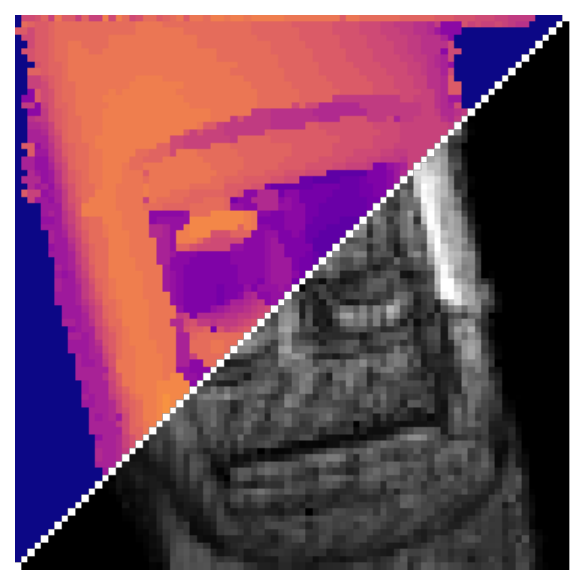}
      \end{minipage}
    \end{subfigure}
    \\
    \begin{subfigure}{0.19\textwidth}
      \centering
      \begin{minipage}{0.480\textwidth}
          \includegraphics[width=\textwidth]{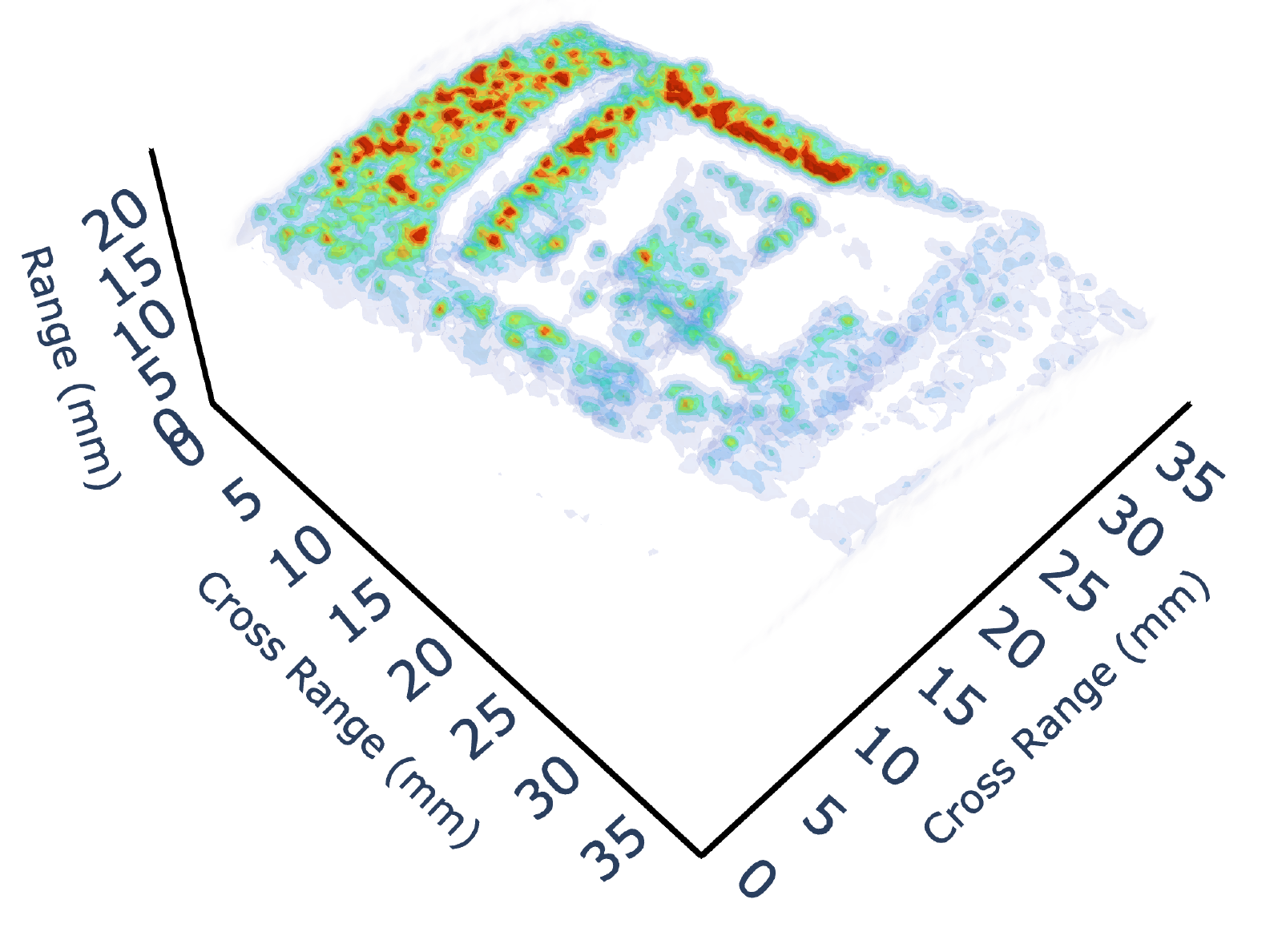}
      \end{minipage}
      \begin{minipage}{0.480\textwidth}
          \includegraphics[width=\textwidth]{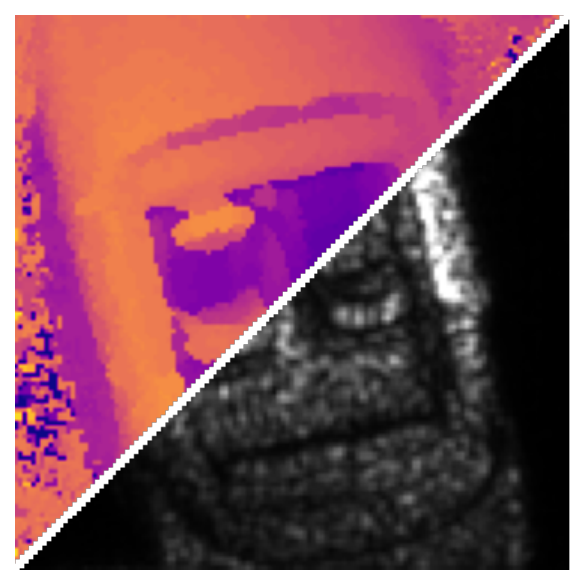}
      \end{minipage}
      \caption{Speckle Average}
    \end{subfigure}
    \begin{subfigure}{0.19\textwidth}
      \centering
      \begin{minipage}{0.480\textwidth}
          \includegraphics[width=\textwidth]{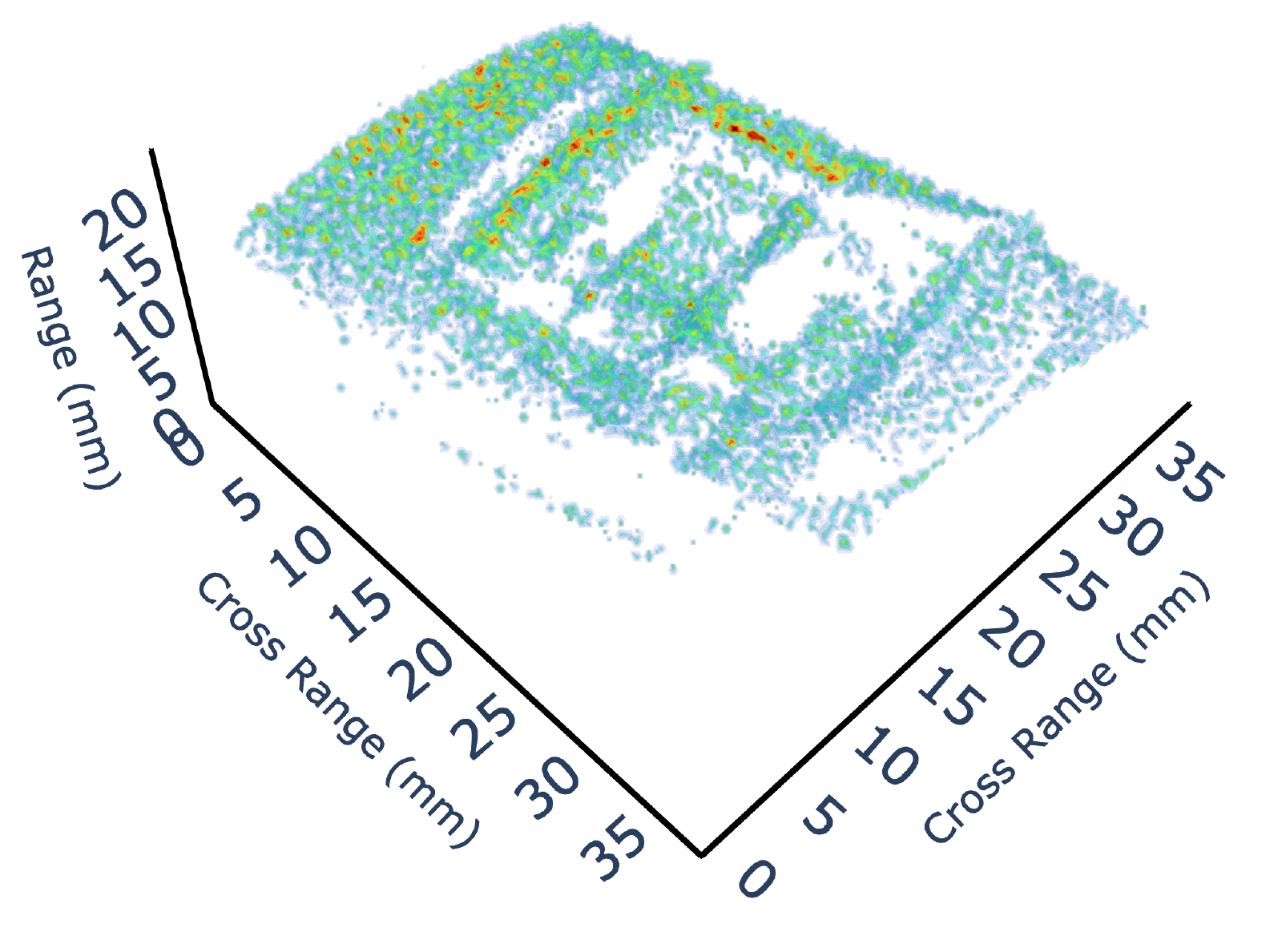}
      \end{minipage}
      \begin{minipage}{0.480\textwidth}
          \includegraphics[width=\textwidth]{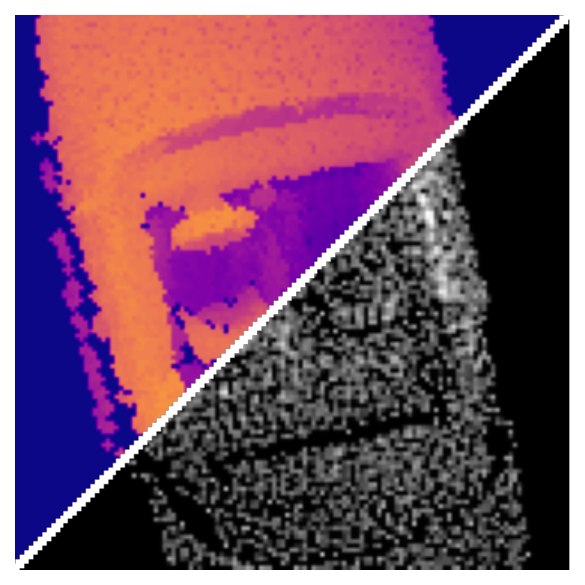}
      \end{minipage}
      \caption{$\ell_{2,1}$-regularization}
    \end{subfigure}
    \begin{subfigure}{0.19\textwidth}
      \centering
      \begin{minipage}{0.480\textwidth}
          \includegraphics[width=\textwidth]{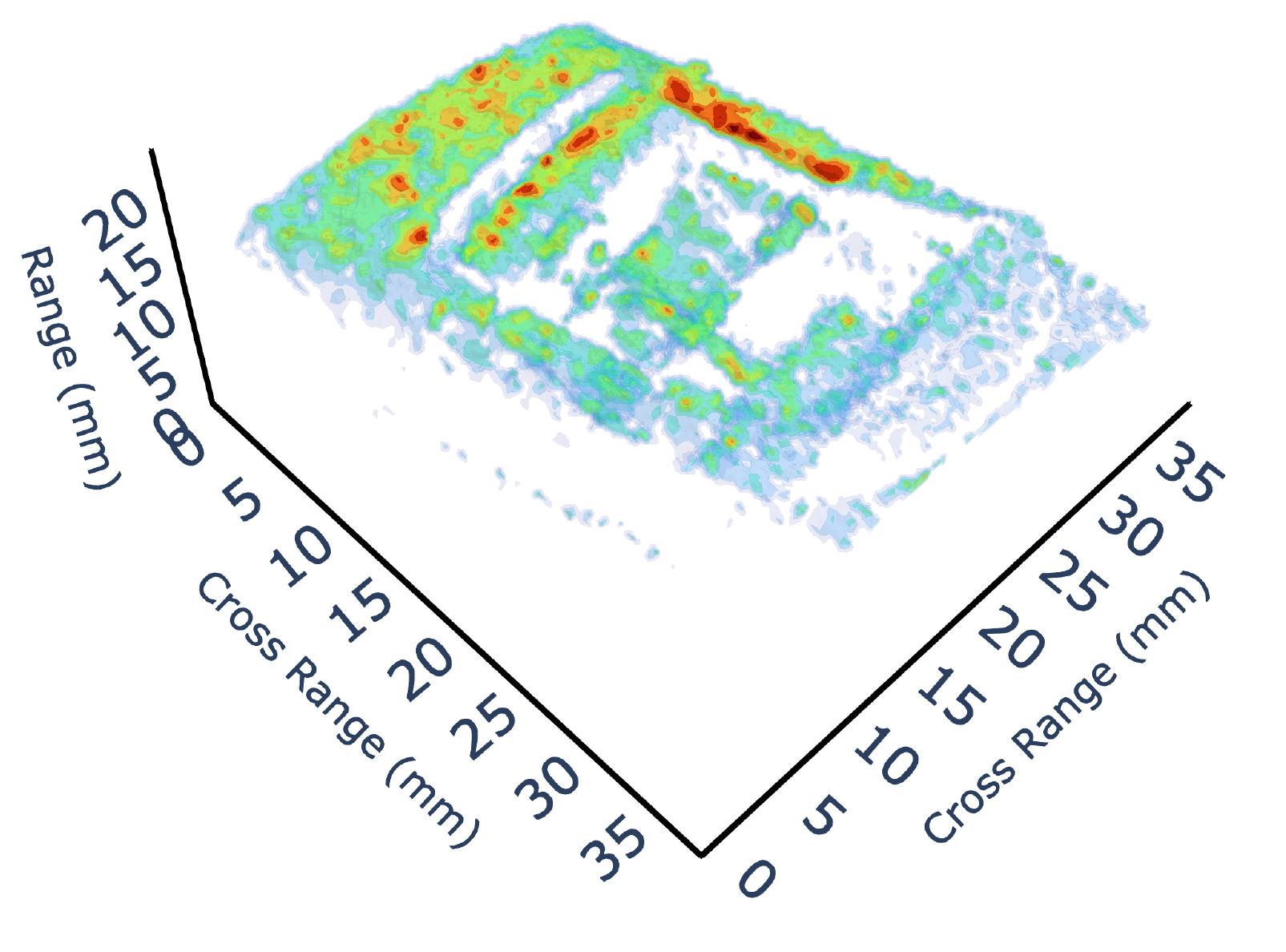}
      \end{minipage}
      \begin{minipage}{0.480\textwidth}
          \includegraphics[width=\textwidth]{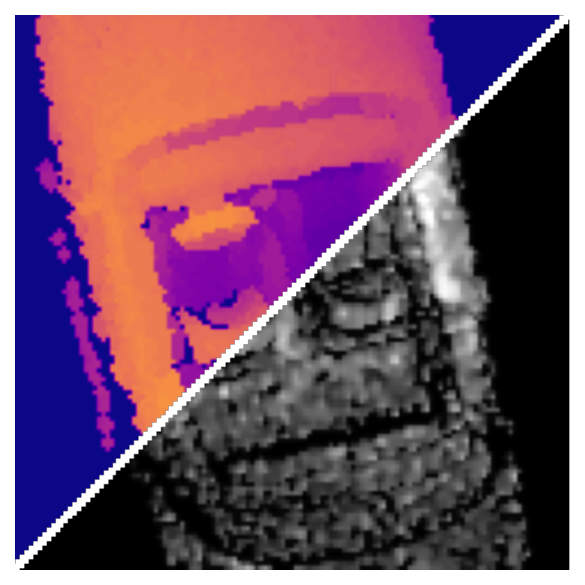}
      \end{minipage}
      \caption{TV-regularization}
    \end{subfigure}
    \begin{subfigure}{0.19\textwidth}
      \centering
      \begin{minipage}{0.480\textwidth}
          \includegraphics[width=\textwidth]{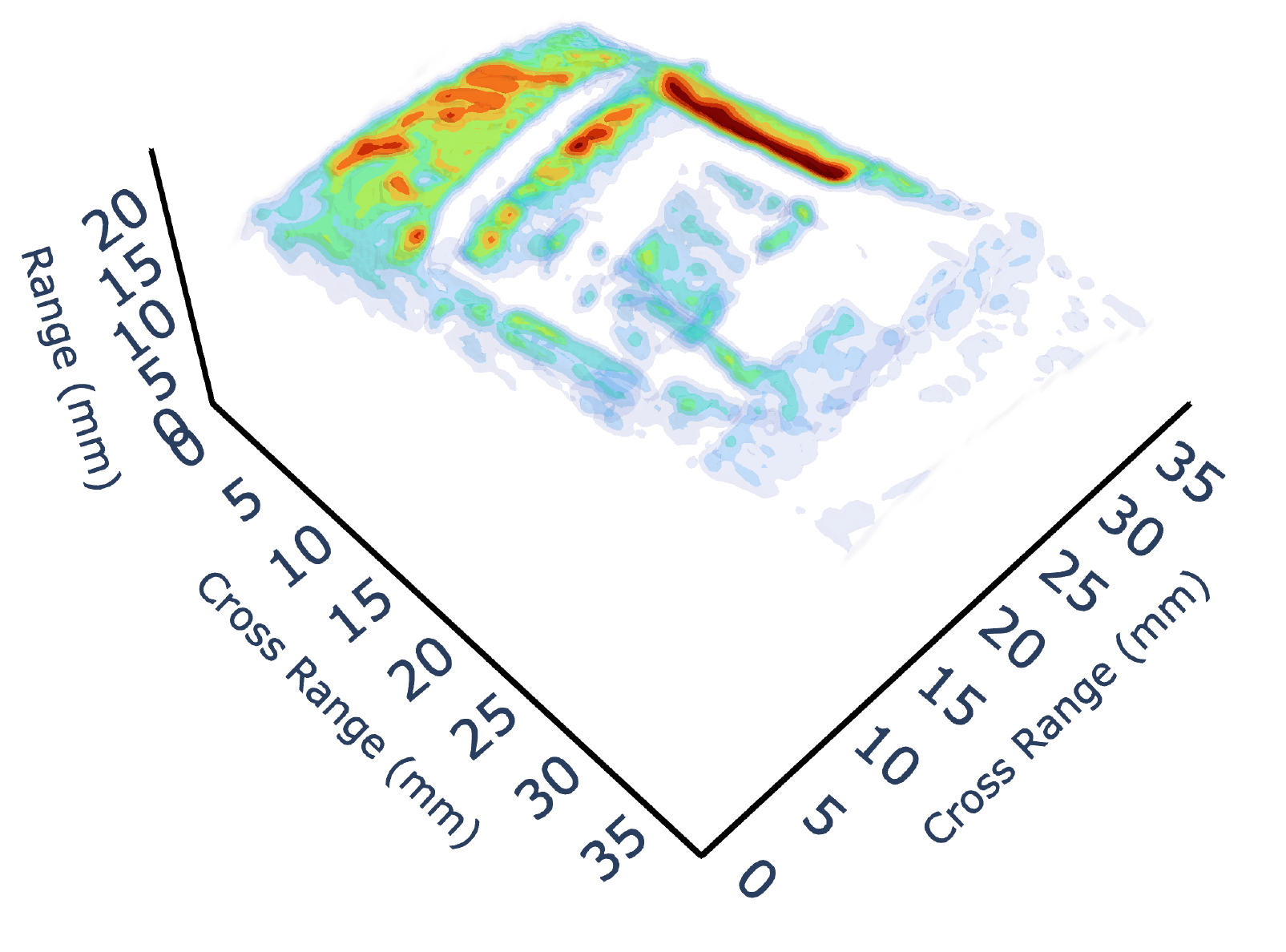}
      \end{minipage}
      \begin{minipage}{0.480\textwidth}
          \includegraphics[width=\textwidth]{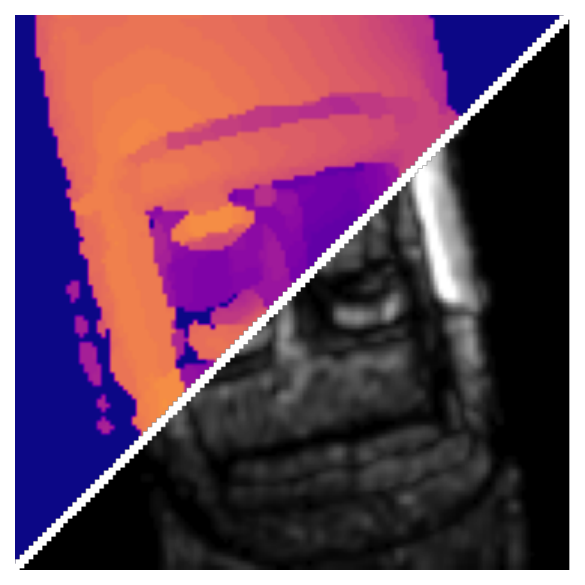}
      \end{minipage}
      \caption{CLAMP (No Aperture)}
    \end{subfigure}
    \begin{subfigure}{0.19\textwidth}
      \centering
      \begin{minipage}{0.480\textwidth}
          \includegraphics[width=\textwidth]{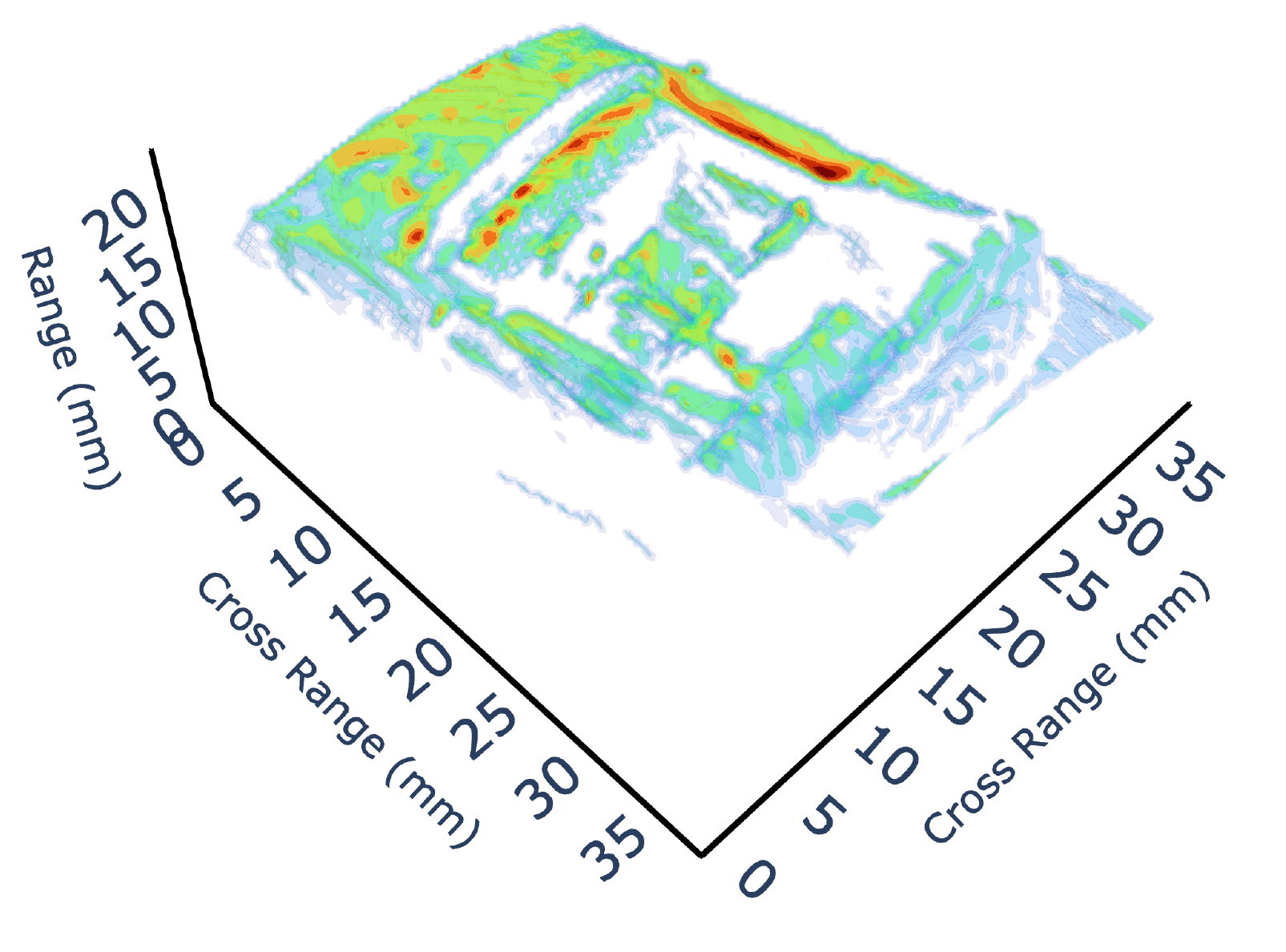}
      \end{minipage}
      \begin{minipage}{0.480\textwidth}
          \includegraphics[width=\textwidth]{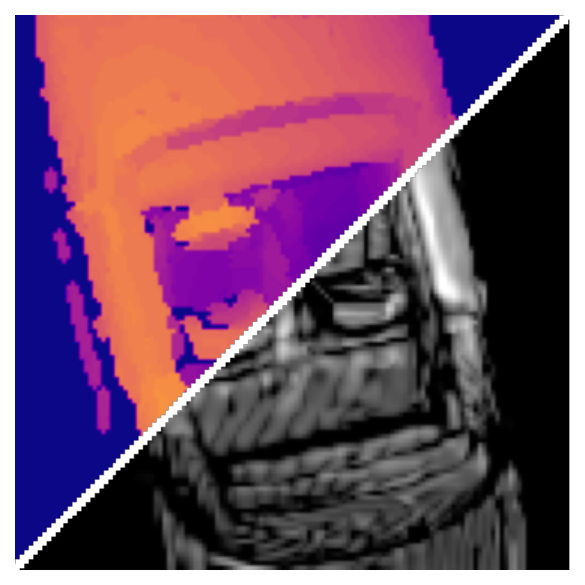}
      \end{minipage}
      \caption{\textbf{CLAMP}}
    \end{subfigure}
  \end{minipage}
  \caption{Reconstructions of the toy car from experimental data with zero-padding factors $q=1, 1.5$ and 2. Each 3D image is shown with its 2D depth and reflectivity image.
   The CLAMP reconstructions show improved speckle-reduction over the Speckle Average, $\ell_{2,1}$-, and TV-regularized reconstructions.}\label{fig:car-grid}
\end{figure*}

Figure~\ref{fig:car-grid} shows reconstructions from nine looks at a toy car at zero-padding factors, $q=1, 1.5$ and 2. 
Due to the advanced image prior, the CLAMP reconstructions shows improved speckle reduction and sharpness over the Speckle Average, $\ell_{2,1}$-, and TV-regularized reconstructions.
The $\ell_{2,1}$-regularized and TV-regularized reconstructions show some improvement over the Speckle Average, but the $\ell_{2,1}$-regularized suffers from speckle noise, and the TV-regularized reconstruction is overly smoothed, with some of the high reflectivity surfaces spreading across the surface.
The full CLAMP reconstruction again shows sharper and more detailed features over the CLAMP reconstructions without the aperture model.

\begin{figure*}[t]
  \centering
  \begin{minipage}{0.05\textwidth}
    \hspace{0.2cm}
    \rotatebox{90}{\large{$q=2$} \hspace{1.5cm} \large{$q=1.5$} \hspace{1.5cm} \large{$q=1$}}
  \end{minipage}%
  \begin{minipage}{0.9\textwidth}
    \begin{subfigure}{0.19\textwidth}
        \centering
        \begin{minipage}{0.99\textwidth}
            \includegraphics[width=\textwidth]{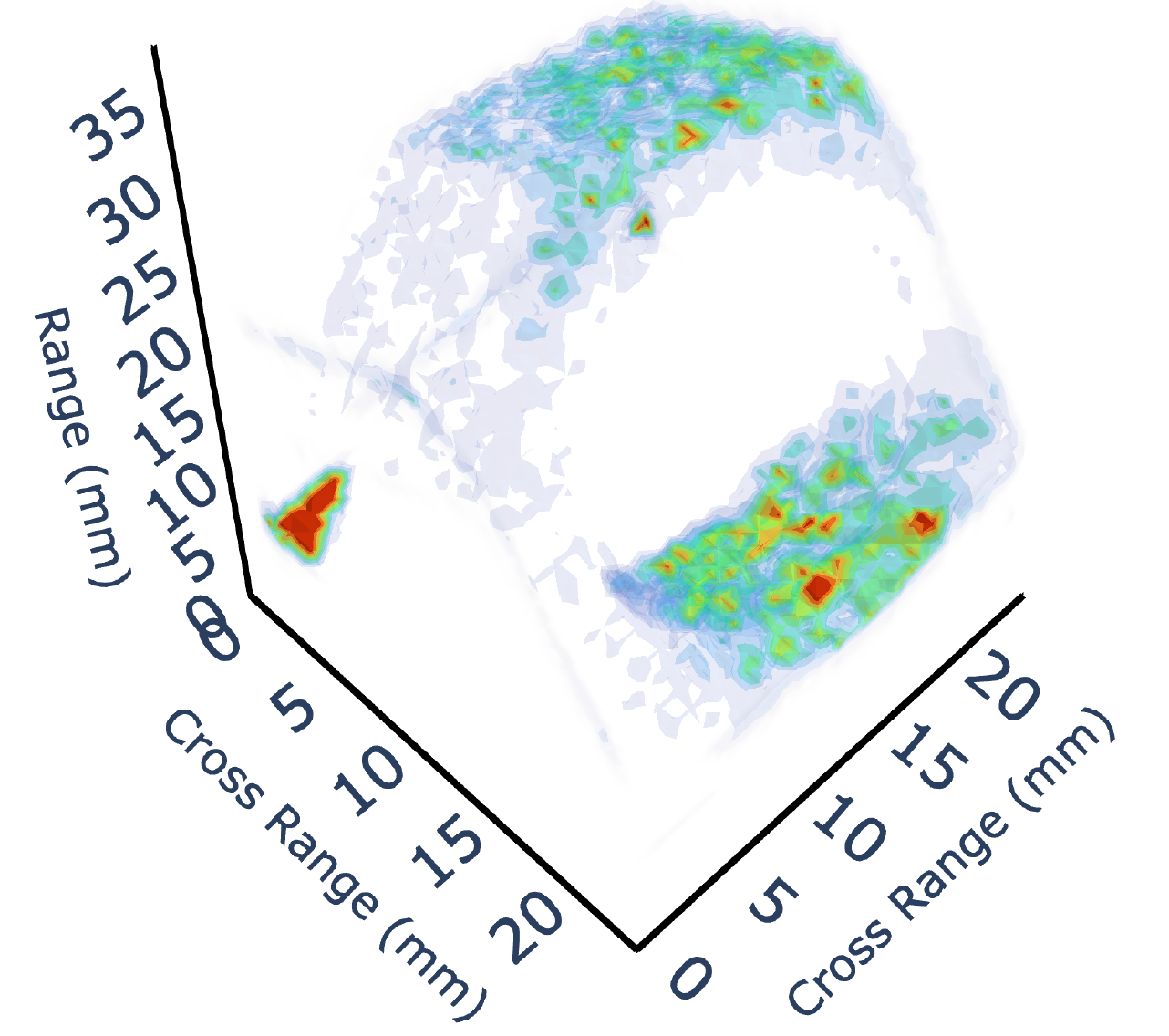}
        \end{minipage}
    \end{subfigure}
    \begin{subfigure}{0.19\textwidth}
      \centering
      \begin{minipage}{0.99\textwidth}
          \includegraphics[width=\textwidth]{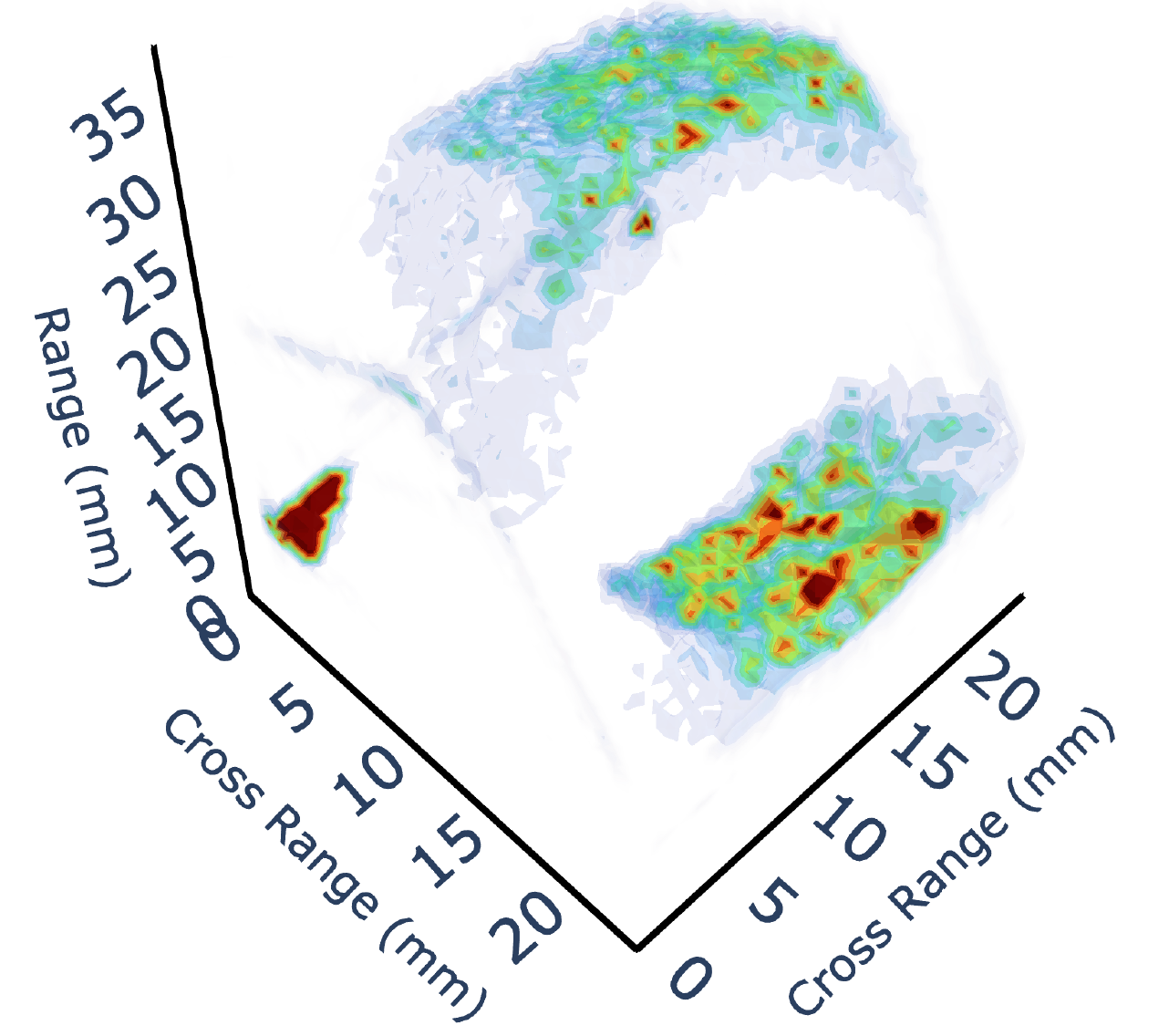}
      \end{minipage}
    \end{subfigure}
    \begin{subfigure}{0.19\textwidth}
      \centering
      \begin{minipage}{0.99\textwidth}
          \includegraphics[width=\textwidth]{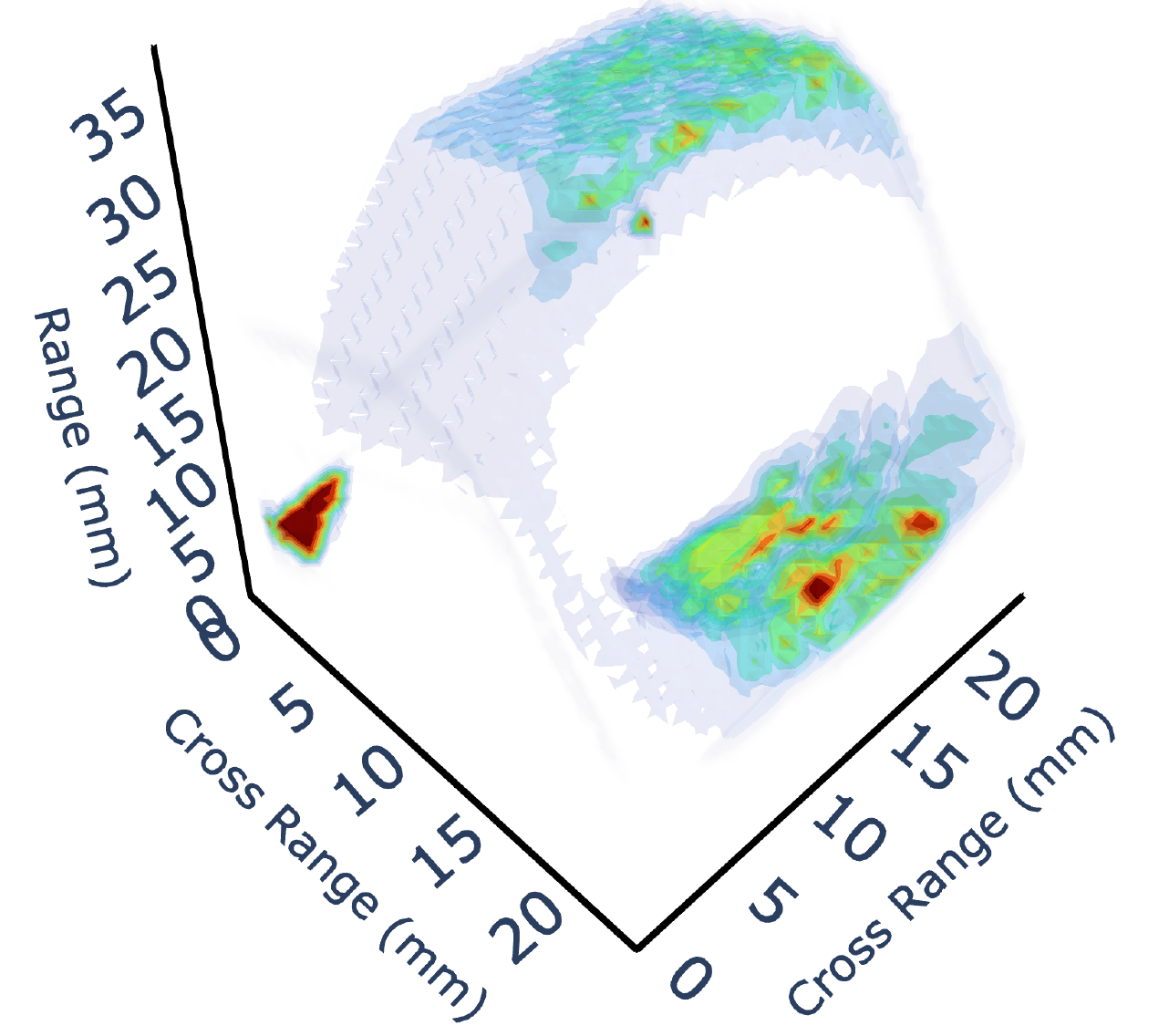}
      \end{minipage}
    \end{subfigure}
    \begin{subfigure}{0.19\textwidth}
      \centering
      \begin{minipage}{0.99\textwidth}
          \includegraphics[width=\textwidth]{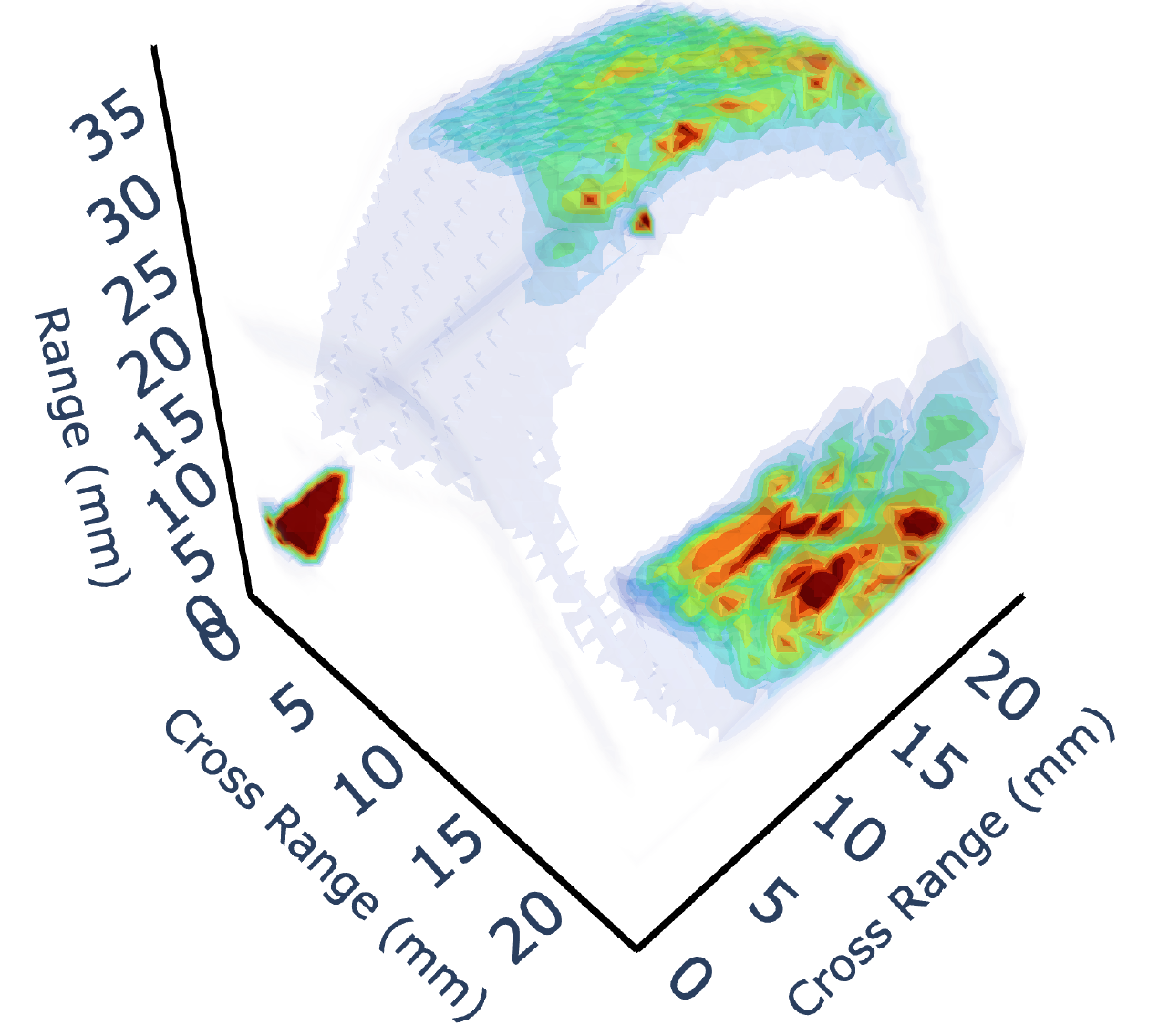}
      \end{minipage}
    \end{subfigure}
    \begin{subfigure}{0.19\textwidth}
      \centering
      \begin{minipage}{0.99\textwidth}
          \includegraphics[width=\textwidth]{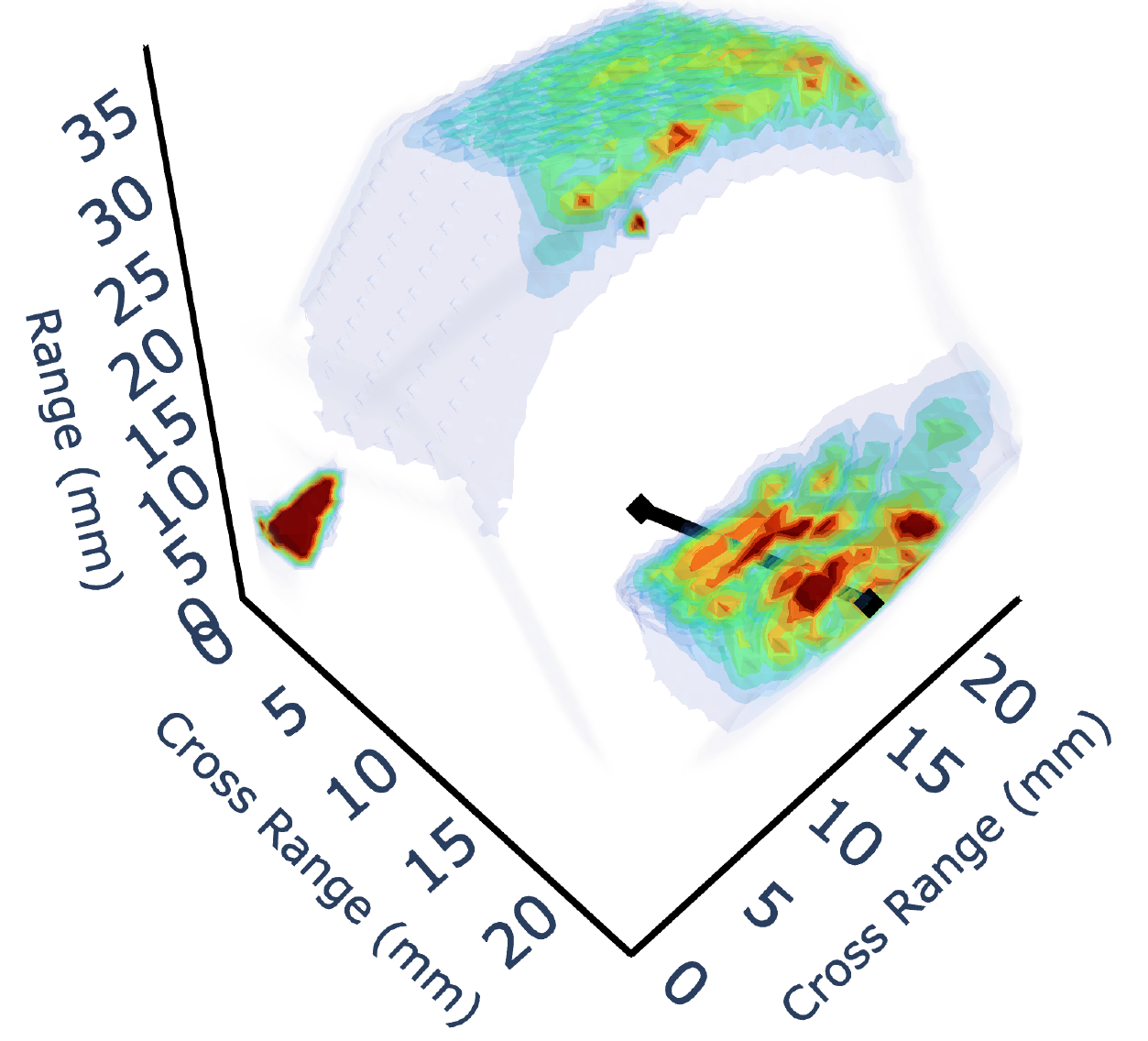}
      \end{minipage}
    \end{subfigure}
    \\
    \begin{subfigure}{0.19\textwidth}
      \centering
      \begin{minipage}{0.99\textwidth}
          \includegraphics[width=\textwidth]{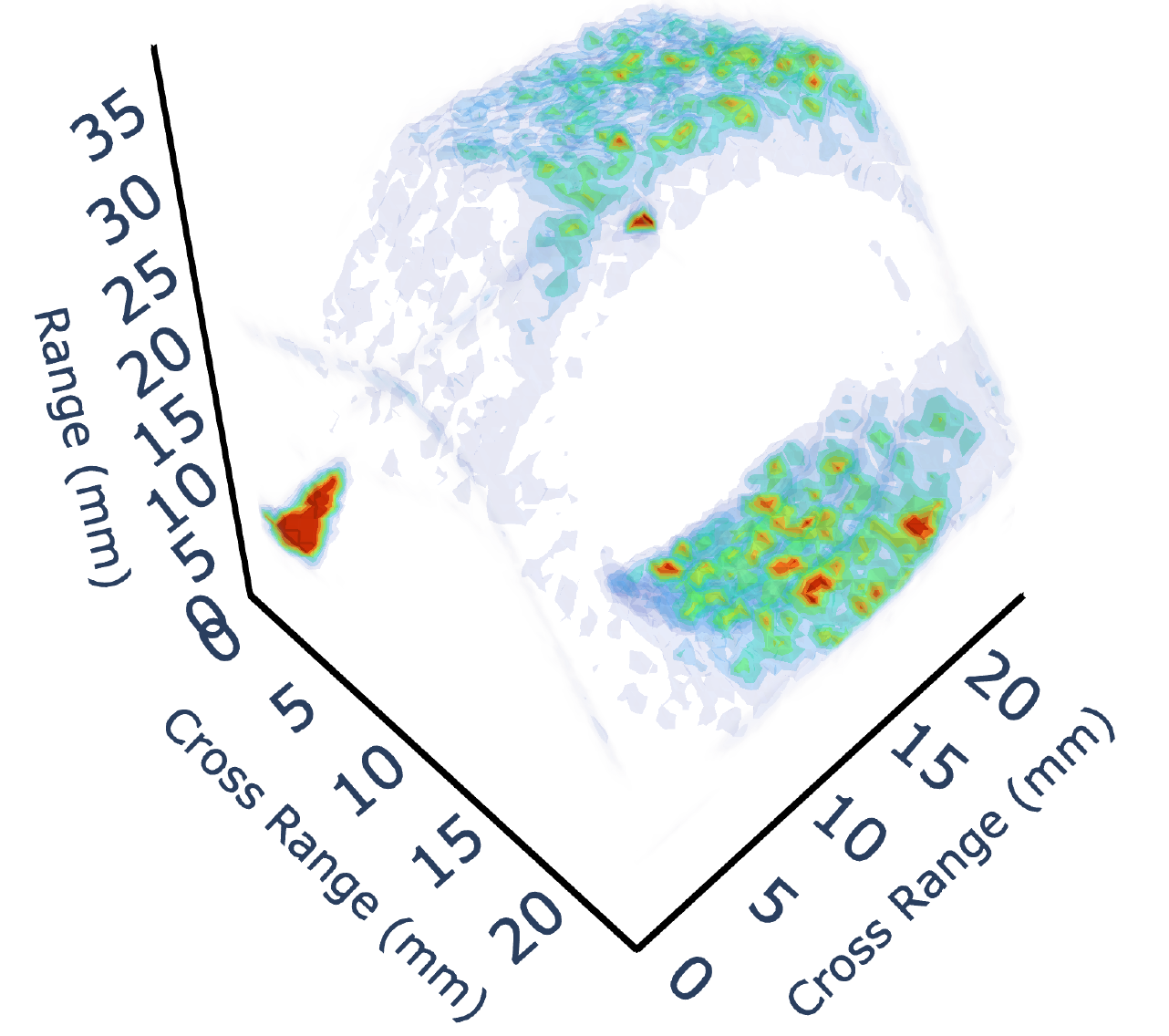}
      \end{minipage}
    \end{subfigure}
    \begin{subfigure}{0.19\textwidth}
      \centering
      \begin{minipage}{0.99\textwidth}
          \includegraphics[width=\textwidth]{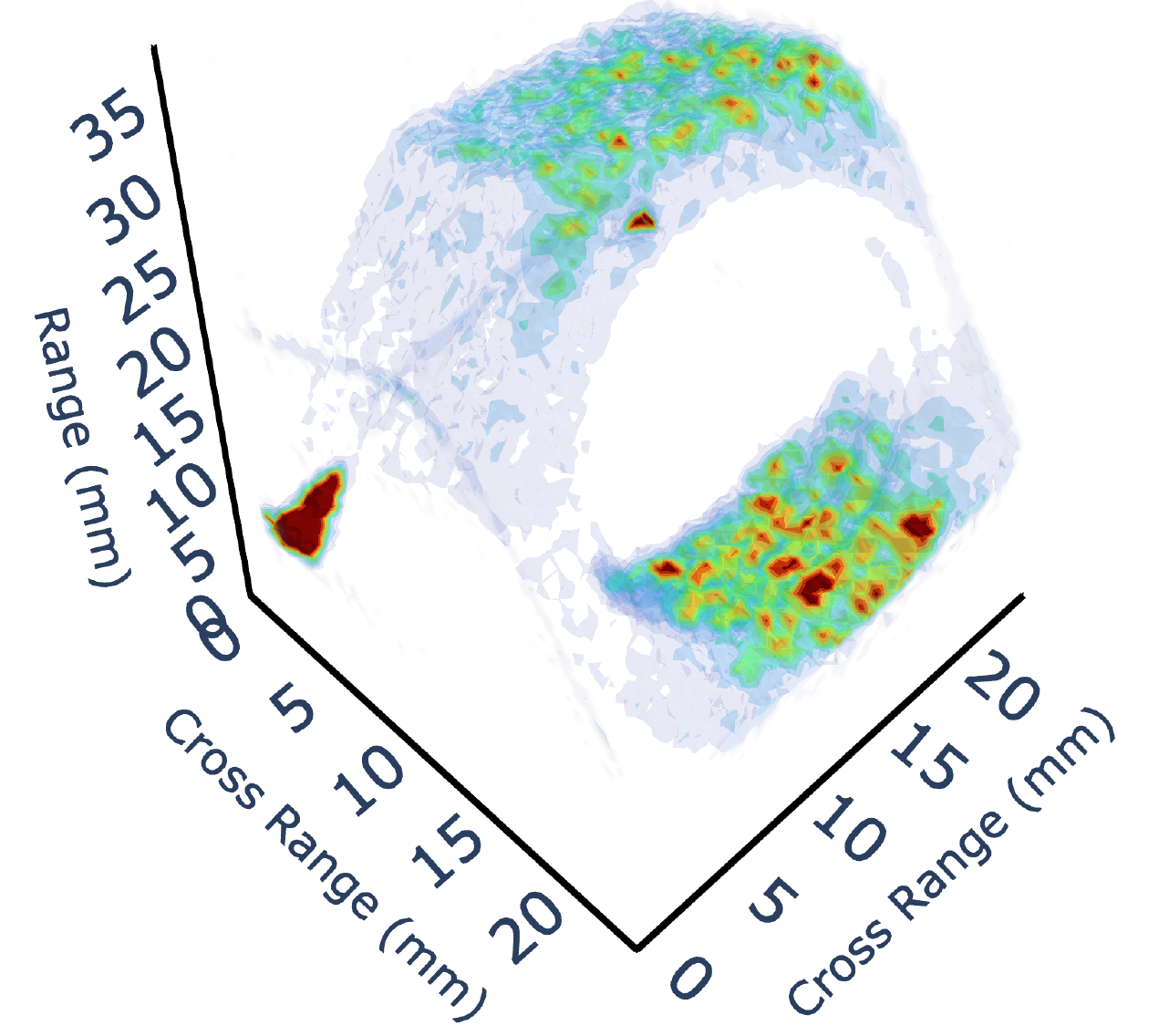}
      \end{minipage}
    \end{subfigure}
    \begin{subfigure}{0.19\textwidth}
      \centering
      \begin{minipage}{0.99\textwidth}
          \includegraphics[width=\textwidth]{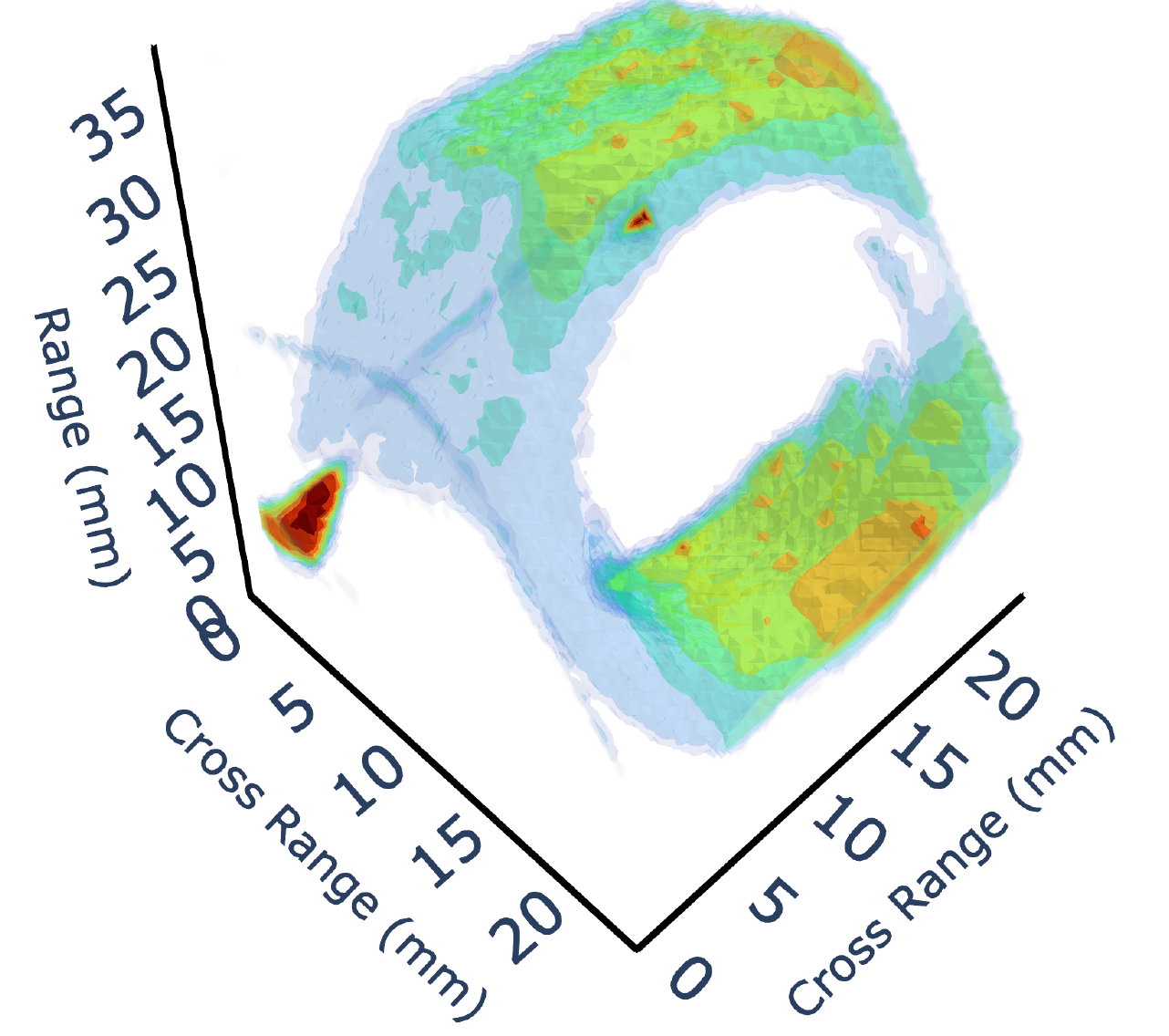}
      \end{minipage}
    \end{subfigure}
    \begin{subfigure}{0.19\textwidth}
      \centering
      \begin{minipage}{0.99\textwidth}
        \includegraphics[width=\textwidth]{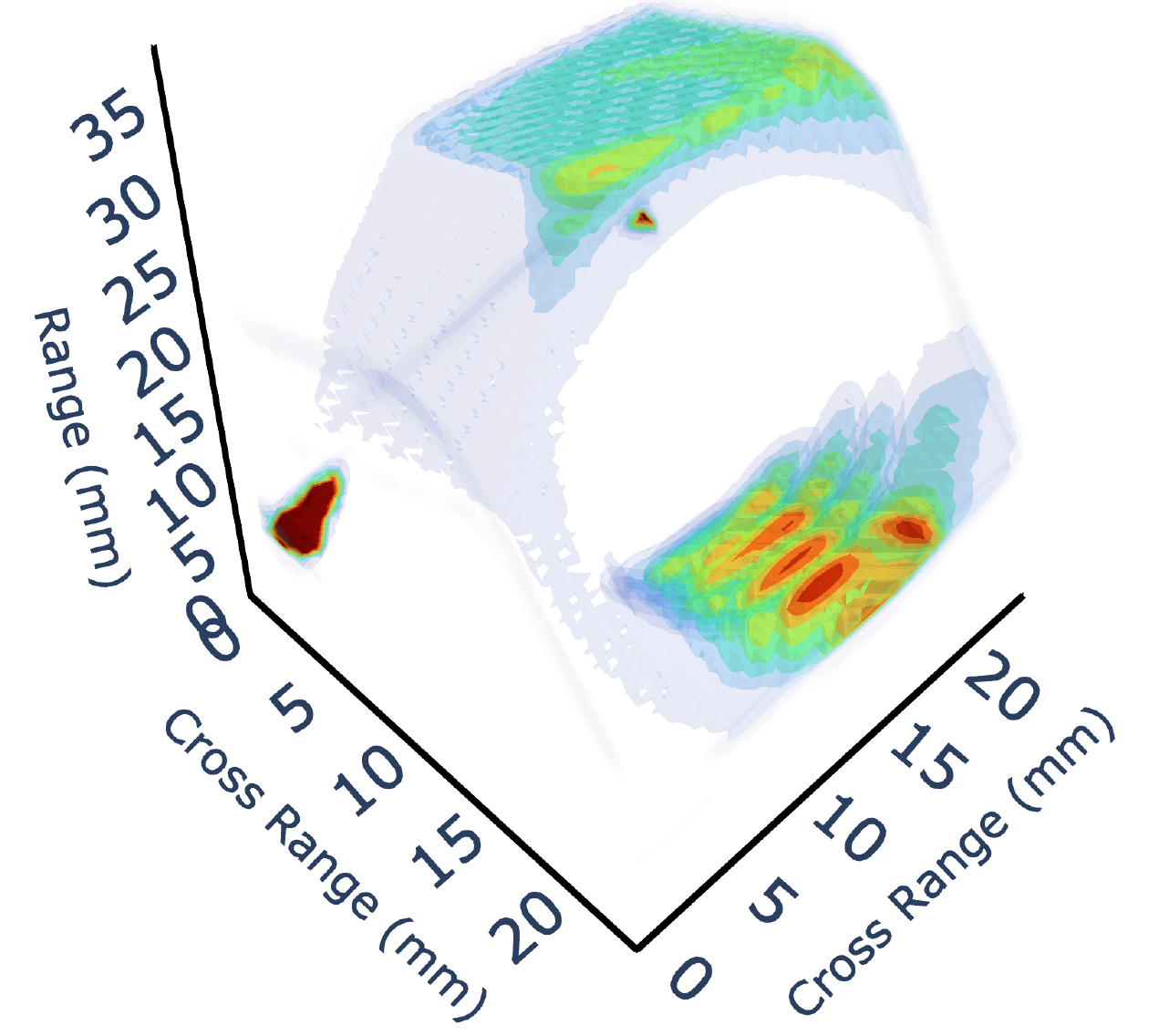}
      \end{minipage}
    \end{subfigure}
    \begin{subfigure}{0.19\textwidth}
      \centering
      \begin{minipage}{0.99\textwidth}
          \includegraphics[width=\textwidth]{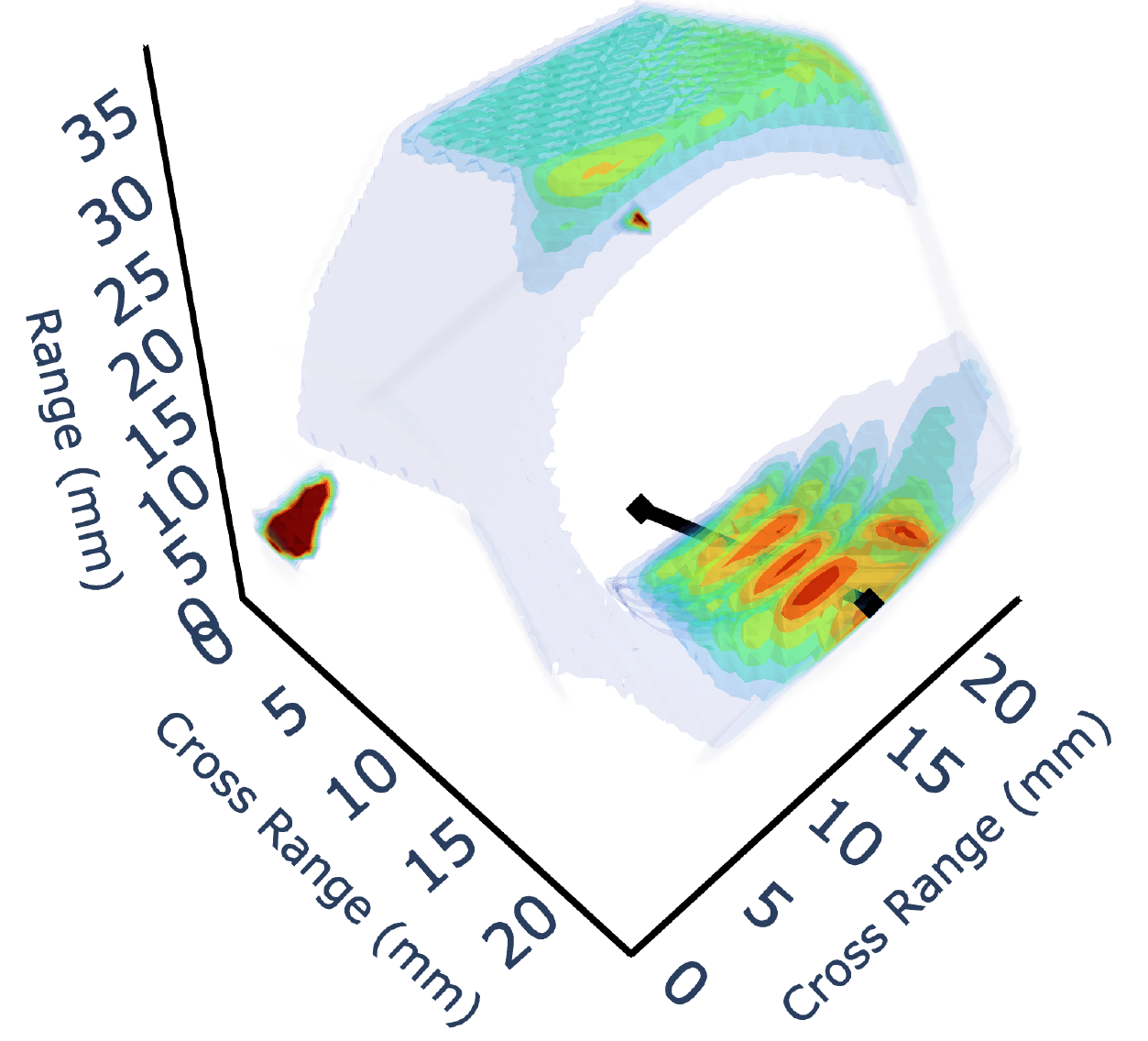}
      \end{minipage}
    \end{subfigure}
    \\
    \begin{subfigure}{0.19\textwidth}
      \centering
      \begin{minipage}{0.99\textwidth}
          \includegraphics[width=\textwidth]{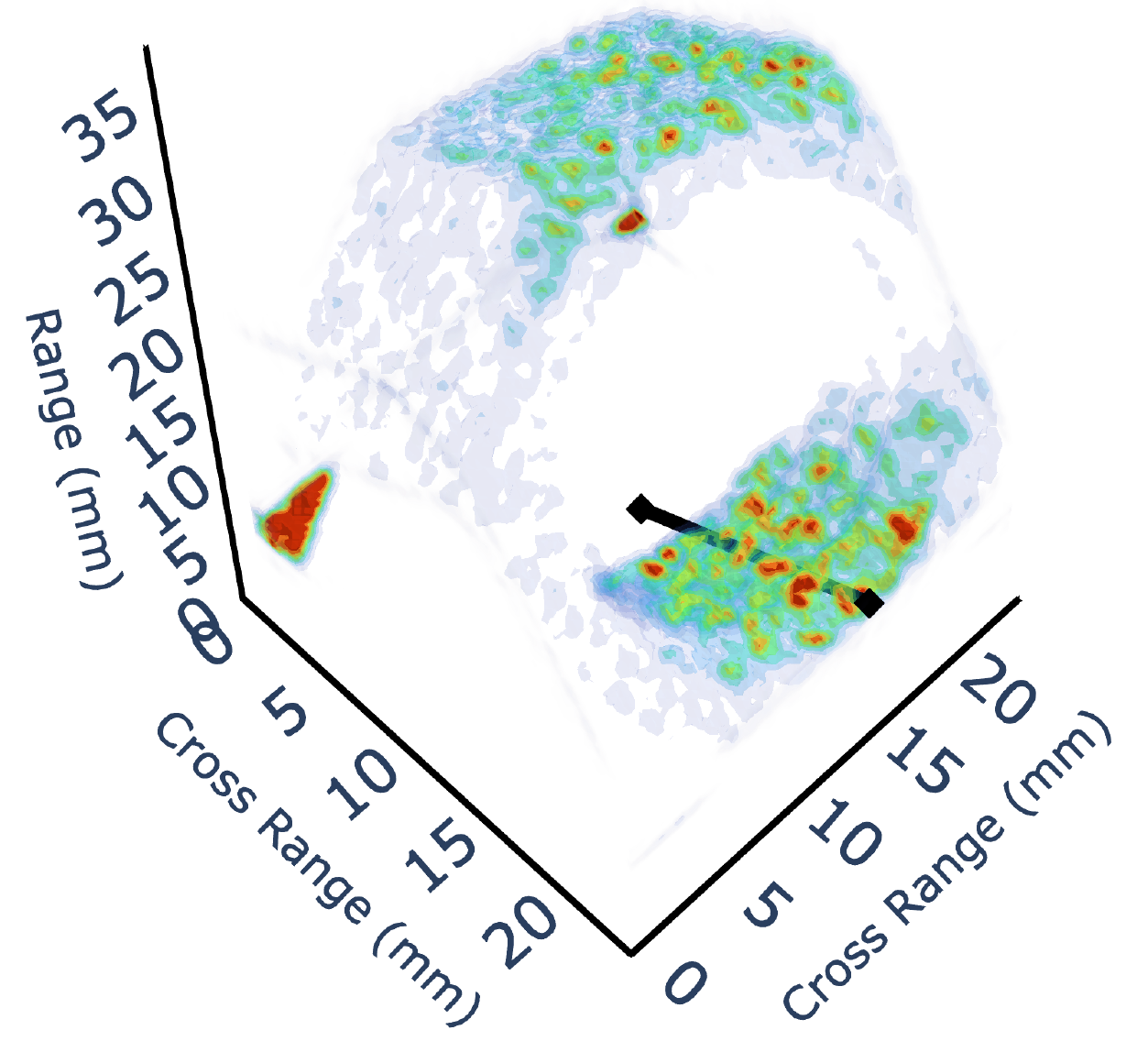}
      \end{minipage}
      \caption{Speckle Average}
    \end{subfigure}
    \begin{subfigure}{0.19\textwidth}
      \centering
      \begin{minipage}{0.99\textwidth}
          \includegraphics[width=\textwidth]{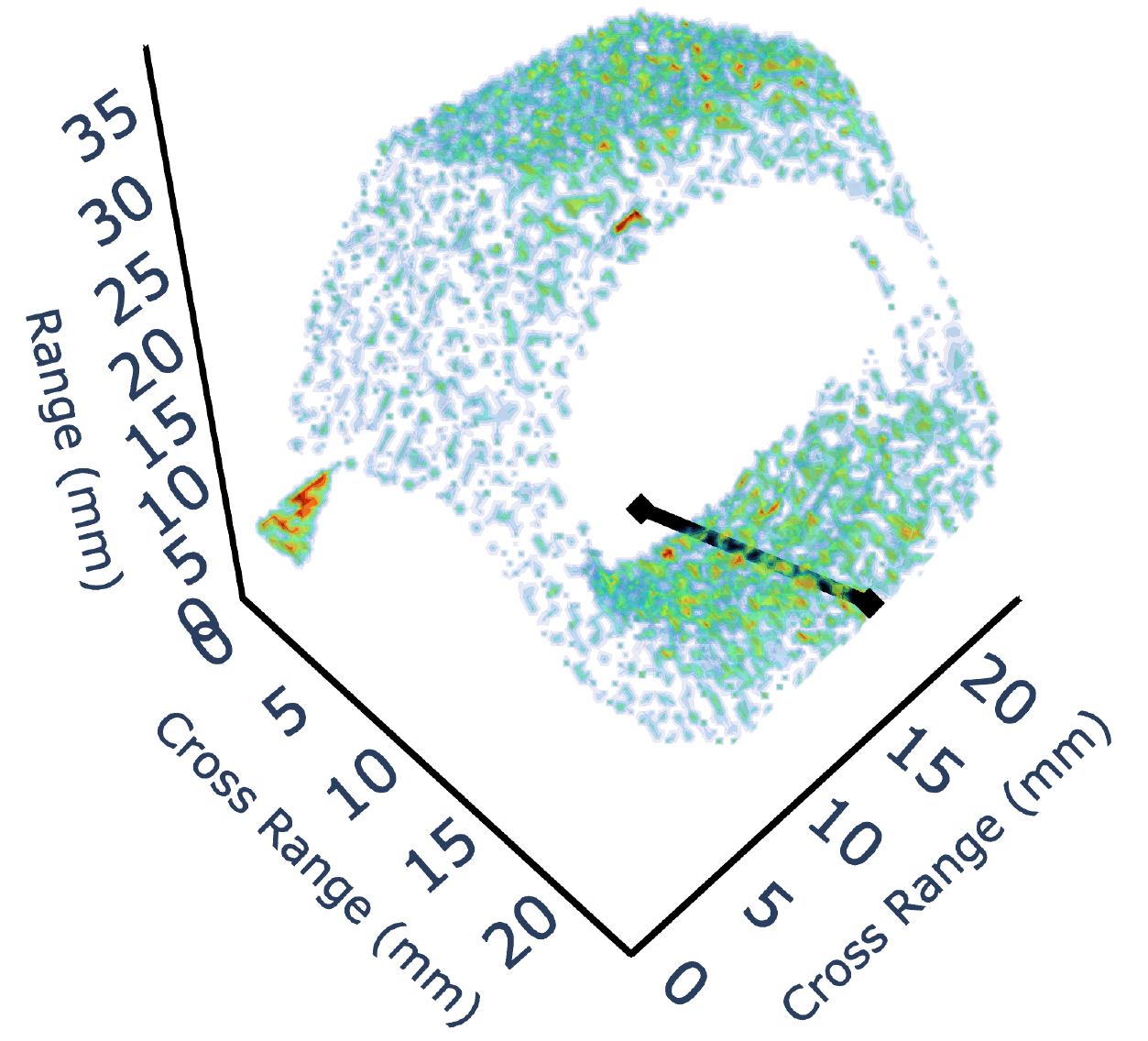}
      \end{minipage}
      \caption{$\ell_{2,1}$-regularization}
    \end{subfigure}
    \begin{subfigure}{0.19\textwidth}
      \centering
      \begin{minipage}{0.99\textwidth}
          \includegraphics[width=\textwidth]{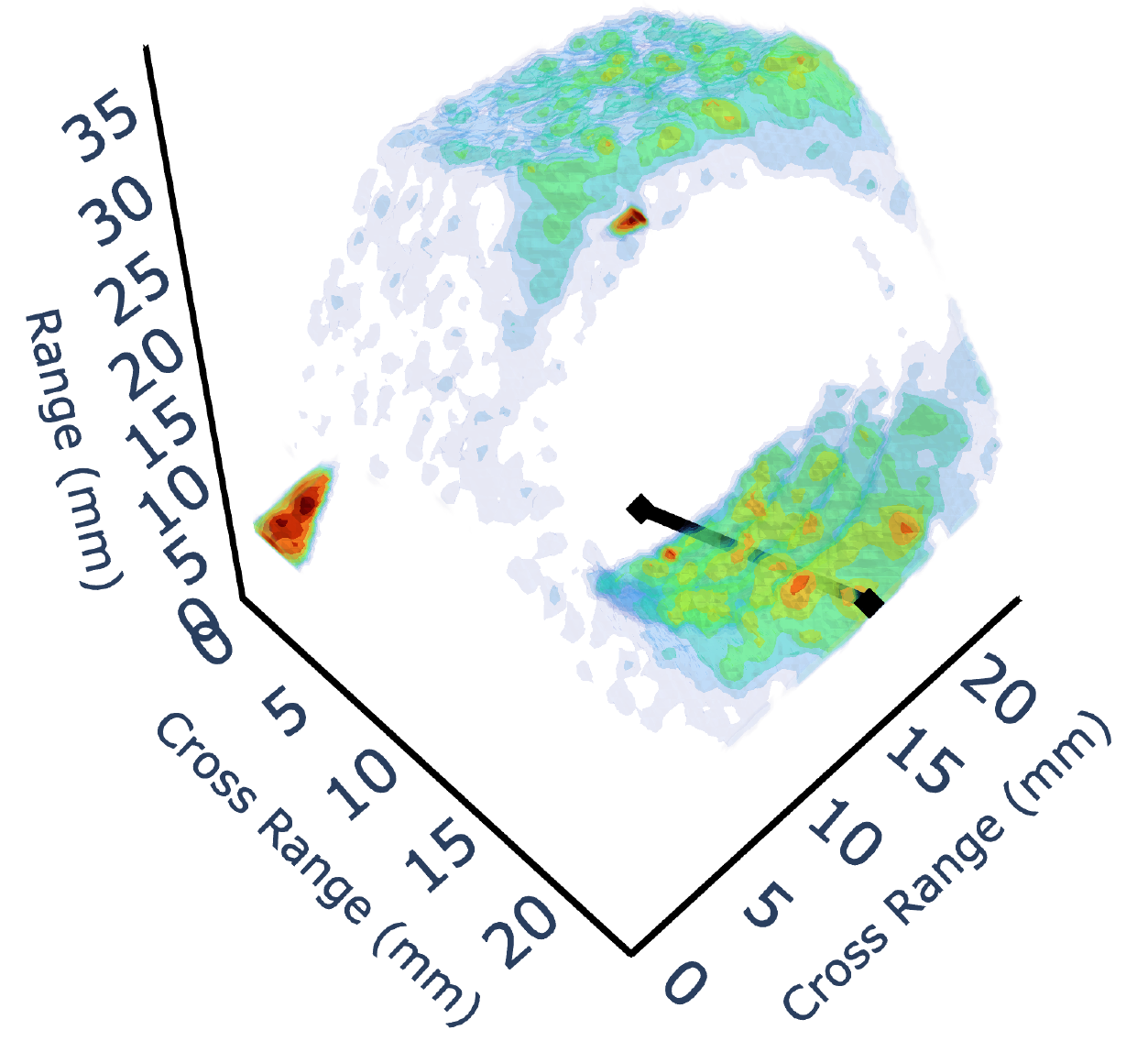}
      \end{minipage}
      \caption{TV-regularization}
    \end{subfigure}
    \begin{subfigure}{0.19\textwidth}
      \centering
      \begin{minipage}{0.99\textwidth}
          \includegraphics[width=\textwidth]{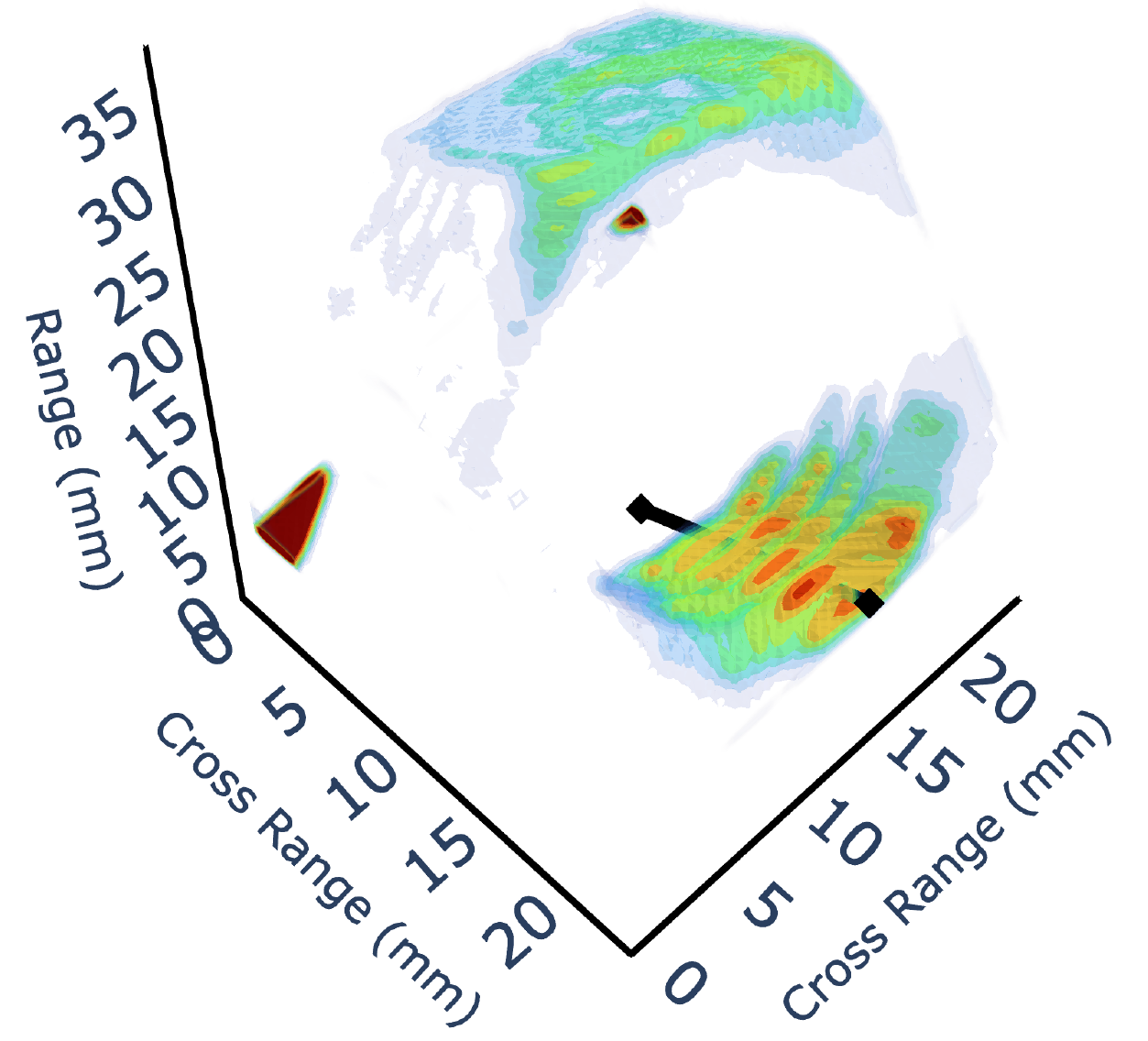}
      \end{minipage}
      \caption{CLAMP (No Aperture)}
    \end{subfigure}
    \begin{subfigure}{0.19\textwidth}
      \centering
      \begin{minipage}{0.99\textwidth}
          \includegraphics[width=\textwidth]{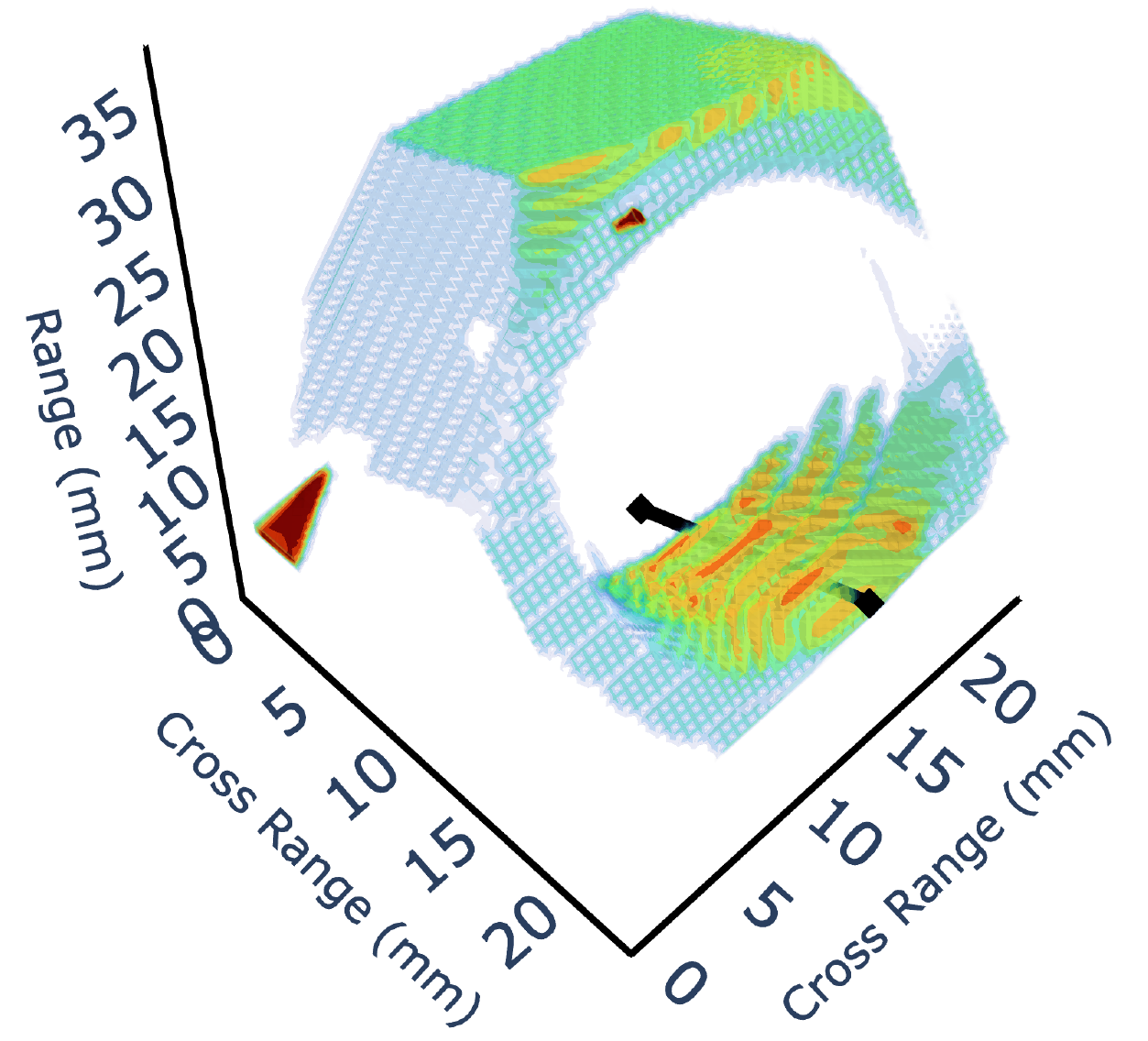}
      \end{minipage}
      \caption{\textbf{CLAMP}}
    \end{subfigure}
  \end{minipage}
  \caption{Reconstructions of the hexagonal nut from experimental data with zero-padding factors, $q=1, 1.5$ and 2. Several reconstructions show a line segment through the threads of the nut which are plotted in Figure~\ref{fig:nut-line}. The CLAMP reconstructions show the best speckle reduction, resolution, and sharpest edges.}\label{fig:nut-grid}
\end{figure*}

\begin{figure}
  \includegraphics[width=0.99\columnwidth]{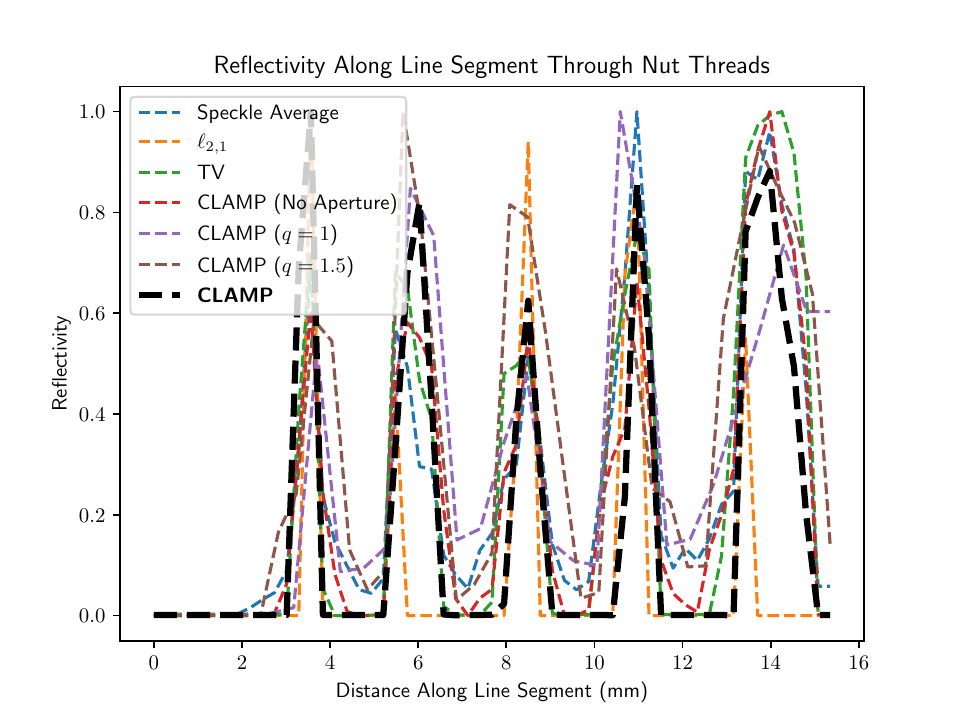}
  \caption{
    A line segment from the hexagonal nut reconstructions shown in Figure~\ref{fig:nut-line}. The CLAMP reconstruction (black) shows the sharpest and most well-defined peaks.
  }\label{fig:nut-line}
\end{figure}

In Figure~\ref{fig:nut-grid}, we show reconstructions of a hexagonal nut from nine looks for $q=1, 1.5,$ and $2$.
In each case, the CLAMP reconstructions show large improvements in reducing speckle noise.
Particularly at $q=2$, the full CLAMP method produces sharp image in which the threads of the nut are clearly visible.
This is further illustrated in Figure~\ref{fig:nut-line}, which shows the reflectivity values along the line segments shown in Figure~\ref{fig:nut-grid}.
The full CLAMP reconstruction shows the consistently sharp and narrow peaks that are expected from a machine-threaded nut, while the other methods show broader and less consistent, well-defined peaks.

\begin{figure}
    \begin{subfigure}{0.49\columnwidth}
    \centering
  \includegraphics[width=0.99\textwidth]{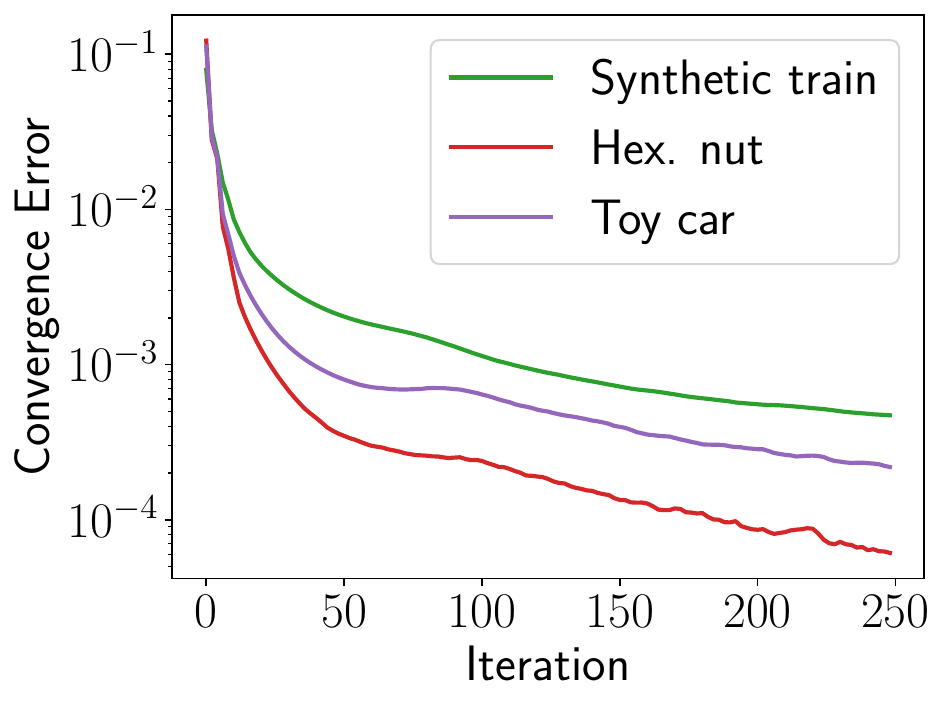}
    \end{subfigure}
        \begin{subfigure}{0.49\columnwidth}
    \centering
      \includegraphics[width=0.99\textwidth]{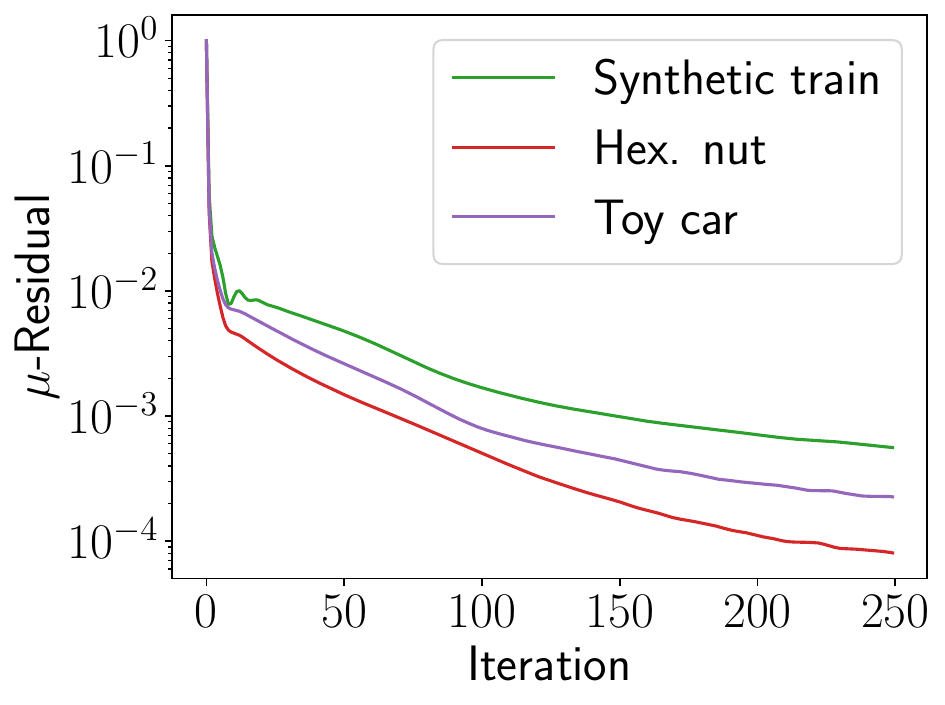}
    \end{subfigure}
      \caption{Convergence error and $\mu$-residual of CLAMP with aperture modeling for synthetic and experimental data.
      CLAMP converged to an error of less than $10^{-3}$ in all cases.
  }\label{fig:convergence-error}
\end{figure}

In Figure~\ref{fig:convergence-error}, we show the the convergence error of CLAMP, defined in~\eqref{eq:mace-convergence-error}, and a normalized $\mu$-residual averaged across looks, defined in~\eqref{eq:residual}, 
at each iteration of CLAMP for each dataset with $q=2$.
All instances of CLAMP converge to an error of less than $10^{-3}$.
Similarly, the $\mu$-residual converges, indicating our iterative approach to solving~\eqref{eq:mu2} is converging to the exact solution.

\section{Conclusion}\label{sec:conclusion}
    In this paper, we introduced CLAMP, a majorized PnP algorithm for multi-look coherent lidar image reconstruction.
    CLAMP is built on the MACE framework and consists of three main components:
    (1) an accurate surrogate physics-based model of coherent lidar, (2) an efficient method to invert the blurring effects of the aperture and (3) a deep 3D image prior model.
    Together, these components enable CLAMP to produce high-resolution images with reduced speckle.

    To demonstrate its effectiveness, we applied CLAMP to synthetic and measured coherent lidar data.
    Our results show that CLAMP produces high-quality 3D images with reduced speckle and noise compared to speckle-averaging, while maintaining fidelity to multi-look data.
    Furthermore, ablating the aperture model in CLAMP highlights the importance of our method in improving resolution.

    Finally, we formalized the use of surrogate optimization in the MACE framework and proved convergence to the exact consensus equilibrium solution for a general class of surrogate functions.
    Our formalization provides a theoretical foundation for the use of majorization-minimization in consensus optimization problems, is general and applicable to many other imaging problems, and includes many existing image reconstruction algorithms as special cases.
\begin{appendices}

\section{Quadratic Surrogate}\label{ap:quadratic-surrogate}

\noindent To further reduce the computational cost of the proximal operator in~\eqref{eq:surrogate-prox}, we follow~\cite{sridharFastAlgorithmsModelbased} and construct a quadratic first-order surrogate, $\widehat{Q}_{\ell}$, of $\hat{f}_{\ell}$, which is then a first-order surrogate of the original function $f_{\ell}$. 
For the remainder of this section, we drop the $\ell$ subscript for simplicity, but note that the following surrogate is formed for each look $\ell = 1,\ldots,L$.

To construct the surrogate, we use the approach in \cite[Section IV]{yuAcceleratedLineSearch2006}, which has two primary components: determining the interval of validity for the surrogate, which depends on $r^\prime$ and $v$, and determining the surrogate itself.  
We first obtain a convex surrogate, $Q$, of $\Hat{f}$ by a first-order approximation at $r'$ of the logarithm in \eqref{eq:EM-surrogate}:
\begin{equation}\label{eq:convex-surrogate}
Q\left(r; r' \right) = \sum\limits_{j=1}^{n} \frac{r_{j}}{r'_{j}} + \frac{1}{r_{j}}\left ( c_j + \left \vert \mu_{j} \right \vert^{2} \right ).
\end{equation}
Treating each $j$ separately, it suffices to form a surrogate the function
\begin{equation} \label{eq:g}
\rho_j(r_j) = \frac{r_j}{r_j^\prime} + \frac{c_j + \left \vert \mu_{j} \right \vert^{2}}{r_j} + \frac{1}{2 \sigma^2} (r_j - v_j)^2.
\end{equation}
We use (11) and (12) of \cite{yuAcceleratedLineSearch2006} to fit a quadratic surrogate of $\rho_j$ by a linear interpolation of its derivatives
given by
\begin{equation}\label{eq:quadratic-surrogate}
    \widehat{\rho}_j(r_j) = \frac{a_{j}}{2}r_j^{2}+ b_{j}r_j.
\end{equation}
Summing over $j$ yields the surrogate $\widehat{Q}$ for $Q$.

The surrogate $\widehat{\rho}_j$ is constructed by a linear interpolation of the derivative $\rho'_j$ at two points: $r_j'$ and $\xi_j$. 
The point $\xi_j$ is chosen so that it is in the direction of the minimum of $\rho_j$ from $r_j'$.
When $\rho_j'(r_j') > 0$, we 
will need $r_j < r_j'$ to minimize $\rho$, while if $\rho_j'(r_j') < 0$ we will need $r_j > r_j'$.  Hence we select a scaling factor $\beta > 1$ and choose $\xi_j = r_j' / \beta$ in the first case and $\xi_j = \beta r_j'$ in the second.  This defines an interval of validity of the surrogate  between $r_j'$ and $\xi_j$, which we denote $\mathcal{I}_j$.  

The reflectivity update in~\eqref{eq:surrogate-prox} is then replaced by the proximal map of $\widehat{Q}$ at $v$, which is
\begin{equation}\label{eq:r-update}
r' \gets \underset{r_j \in \mathcal{I}_j, \forall j}{\operatorname{argmin}}\left\{ \widehat{Q}(r;r') + \frac{1}{2\sigma^{2}}\ \lVert{r-v}\rVert^2 \right\}.
\end{equation} 
Since the objective is separable, this simplifies to
\begin{equation}\label{eq:r-update2}
    r'_j \gets \frac{v_j - \sigma^2b_j}{1+\sigma^2a_j}
\end{equation}
but clipped to lie in the interval $\mathcal{I}_j$.
Finally, we empirically choose $\beta=1+\frac{2}{\log(k)}$ where $k$ is the CLAMP iteration number, which ensures that $\beta$ decays to 1 more slowly than the convergence of $r'$.

\section{Proof of Theorem~\ref{THM:MAJORIZATION}}\label{ap:majorization}
In this section, we prove the convergence of majorized MACE to a fixed point of the original MACE equation when the surrogate functions meet the conditions of Definition~\ref{def:surrogate}.

We begin by stating a few lemmas before proving the main results.
First, a crucial, perhaps intuitive, fact is that the objective function and its surrogate have the same subgradients at the point of approximation. This is stated in Lemma~\ref{cor:subgrad} and is a result of the smoothness assumption in Definition~\ref{def:surrogate}.
\begin{lemma}\label{cor:subgrad}
Let $\hat{f}\in \mathcal{S}_{L,p} \left( f, \xi \right)$ (see Definition~\ref{def:surrogate}).
Then $\partial f(\xi) = \partial \hat{f}(\xi)$.
\end{lemma}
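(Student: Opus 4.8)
The plan is to prove the two inclusions $\partial f(\xi) \subseteq \partial \hat f(\xi)$ and $\partial \hat f(\xi) \subseteq \partial f(\xi)$ directly from the subgradient inequality, using the three pieces of Definition~\ref{def:surrogate}: majorization ($e \triangleq \hat f - f \ge 0$ with $e(\xi) = 0$), $L$-smoothness of $e$ together with $\nabla e(\xi) = 0$, and $p$-strong convexity of $\hat f$ with $p \ge L$. Both $f$ and $\hat f$ are finite convex functions on all of $\mathbb{R}^n$, so their subdifferentials are nonempty everywhere and there are no domain subtleties.

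The inclusion $\partial f(\xi) \subseteq \partial \hat f(\xi)$ is the easy one and uses only majorization. Given $w \in \partial f(\xi)$, for every $x$ we have $\hat f(x) = f(x) + e(x) \ge f(\xi) + \langle w, x - \xi\rangle + e(x) \ge f(\xi) + \langle w, x - \xi\rangle = \hat f(\xi) + \langle w, x - \xi\rangle$, where the last two steps use $e(x) \ge 0$ and $e(\xi) = 0$; hence $w \in \partial \hat f(\xi)$.

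For the reverse inclusion I would first record a quadratic upper bound on the error: by the descent lemma applied to the (not necessarily convex) $L$-smooth function $e$, together with $e(\xi) = 0$ and $\nabla e(\xi) = 0$, one gets $e(x) \le \frac{L}{2}\lVert x - \xi\rVert^2$ for all $x$. Now take $v \in \partial \hat f(\xi)$. Plain convexity of $\hat f$ is \emph{not} enough here — this is the one place the extra hypotheses are essential — so instead I use $p$-strong convexity: $\hat f(x) \ge \hat f(\xi) + \langle v, x - \xi\rangle + \frac{p}{2}\lVert x - \xi\rVert^2$. Subtracting $e(x)$ and using the bound above gives $f(x) = \hat f(x) - e(x) \ge f(\xi) + \langle v, x - \xi\rangle + \frac{p - L}{2}\lVert x - \xi\rVert^2 \ge f(\xi) + \langle v, x - \xi\rangle$, since $p \ge L$. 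Therefore $v \in \partial f(\xi)$, which completes the proof.

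The only real obstacle is conceptual rather than computational: recognizing that the naive argument ``$f = \hat f - e$ with $e$ differentiable, so subtract the gradient'' needs care because $e$ need not be convex, and that the hypothesis $p \ge L$ is precisely what forces the leftover quadratic term to be nonnegative so the two strong/Lipschitz moduli cancel. (As an alternative to the direct computation one could invoke a sum rule for subdifferentials, $\partial \hat f(\xi) = \partial(f + e)(\xi) = \partial f(\xi) + \nabla e(\xi) = \partial f(\xi)$, valid when $e$ is differentiable at $\xi$ and $f+e$ is convex; but stating that rule precisely enough to cover a nonconvex smooth perturbation is more delicate than the two-line argument above, so I would keep the direct proof.)
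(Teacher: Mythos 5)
Your proof is correct, but it takes a genuinely different route from the paper's. The paper argues via directional derivatives: since $e=\hat f-f$ is differentiable at $\xi$ with $\nabla e(\xi)=0$, the directional derivatives $f'(\xi;h)$ and $\hat f'(\xi;h)$ coincide for every $h$, and the convex subdifferential is exactly the set of vectors $u$ with $\langle u,h\rangle\le f'(\xi;h)$ for all $h$, so the two subdifferentials are equal at once — notably without ever invoking majorization or strong convexity (the paper even remarks the result holds if $\hat f$ is not strongly convex). You instead prove the two inclusions separately: majorization plus $e(\xi)=0$ gives $\partial f(\xi)\subseteq\partial\hat f(\xi)$, and for the reverse you combine the descent-lemma bound $e(x)\le\frac{L}{2}\lVert x-\xi\rVert^2$ (which is the paper's Lemma~\ref{prop:lipschitz}) with $p$-strong convexity and $p\ge L$ so the leftover quadratic is nonnegative. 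Both arguments are sound; yours is elementary and self-contained, avoiding the directional-derivative characterization of subdifferentials, while the paper's uses fewer hypotheses. One small correction to your commentary: strong convexity and $p\ge L$ are essential only for your particular second inclusion, not for the lemma itself — the reverse inclusion already follows from $\nabla e(\xi)=0$ alone (e.g., by the paper's argument, or by the sum rule you sketch, since $\hat f$ convex and $e$ differentiable at $\xi$ give $\partial\hat f(\xi)=\partial f(\xi)+\nabla e(\xi)$), so the claim that ``plain convexity of $\hat f$ is not enough'' describes a limitation of the chosen route rather than of the statement.
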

\begin{proof}
    Proof is given in the supplemental material. We note that this result holds even if $\hat{f}$ is not strongly convex. 
\end{proof}
Another critical fact is given in Lemma~\ref{lemma:surrogate}, which is adapted from~\cite[Lemma~2]{luUnifiedAlternatingDirection2016}.
\begin{lemma}\label{lemma:surrogate}
Let $\hat{f}\in \mathcal{S}_{L,p} \left( f, \xi \right)$ (see Definition~\ref{def:surrogate}), and let $u\in \partial \hat{f}(x)$ be any subgradient of $\hat{f}$ at $x$.
Then for any $y$,
\begin{align*}
f(x) - f(y) - &\left \langle u, x-y \right \rangle \\ 
& \leq \frac{1}{2}  \left( L\left\lVert {y-\xi} \right\rVert^{2} - p \left\lVert {x-y} \right\rVert ^{2} \right) .
\end{align*}
\end{lemma}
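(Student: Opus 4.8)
The plan is to isolate the approximation error $e \triangleq \hat f - f$ and invoke the three defining properties of a surrogate (Definition~\ref{def:surrogate}) in turn. First I would use the $p$-strong convexity of $\hat f$: since $u \in \partial\hat f(x)$, the strong-convexity subgradient inequality gives, for every $y$,
\[
\hat f(y)\ \ge\ \hat f(x) + \langle u,\, y-x\rangle + \tfrac{p}{2}\lVert y-x\rVert^2 ,
\]
which rearranges to $\hat f(x)-\hat f(y)-\langle u,x-y\rangle \le -\tfrac{p}{2}\lVert x-y\rVert^2$. If one prefers not to quote this inequality directly, it follows by applying ordinary convexity to $\hat f(\cdot)-\tfrac p2\lVert\cdot\rVert^2$ together with the observation that $u - px \in \partial\!\big(\hat f - \tfrac p2\lVert\cdot\rVert^2\big)(x)$.

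Next I would transfer this bound to $f$ by writing $\hat f = f + e$ and substituting into the left side, which yields
\[
f(x) - f(y) - \langle u, x-y\rangle \ \le\ -\tfrac{p}{2}\lVert x-y\rVert^2 + \big(e(y) - e(x)\big).
\]
It then suffices to show $e(y)-e(x)\le \tfrac L2\lVert y-\xi\rVert^2$. For this, the majorization property $\hat f \ge f$ on all of $\mathbb{R}^n$ gives $e(x) = \hat f(x) - f(x) \ge 0$, so it is enough to bound $e(y)$ from above. By the smoothness property, $e$ is differentiable with $L$-Lipschitz gradient and satisfies $e(\xi)=0$ and $\nabla e(\xi)=0$; applying the descent lemma (the quadratic upper bound for functions with Lipschitz gradient) at the base point $\xi$ gives $e(y) \le e(\xi) + \langle \nabla e(\xi), y-\xi\rangle + \tfrac L2\lVert y-\xi\rVert^2 = \tfrac L2 \lVert y-\xi\rVert^2$. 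Combining the two displays yields the claimed inequality.

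I do not expect a genuine obstacle here; the argument is short. The only points needing care are that the descent lemma must be applied at $\xi$ (not at $x$ or $y$) — this is precisely what produces the $\lVert y-\xi\rVert^2$ term — and that discarding $-e(x)$ relies on the majorization property holding globally. Note also that neither the hypothesis $p \ge L$ nor the convexity of $f$ itself is used in this lemma; those are needed only in the subsequent contraction/fixed-point arguments.
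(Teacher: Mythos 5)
Your proof is correct and follows essentially the same route as the paper's: both apply the strong-convexity subgradient inequality for $\hat f$ at $x$, use majorization to discard $e(x)=\hat f(x)-f(x)\ge 0$, and bound $e(y)\le \tfrac{L}{2}\lVert y-\xi\rVert^2$ via the descent lemma at the base point $\xi$ (the paper packages this last step as a separate lemma citing Nesterov). Your closing remarks — that $p\ge L$ and convexity of $f$ are not needed here — are also accurate; the paper likewise only invokes $p_i\ge L_i$ later, when this lemma is applied.
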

\begin{proof}
Proof is given in the supplemental material.
\end{proof}

We are now ready to prove part~\ref{THM:PART1} of Theorem~\ref{THM:MAJORIZATION}.

\begin{proof}[Proof of Theorem~\ref{THM:MAJORIZATION} part~\ref{THM:PART1}]
    Let $(\mathbf{w}^*, \mathbf{r}^*)$ be a fixed point of~\eqref{eq:aug-mann}. Since $\mathbf{r}^* = \Hat{\mathbf{F}}^* \mathbf{w}^*$, where $\Hat{\mathbf{F}}^*$ is the stacked proximal operator of surrogates $\Hat{f}_i(x; r_i^*)$, we have
    \begin{equation}
        -\frac{1}{\sigma^2} \left( r_i^* - w_i^* \right) \in \partial \Hat{f}_i(r^*; r^*),
    \end{equation}
    and by Lemma~\ref{cor:subgrad},
    \begin{equation}
        -\frac{1}{\sigma^2} \left( r_i^* - w_i^* \right) \in \partial f_i(r^*),
    \end{equation}
    for $i=1,\dots,N$.

    Thus, $\mathbf{r}^* = \mathbf{F}\mathbf{w}^*$.
    Since $\mathbf{w}^* = \left(2\mathbf{G}-I \right) \left(2 \Hat{\mathbf{F}}^* - I \right)  \mathbf{w}^*$, we get $\mathbf{w}^* = \left(2\mathbf{G}-I \right) \left(2 \mathbf{F} - I \right) \mathbf{w}^*$, proving $\mathbf{w}^*$ is a MACE solution.
\end{proof}

In order to prove Theorem~\ref{THM:MAJORIZATION} part~\ref{THM:PART2}, we prove convergence of an algorithm that is equivalent to Algorithm~\ref{alg:M-MACE}. 
First, it is noted in~\cite{buzzardPlugandPlayUnpluggedOptimizationFree2018} and elsewhere that using Mann iterations to solve the MACE equations, as written in~\eqref{eq:mann}, is equivalent up to a change of variables to a form of ADMM~\cite{boydDistributedOptimizationStatistical2010}\@.

We present the equivalency and prove convergence for $\rho=1/2$, though with extra bookkeeping, and using the appropriate variant of ADMM, one can generalize this result for $\rho \in (0,1)$.

To begin, consider one iteration of Algorithm~\ref{alg:M-MACE}, written as in \eqref{eq:aug-mann}, using $\rho = 1/2$,
\begin{align*}
    \mathbf{w}^{(k+1)} &= \frac{1}{2} (2 \mathbf{G} - \mathbf{I})(2\Hat{\mathbf{F}}^{(k)} -\mathbf{I}) \mathbf{w}^{(k)} + \frac{1}{2} \mathbf{w}^{(k)}, \\
    \mathbf{r}^{(k+1)} &= \Hat{\mathbf{F}}^{(k)} \mathbf{w}^{(k)}, \nonumber
\end{align*}
where we recall that $\Hat{\mathbf{F}}^{(k)} \mathbf{w}^{(k)} = \Hat{\mathbf{F}}^{(k)} (\mathbf{w}^{(k)}; \mathbf{r}^{(k)})$.  
Since $\mathbf{G}$ is linear and $\mathbf{G}\mathbf{G} = \mathbf{G}$, some algebra shows that applying $\mathbf{G}$ to both sides of the first equation yields
$\mathbf{G}\mathbf{w}^{(k+1)} = \mathbf{G}\mathbf{r}^{(k+1)}$, and hence $\mathbf{G}\mathbf{w}^{(k+1)} = \mathbf{G} \Hat{\mathbf{F}}^{(k)} \mathbf{w}^{(k)}$ from the second equation.

Using this to re-express the first equation, recalling that $\overline{\mathbf{r}}$ is one averaged image in $\mathbf{G} \mathbf{r}$, and noting that $\mathbf{w}$ and $\mathbf{r}$ may be updated in either order, the $i$-th component $(r_i, w_i)$ satisfies
\begin{align}
    \label{eq:dr-admm}
    r_i^{(k+1)} &= \Hat{F}_i^{(k)}w_i^{(k)}, \\
    \label{eq:dr-addm1}
    w_i^{(k+1)} &= w_{i}^{(k)} + (\overline{\mathbf{r}}^{(k+1)}- r_{i}^{(k+1)}) \\ 
    \label{eq:dr-admm2}
    & \quad + (\overline{\mathbf{r}}^{(k+1)}- \overline{\mathbf{r}}^{(k)}).
\end{align}

Now, we introduce the variables $u_i^{(k)} = \overline{\mathbf{r}}^{(k)} - w_i^{(k)}$ and $z^{(k+1)} = \overline{\mathbf{r}}^{(k+1)} + \overline{\mathbf{u}}^{(k)}$.
Since $\overline{\mathbf{w}}^{(k+1)} = \overline{\mathbf{r}}^{(k+1)}$, we have $\overline{\mathbf{u}}^{(k+1)} = 0$ for $k \geq 1$, and so for $k \geq 3$, the updates~\eqref{eq:dr-admm}--\eqref{eq:dr-admm2} can be written equivalently as
\begin{align}\label{eq:admm1}
    r_{i}^{(k+1)} &= \Hat{F}_i^{(k)} \left(z^{(k)} - u_i^{(k)} \right), \\
    \label{eq:admmz}
    z^{(k+1)} &= \frac{1}{N}\sum\limits_{i=1}^{N}\left(r_{i}^{(k+1)}+u_{i}^{(k)}\right), \\
    \label{eq:admm2}
    u_{i}^{(k+1)} &= u_{i}^{(k)} + (r_{i}^{(k+1)} - z ^{(k+1)}),
\end{align}
where \eqref{eq:admm2} follows from \eqref{eq:dr-admm2} using $w_i^{(k)} = \overline{\mathbf{r}}^{(k)} - u_i^{(k)}$ and simplifying.  
The updates \eqref{eq:admm1}--\eqref{eq:admm2} are a form of consensus ADMM~\cite[Chapter 7]{boydDistributedOptimizationStatistical2010} to solve the problem,
\begin{equation}\label{eq:consensus-problem}
    \begin{aligned}
    \textrm{minimize} \quad & f(\mathbf{r}) = \sum_{i=1}^{N} f_i(r_i) \\
    \textrm{subject to} \quad & r_i = z,\\
    \end{aligned}
\end{equation}
except the updates of $r_i$ in~\eqref{eq:admm1} are given by proximal maps of surrogate functions instead of proximal maps of $f_i$.

In Theorem~\ref{thm:admm-converge}, we prove that despite this difference, the iterations~\eqref{eq:admm1}-\eqref{eq:admm2}, converge to a Karush-Kuhn-Tucker (KKT) point of~\eqref{eq:consensus-problem}, which is sufficient for optimality.
A KKT point can be found from the Lagrangian of~\eqref{eq:consensus-problem},
\begin{equation}\label{eq:lagrangian}
    \mathcal{L}(\mathbf{r}, z, \mathbf{u}) = f(\mathbf{r}) + \frac{1}{\sigma^2} \sum_{i=1}^N \langle u_i, r_i - z \rangle,
\end{equation}
where $\mathbf{u}$ is the dual variable, and $\langle \cdot, \cdot \rangle$ is the inner-product. 
Specifically, the point $(\mathbf{r}^{*}, z^{*}, \mathbf{u}^{*})$ is a KKT point if and only if
\begin{align}
    -\frac{1}{\sigma^2}u_i^* \in &\partial f_i(r_i^*), \\
    r_{i} ^{*} &= z ^{*},
\end{align}
for $i=1,\dots,N$.

\begin{theorem}\label{thm:admm-converge}
    Assume the hypotheses of Theorem~\ref{THM:MAJORIZATION},  and that~\eqref{eq:consensus-problem} has a KKT point. Then the sequence $\left\{\left(\mathbf{r}^{(k)}, z^{(k)}, \mathbf{u}^{(k)}\right)\right\}$ generated by the iterations~\eqref{eq:admm1}-\eqref{eq:admm2} converges to a KKT point of~\eqref{eq:consensus-problem}.
\end{theorem}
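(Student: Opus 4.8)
The plan is to adapt the standard convergence proof for consensus ADMM (\cite[Chapter 7]{boydDistributedOptimizationStatistical2010}, \cite{luUnifiedAlternatingDirection2016}) to the situation where the $r$-updates in~\eqref{eq:admm1} are proximal maps of the surrogates $\hat{f}_i^{(k)}$ rather than of $f_i$. The discrepancy this introduces is controlled entirely by Lemma~\ref{lemma:surrogate}, and the strong-convexity requirement $p_i \ge L_i$ from Definition~\ref{def:surrogate} is exactly what makes the resulting Lyapunov argument close. Fix any KKT point $(\mathbf{r}^*, z^*, \mathbf{u}^*)$ of~\eqref{eq:consensus-problem}; this gives $-\frac{1}{\sigma^2}u_i^* \in \partial f_i(r_i^*)$, $r_i^* = z^*$, and, from $z$-stationarity of the Lagrangian~\eqref{eq:lagrangian}, $\sum_i u_i^* = 0$. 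The candidate Lyapunov function is $\Phi^{(k)} = N\|z^{(k)} - z^*\|^2 + \sum_i \|u_i^{(k)} - u_i^*\|^2 + \sigma^2 \sum_i p_i \|r_i^{(k)} - r_i^*\|^2$, the last, ``augmenting'' term being the novelty forced by the surrogates.

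First I would write the optimality condition for~\eqref{eq:admm1}, namely $-\frac{1}{\sigma^2}(r_i^{(k+1)} - z^{(k)} + u_i^{(k)}) \in \partial \hat{f}_i^{(k)}(r_i^{(k+1)})$, and use~\eqref{eq:admm2} to rewrite its left side as $g_i^{(k+1)} := -\frac{1}{\sigma^2}(u_i^{(k+1)} + z^{(k+1)} - z^{(k)})$. Applying Lemma~\ref{lemma:surrogate} with $x = r_i^{(k+1)}$, $\xi = r_i^{(k)}$, $u = g_i^{(k+1)}$, $y = r_i^*$, and adding the convexity inequality $f_i(r_i^*) - f_i(r_i^{(k+1)}) + \frac{1}{\sigma^2}\langle u_i^*, r_i^* - r_i^{(k+1)}\rangle \le 0$ that comes from $-\frac{1}{\sigma^2}u_i^* \in \partial f_i(r_i^*)$, the function values cancel and I obtain $\langle u_i^{(k+1)} - u_i^* + z^{(k+1)} - z^{(k)},\, r_i^{(k+1)} - r_i^*\rangle \le \frac{\sigma^2}{2}\bigl(L_i\|r_i^{(k)} - r_i^*\|^2 - p_i\|r_i^{(k+1)} - r_i^*\|^2\bigr)$. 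Summing over $i$ and using $\sum_i u_i^{(k)} = 0$ (valid for $k \ge 1$, as already noted), $\sum_i r_i^{(k+1)} = N z^{(k+1)}$, $\sum_i u_i^* = 0$, $r_i^* = z^*$, and $r_i^{(k+1)} - z^{(k+1)} = u_i^{(k+1)} - u_i^{(k)}$, the inner products collapse via the polarization identity $2\langle a-b, a-c\rangle = \|a-b\|^2 + \|a-c\|^2 - \|b-c\|^2$ into differences of the $z$- and $u$-parts of $\Phi$ plus the successive-difference terms $N\|z^{(k+1)} - z^{(k)}\|^2 + \sum_i\|u_i^{(k+1)} - u_i^{(k)}\|^2$. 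The key point is the right-hand side: the stray term $\sum_i L_i\|r_i^{(k)} - r_i^*\|^2$ would normally destroy a monotone-descent argument, but $L_i \le p_i$ lets me bound it by $\sum_i p_i\|r_i^{(k)} - r_i^*\|^2$, so that after multiplying by $\sigma^2$ it telescopes against $\sigma^2 \sum_i p_i\|r_i^{(k+1)} - r_i^*\|^2$. The upshot is $\Phi^{(k+1)} \le \Phi^{(k)} - N\|z^{(k+1)} - z^{(k)}\|^2 - \sum_i\|u_i^{(k+1)} - u_i^{(k)}\|^2$ for all sufficiently large $k$.

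The rest is routine. Since $\Phi^{(k)} \ge 0$ is nonincreasing it converges, and the successive differences are square-summable, so $z^{(k+1)} - z^{(k)} \to 0$ and $u_i^{(k+1)} - u_i^{(k)} \to 0$; hence $r_i^{(k+1)} - z^{(k+1)} = u_i^{(k+1)} - u_i^{(k)} \to 0$ and $r_i^{(k+1)} - r_i^{(k)} \to 0$, so consensus feasibility holds in the limit. Boundedness of $\Phi^{(k)}$ gives a subsequence $(\mathbf{r}^{(k_j)}, z^{(k_j)}, \mathbf{u}^{(k_j)}) \to (\mathbf{r}^\infty, z^\infty, \mathbf{u}^\infty)$ with $r_i^\infty = z^\infty$ for all $i$. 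To see this is a KKT point I invoke Lemma~\ref{lemma:surrogate} once more with a generic $y$: passing to the limit in $f_i(r_i^{(k+1)}) - f_i(y) - \langle g_i^{(k+1)}, r_i^{(k+1)} - y\rangle \le \frac{1}{2}(L_i\|y - r_i^{(k)}\|^2 - p_i\|r_i^{(k+1)} - y\|^2)$, using continuity of the finite convex $f_i$, $g_i^{(k+1)} \to -\frac{1}{\sigma^2}u_i^\infty$, and $r_i^{(k)}, r_i^{(k+1)} \to r_i^\infty$, yields $f_i(y) \ge f_i(r_i^\infty) - \frac{1}{\sigma^2}\langle u_i^\infty, y - r_i^\infty\rangle$ for all $y$, i.e. $-\frac{1}{\sigma^2}u_i^\infty \in \partial f_i(r_i^\infty)$. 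Finally, to upgrade to convergence of the whole sequence I re-run the Lyapunov bound with $(\mathbf{r}^\infty, z^\infty, \mathbf{u}^\infty)$ in the role of $(\mathbf{r}^*, z^*, \mathbf{u}^*)$: the corresponding $\Phi^{(k)}$ converges and tends to $0$ along the subsequence, hence tends to $0$, forcing $(\mathbf{r}^{(k)}, z^{(k)}, \mathbf{u}^{(k)}) \to (\mathbf{r}^\infty, z^\infty, \mathbf{u}^\infty)$. I expect the main obstacle to be precisely the bookkeeping around the surrogate: making sure the $L_i\|r_i^{(k)} - r_i^*\|^2$ contribution is correctly absorbed into the augmented Lyapunov function using $p_i \ge L_i$ (the same mechanism that makes ordinary majorization-minimization monotone), and — since the surrogates $\hat{f}_i^{(k)}$ change each iteration — being careful to recover the limiting inclusion for $f_i$ through Lemma~\ref{lemma:surrogate} rather than by naively taking a limit of $\partial \hat{f}_i^{(k)}$.
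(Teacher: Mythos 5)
Your proposal is correct and takes essentially the same route as the paper: a Lyapunov/Fej\'er monotonicity argument built from the surrogate inequality of Lemma~\ref{lemma:surrogate} together with the KKT conditions (as in Lemma~\ref{lem:fejer}), square-summable successive differences of $z^{(k)}$ and $\mathbf{u}^{(k)}$, identification of a subsequential limit as a KKT point by passing to the limit in the surrogate inequality, and whole-sequence convergence by re-running the Lyapunov bound with the limit point in place of $(\mathbf{r}^*, z^*, \mathbf{u}^*)$. The only cosmetic differences are that you weight the $\mathbf{r}$-block of the Lyapunov function by $p_i$ and absorb $L_i \le p_i$ on the right-hand side (the paper's $E^{(k)}$ uses $L_i$ weights and absorbs $p_i \ge L_i$ inside its saddle-point bound), and you invoke the subgradient inequality at the KKT point directly rather than the Lagrangian saddle-point condition, which is the same mechanism.
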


To prove Theorem~\ref{thm:admm-converge}, we first use Proposition~\ref{prop:subgradients} to characterize the subgradients of the surrogate functions.  Then in Lemma~\ref{lem:fejer}, we state the vital property that the distance between the iterates and a KKT point decreases at each iteration.
\begin{prop}\label{prop:subgradients}
    Assume the hypotheses of Theorem~\ref{THM:MAJORIZATION} and let $\left\{\left(\mathbf{r}^{(k)}, z^{(k)}, \mathbf{u}^{(k)}\right)\right\}$ be generated by~\eqref{eq:admm1}-\eqref{eq:admm2}.  Then 
    \begin{align}
    \label{eq:subgradient_f}
    -\frac{1}{\sigma^{2}} \left(r_{i}^{(k+1)}-z^{(k)} + u_{i}^{(k)} \right) &\in \partial \Hat{f}_{i}^{(k)}(r_{i}^{(k+1)}),
    \end{align}
    for $i=1,\ldots,N$.
\end{prop}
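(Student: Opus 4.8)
The plan is to read \eqref{eq:subgradient_f} as the first-order optimality condition (Fermat's rule) for the proximal minimization that \emph{defines} $r_i^{(k+1)}$, so that the proposition is essentially immediate. First I would recall from the update \eqref{eq:admm1} that $r_i^{(k+1)} = \hat{F}_i^{(k)}\bigl(z^{(k)} - u_i^{(k)}\bigr)$, and that $\hat{F}_i^{(k)}$ is by construction the proximal map of the surrogate $\hat{f}_i^{(k)}$ with parameter $\sigma^2$; equivalently, $r_i^{(k+1)}$ is the minimizer of
\begin{equation*}
\Phi(r) \;=\; \hat{f}_i^{(k)}(r) \;+\; \frac{1}{2\sigma^2}\left\lVert r - \bigl(z^{(k)} - u_i^{(k)}\bigr)\right\rVert^2 .
\end{equation*}
Because Definition~\ref{def:surrogate} guarantees that $\hat{f}_i^{(k)}$ is $p_i$-strongly convex, $\Phi$ is strongly convex and coercive, so this minimizer exists and is unique and the update is well posed.

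Next I would apply the subdifferential sum rule at the minimizer. Since $\hat{f}_i^{(k)}\colon\mathbb{R}^n\to\mathbb{R}$ is convex and finite everywhere (hence $\operatorname{dom}\hat{f}_i^{(k)}=\mathbb{R}^n$) and the quadratic penalty is convex and differentiable on all of $\mathbb{R}^n$, we have $\partial\Phi(r)=\partial\hat{f}_i^{(k)}(r)+\frac{1}{\sigma^2}\bigl(r-z^{(k)}+u_i^{(k)}\bigr)$ with no constraint qualification required. The optimality condition $0\in\partial\Phi\bigl(r_i^{(k+1)}\bigr)$ then reads
\begin{equation*}
0 \;\in\; \partial\hat{f}_i^{(k)}\bigl(r_i^{(k+1)}\bigr) \;+\; \frac{1}{\sigma^2}\bigl(r_i^{(k+1)}-z^{(k)}+u_i^{(k)}\bigr),
\end{equation*}
and rearranging yields exactly \eqref{eq:subgradient_f}. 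Repeating this for each $i=1,\dots,N$ finishes the argument.

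There is no genuine obstacle here; the only two facts I would be careful to state are the existence and uniqueness of the proximal minimizer (from strong convexity in Definition~\ref{def:surrogate}) and the applicability of the sum rule (from real-valuedness of $\hat{f}_i^{(k)}$ and smoothness of the quadratic). I would also note that the statement invokes only the surrogate $\hat{f}_i^{(k)}$ active at iteration $k$, so it holds for every $k$ regardless of whether or how the surrogate is later refreshed, and that it is precisely this per-iteration characterization of the subgradients that will feed into the Fej\'er-type monotonicity estimate of Lemma~\ref{lem:fejer} (used there in combination with Lemma~\ref{lemma:surrogate} to control the error term $\tfrac12(L_i\lVert\cdot\rVert^2 - p_i\lVert\cdot\rVert^2)$).
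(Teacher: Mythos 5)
Your proposal is correct and follows the same route as the paper, which simply observes that \eqref{eq:subgradient_f} is the first-order optimality condition implicit in the definition of the proximal map in \eqref{eq:admm1}; you merely spell out the standard details (existence/uniqueness from strong convexity and the sum rule for the finite-valued surrogate plus the smooth quadratic), which is fine.
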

\begin{proof}
    This follows from the definition of the proximal map in~\eqref{eq:admm1}.
\end{proof}
To simplify notation, we define 
\begin{equation}
    \label{eq:lambda-half2}
    u_i^{\left(k+\frac{1}{2}\right)} \triangleq u_i^{(k)} + \left(r_i^{(k+1)} - z^{(k)}\right)
\end{equation}
which can be thought of as an intermediate update of the dual variable after the update of $\mathbf{r}$ in~\eqref{eq:admm1} but before the update of $z$ in~\eqref{eq:admmz}. We can then rewrite the subgradients in~\eqref{eq:subgradient_f} as
\begin{align}
    \label{eq:subgradient_f2}
    - \frac{1}{\sigma^{2}} u_i^{\left(k+\frac{1}{2}\right)} &\in \partial \hat{f}^{(k)}_{i}(r_{i}^{(k+1)}). %
\end{align}

\begin{lemma}\label{lem:fejer}
    Assume the hypotheses of Theorem~\ref{THM:MAJORIZATION},  and that~\eqref{eq:consensus-problem} has a KKT point, $\left(\mathbf{r}^*, z^*, \mathbf{u}^*\right)$.
    Then the sequence $\left\{\left(\mathbf{r}^{(k)}, z^{(k)}, \mathbf{u}^{(k)}\right)\right\}$ generated by the iterations~\eqref{eq:admm1}-\eqref{eq:admm2} satisfies
    \begin{equation}
        E^{(k+1)}\left(\mathbf{r}^*, z^*, \mathbf{u}^*\right) \leq E^{(k)}\left(\mathbf{r}^*, z^*, \mathbf{u}^*\right),
    \end{equation}
    where
    \begin{align}
       E^{(k)}\left(\mathbf{r}^*, z^*, \mathbf{u}^*\right) & = \sum_{i=1}^{N} \Big( L_i \left\lVert {r_{i}^{(k)} - r_{i}^*} \right\rVert^{2} \\ 
       & \quad + \frac{1}{\sigma^2} \left\lVert {z^{(k)} - z^*} \right\rVert^{2} + \frac{1}{\sigma^2} \left\lVert {u_i^{(k)} - u_i^*} \right\rVert^{2} \Big).
    \end{align}
\end{lemma}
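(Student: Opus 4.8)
The plan is to run the classical Fej\'er-monotonicity argument for consensus ADMM, with Lemma~\ref{lemma:surrogate} substituting for the plain convexity of $f_i$ that the standard proof would use; the price of using a surrogate is exactly the extra slack $\tfrac12 L_i\|\cdot-r_i^{(k)}\|^2$ in that lemma, which is what forces the $L_i$-weighted term into $E^{(k)}$. First I would fix $i$ and the KKT point $(\mathbf{r}^*,z^*,\mathbf{u}^*)$. By Proposition~\ref{prop:subgradients} in the form~\eqref{eq:subgradient_f2} we have $-\tfrac1{\sigma^2}u_i^{(k+\frac12)}\in\partial\hat f_i^{(k)}(r_i^{(k+1)})$, and $\hat f_i^{(k)}\in\mathcal S_{L_i,p_i}(f_i,r_i^{(k)})$, so Lemma~\ref{lemma:surrogate} with $x=r_i^{(k+1)}$, $\xi=r_i^{(k)}$, $y=r_i^*$ gives
\[
f_i(r_i^{(k+1)})-f_i(r_i^*)+\tfrac1{\sigma^2}\big\langle u_i^{(k+\frac12)},\,r_i^{(k+1)}-r_i^*\big\rangle \le \tfrac12\big(L_i\|r_i^{(k)}-r_i^*\|^2-p_i\|r_i^{(k+1)}-r_i^*\|^2\big).
\]
Then I would use the KKT inclusion $-\tfrac1{\sigma^2}u_i^*\in\partial f_i(r_i^*)$ and convexity of $f_i$, i.e.\ $f_i(r_i^{(k+1)})-f_i(r_i^*)\ge-\tfrac1{\sigma^2}\langle u_i^*,r_i^{(k+1)}-r_i^*\rangle$, and subtract it from the previous line. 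The function values cancel, leaving
\[
\tfrac1{\sigma^2}\big\langle u_i^{(k+\frac12)}-u_i^*,\,r_i^{(k+1)}-r_i^*\big\rangle \le \tfrac12\big(L_i\|r_i^{(k)}-r_i^*\|^2-p_i\|r_i^{(k+1)}-r_i^*\|^2\big).
\]

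\textbf{Converting the inner product to telescoping norms.}
Next I would rewrite the left-hand side using the update relations. From~\eqref{eq:lambda-half2}, \eqref{eq:admm2}, and~\eqref{eq:admmz} one reads off $u_i^{(k+1)}-u_i^{(k+\frac12)}=z^{(k)}-z^{(k+1)}$ and $r_i^{(k+1)}-z^{(k+1)}=u_i^{(k+1)}-u_i^{(k)}$; together with $r_i^*=z^*$ these give $r_i^{(k+1)}-r_i^*=(u_i^{(k+1)}-u_i^{(k)})+(z^{(k+1)}-z^*)$ and $u_i^{(k+\frac12)}-u_i^*=(u_i^{(k+1)}-u_i^*)+(z^{(k+1)}-z^{(k)})$. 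I would substitute these, expand the inner product into four terms, and apply the polarization identity $\langle a,a-b\rangle=\tfrac12(\|a\|^2-\|b\|^2+\|a-b\|^2)$ to the two ``diagonal'' terms. Summing over $i$ and using $\sum_i u_i^{(k)}=0$ for $k\ge1$ (established just before~\eqref{eq:admm1}) together with $\sum_i u_i^*=0$ (stationarity in $z$ of the Lagrangian~\eqref{eq:lagrangian}) kills the two cross terms, leaving
\[
\tfrac2{\sigma^2}\sum_i\big\langle u_i^{(k+\frac12)}-u_i^*,\,r_i^{(k+1)}-r_i^*\big\rangle = \tfrac1{\sigma^2}\sum_i\Big(\|u_i^{(k+1)}-u_i^*\|^2-\|u_i^{(k)}-u_i^*\|^2+\|z^{(k+1)}-z^*\|^2-\|z^{(k)}-z^*\|^2+R_i^{(k)}\Big),
\]
where $R_i^{(k)}=\|u_i^{(k+1)}-u_i^{(k)}\|^2+\|z^{(k+1)}-z^{(k)}\|^2\ge0$.

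\textbf{Assembly.}
I would then sum the per-agent inequality over $i$, multiply by $2$, and substitute the identity above. Moving the $u$- and $z$-telescopes to the left and dropping the nonnegative residual $\tfrac1{\sigma^2}\sum_i R_i^{(k)}$ yields
\[
\sum_i\Big(p_i\|r_i^{(k+1)}-r_i^*\|^2+\tfrac1{\sigma^2}\|u_i^{(k+1)}-u_i^*\|^2+\tfrac1{\sigma^2}\|z^{(k+1)}-z^*\|^2\Big)\le\sum_i\Big(L_i\|r_i^{(k)}-r_i^*\|^2+\tfrac1{\sigma^2}\|u_i^{(k)}-u_i^*\|^2+\tfrac1{\sigma^2}\|z^{(k)}-z^*\|^2\Big)=E^{(k)}.
\]
Since $p_i\ge L_i$ by Definition~\ref{def:surrogate}, the left side is at least $E^{(k+1)}$, so $E^{(k+1)}\le E^{(k)}$, which is the claim.

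\textbf{Main obstacle.}
The delicate part is the bookkeeping in the middle step: getting the substitutions for $r_i^{(k+1)}-r_i^*$ and $u_i^{(k+\frac12)}-u_i^*$ exactly right, verifying that the two cross terms really are annihilated by $\sum_i u_i^{(k)}=0$ and $\sum_i u_i^*=0$, and checking that the leftover term $R_i^{(k)}$ has the sign that lets it be discarded. The one genuinely essential use of the surrogate hypothesis is that Lemma~\ref{lemma:surrogate} contributes the $L_i$-weighted slack evaluated at the \emph{previous} iterate $r_i^{(k)}$, which must then be dominated, on the next step, by the $p_i$-weighted term at $r_i^{(k+1)}$ — and this is precisely where $p_i\ge L_i$ enters.
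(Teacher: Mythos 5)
Your proof is correct and takes essentially the same route as the paper's: Lemma~\ref{lemma:surrogate} applied with $(x,\xi,y)=(r_i^{(k+1)},r_i^{(k)},r_i^*)$ and the subgradient from~\eqref{eq:subgradient_f2}, polarization identities, the cross-term cancellations from $\sum_i u_i^{(k)}=0$ and $\sum_i u_i^*=0$, and $p_i\ge L_i$ to absorb the surrogate slack into the Lyapunov decrease. The only difference is bookkeeping: you cancel the function values immediately using convexity of $f_i$ at the KKT point and expand a single inner product $\langle u_i^{(k+\frac12)}-u_i^*,\,r_i^{(k+1)}-r_i^*\rangle$, whereas the paper defers this to a Lagrangian saddle-point inequality and splits the same expansion across its Lemmas~\ref{lem:saddle-point-bnd} and~\ref{lem:inner-prod}, producing identical telescoping terms.
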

\begin{proof}
    Proof is given in the supplemental material.
\end{proof}

We are now ready to state and prove convergence of the iterations~\eqref{eq:admm1}-\eqref{eq:admm2}.

\begin{proof}[Proof of Theorem~\ref{thm:admm-converge}]
    Let $(\mathbf{r}^*, z^*, \mathbf{u}^*)$ be a KKT point of~\eqref{eq:consensus-problem}.
    By Lemma~\ref{lem:fejer}, we have that $\left\{\left(\mathbf{r}^{(k)}, z^{(k)}, \mathbf{u}^{(k)}\right)\right\}$ is bounded.
    Let $\left\{\left(\mathbf{r}^{(k_j)}, z^{(k_j)}, \mathbf{u}^{(k_j)}\right)\right\}$ be a convergent subsequence with limit point $\left(\mathbf{r}^{\dagger}, z^{\dagger}, \mathbf{u}^{\dagger}\right)$.
    We will show that this limit point is also a KKT point of~\eqref{eq:consensus-problem}.

    From (43)-(44) in the proof of Lemma~\ref{lem:fejer}, the iterates satisfy
    \begin{align}
        \frac{1}{2\sigma^2} & \sum_{i=1}^{N} \left\lVert {z^{(k+1)} - z^{(k)}} \right\rVert ^{2} + \left\lVert {u_i^{(k+1)} - u_i^{(k)}} \right\rVert ^{2} \label{eq:z-u-diff} \\
        & \leq \frac{1}{2} \left( E^{(k)}\left(\mathbf{r}^*, z^*, \mathbf{u}^*\right) - E^{(k+1)}\left(\mathbf{r}^*, z^*, \mathbf{u}^*\right) \right) \label{eq:E-diff} .
    \end{align}
    Using Lemma~\ref{lem:fejer}, summing both sides of~\eqref{eq:z-u-diff}-\eqref{eq:E-diff} over all $k$, and discarding the telescoping terms, we get $\left\lVert z^{(k+1)} - z^{(k)} \right\rVert \to 0$ and $\left\lVert u_i^{(k+1)} - u_i^{(k)} \right\rVert \to 0$ as $k \to \infty$. 
    Since $z^{(k_j)} \rightarrow z^\dagger$ as $j \rightarrow \infty$, we can set $k=k_j$ and take $j \rightarrow \infty$ to get $z^{(k_j+1)} \rightarrow z^\dagger$.  Likewise, $u_i^{(k_j)} \rightarrow u_i^\dagger$.
    
    Rearranging \eqref{eq:admm2}, we have 
    \begin{equation*}
        r_i^{(k_j+1)} = u_i^{(k_j+1)} - u_i^{(k_j)} + z^{(k_j+1)},
    \end{equation*}
    so $r_i^{(k_j+1)}$ converges to $z^\dagger$, which we denote also as $r_i^\dagger$.  
    Lastly, $u_i^{(k_j+\frac{1}{2})} \to u_i^{\dagger}$ can be seen by taking the limit of \eqref{eq:lambda-half2} along the subsequence $k = k_j$.
    
    Since $-\frac{1}{\sigma^2} u_i^{(k_j+\frac{1}{2})} \in \partial \Hat{f}_i^{(k_j)}(r_i^{(k_j+1)})$, we have by Lemma~\ref{lemma:surrogate} that for all $\mathbf{r}$,
    \begin{equation}
        f(\mathbf{r}^{(k_j+1)}) - f(\mathbf{r}) + \frac{1}{\sigma^2} \sum_{i=1}^{N} \left\langle u_i^{(k_j+\frac{1}{2})}, r_i^{(k_j+1)} - r_i \right\rangle \leq 0.
    \end{equation}
    Since $f$ is convex and finite on ${\mathbb R}^n$ and $r_i^{(k_j+1)} \to r_i^{\dagger}$, we have that $f(\mathbf{r}^{(k_j+1)}) \to f(\mathbf{r}^{\dagger})$, and so taking the limit of the above inequality along the subsequence gives
    \begin{equation}
        f(\mathbf{r}^{\dagger}) - f(\mathbf{r}) + \frac{1}{\sigma^2} \sum_{i=1}^{N} \left\langle u_i^{\dagger}, r_i^{\dagger} - r_i \right\rangle \leq 0,
    \end{equation}
    which implies that $-\frac{1}{\sigma^2} u_i^{\dagger} \in \partial f_i(r_i^{\dagger})$ for $i=1,\dots,N$.
    This completes the proof that $\left(\mathbf{r}^{\dagger}, z^{\dagger}, \mathbf{u}^{\dagger} \right)$ is a KKT point.

    Finally, we show that $\left(\mathbf{r}^{\dagger}, z^{\dagger}, \mathbf{u}^{\dagger} \right)$ is the only limit point of the sequence $\left\{\left(\mathbf{r}^{(k)}, z^{(k)}, \mathbf{u}^{(k)}\right)\right\}$.
    First, we replace $\left(\mathbf{r}^*, z^*, \mathbf{u}^*\right)$ with $\left(\mathbf{r}^{\dagger}, z^{\dagger}, \mathbf{u}^{\dagger}\right)$ in Lemma~\ref{lem:fejer}.
    Then for any $k > k_j$, we have
    \begin{equation}
        E^{(k)}\left(\mathbf{r}^{\dagger}, z^{\dagger}, \mathbf{u}^{\dagger}\right) \leq E^{(k_j)}\left(\mathbf{r}^{\dagger}, z^{\dagger}, \mathbf{u}^{\dagger}\right).
    \end{equation}
    Since the right hand side converges to 0 as $j\to \infty$, the definition of $E^{(k)}$ implies that $\left(\mathbf{r}^{(k)}, z^{(k)}, \mathbf{u}^{(k)}\right) \to \left(\mathbf{r}^{\dagger}, z^{\dagger}, \mathbf{u}^{\dagger}\right)$ as $k \to \infty$.
\end{proof}

To end, we prove Theorem~\ref{THM:MAJORIZATION} part~\ref{THM:PART2}.
\begin{proof}[Proof of Theorem~\ref{THM:MAJORIZATION} part~\ref{THM:PART2}]
    As noted after equation \eqref{eq:dr-admm2}, by introducing the variables
        $u_i^{(k)} = \overline{\mathbf{r}}^{(k)} - w_i^{(k)}$, and 
        $z^{(k+1)} = \overline{\mathbf{r}}^{(k+1)} + \overline{\mathbf{u}}^{(k)}$,
    the iterates of~\eqref{eq:aug-mann} are equivalent to the iterates~\eqref{eq:admm1}-\eqref{eq:admm2}.
    By Theorem~\ref{thm:admm-converge}, $\left(\mathbf{r}^{(k)}, z^{(k)}, \mathbf{u}^{(k)}\right)$ converges to a KKT point, $\left(\mathbf{r}^{*}, z^{*}, \mathbf{u}^{*}\right)$, of~\eqref{eq:consensus-problem}. 
    Resubsituting the change of variables, we get $(\mathbf{w}^*, \mathbf{r}^*)$ is a fixed point of the system~\eqref{eq:aug-mann}, which by Theorem~\ref{THM:MAJORIZATION} part~\ref{THM:PART1} is a solution to the exact MACE formulation.
\end{proof}

\end{appendices}

\bibliography{zotero}
\bibliographystyle{ieeetr}

\clearpage

\title{Supplemental Material for ``CLAMP: Majorized Plug-and-Play for Coherent 3D Lidar Imaging''}

{\maketitle}
\setcounter{page}{1}  %

\section*{Proof of Lemma~\ref{cor:subgrad}}

\noindent
{\bf Lemma~\ref{cor:subgrad}. }
{\em
    Let $\hat{f}\in \mathcal{S}_{L,p} \left( f, \xi \right)$.
    Then $\partial f(\xi) = \partial \hat{f}(\xi)$.
}
\begin{proof}[Proof of Lemma 1]
    The smoothness condition in Definition~\ref{def:surrogate} states $\nabla e (\xi) = 0$, where $e \triangleq f - \hat{f}$.
    Hence, the directional derivatives of $f$ and $\hat{f}$ at $\xi$ must be equal in all directions.
    That is, for all directions $h$, 
    \begin{equation}
        f^{\prime}(\xi; h) = \hat{f}^{\prime}(\xi; h)
    \end{equation}
    where
    \begin{equation}
        f'(\xi; h) = \lim_{t\to 0^{+}} \frac{f(\xi + th) - f(\xi)}{t},
    \end{equation}
    and
    \begin{equation}
        \hat{f}^{\prime}(\xi; h) = \lim_{t\to 0^{+}} \frac{\hat{f}(\xi + th) - \hat{f}(\xi)}{t}.
    \end{equation}
    Since subgradients can be specified by the set of all directional derivatives~\citeSupp[Theorem 8.8 and 8.10]{recht-wright}, we have
    \begin{align}
        \partial f (\xi) &= \left\{ u \mid \langle u, h \rangle \leq f'(\xi; h), \forall h\right\} \\
        &= \left\{ u \mid \langle u, h \rangle \leq \hat{f}^{\prime}(\xi; h), \forall h\right\} \\
        & = \partial \hat{f} (\xi).
    \end{align}
\end{proof}

\section*{Proof of Lemma~\ref{lemma:surrogate}}

Before proving Lemma~\ref{lemma:surrogate}, we state in Lemma~\ref{prop:lipschitz} a useful property of the error function, $e \triangleq f - \hat{f}$.
\begin{lemma}\label{prop:lipschitz}
If $e$ is differentiable with an $L$-Lipschitz continuous gradient, then for any $\xi,y$,
\begin{equation}
    \lvert e(y) - e(\xi) - \left\langle \nabla e(\xi), y-\xi \right\rangle \rvert \leq \frac{L}{2} \left\lVert y-\xi \right\rVert^2.
\end{equation}
\end{lemma}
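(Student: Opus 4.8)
The plan is to establish this inequality---the classical ``descent lemma'' (quadratic upper bound) for a function with Lipschitz-continuous gradient---by reducing to a one-dimensional integral along the segment from $\xi$ to $y$ and then invoking the Lipschitz hypothesis. Since $e$ is differentiable on $\mathbb{R}^n$ with $L$-Lipschitz gradient, the scalar function $\phi(t) \triangleq e\big(\xi + t(y-\xi)\big)$ is differentiable on $[0,1]$ with derivative $\phi'(t) = \langle \nabla e(\xi + t(y-\xi)),\, y-\xi \rangle$ by the chain rule.

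First I would apply the fundamental theorem of calculus to write
\begin{equation}
    e(y) - e(\xi) = \phi(1) - \phi(0) = \int_0^1 \langle \nabla e(\xi + t(y-\xi)),\, y-\xi \rangle \, dt,
\end{equation}
and then subtract the constant $\langle \nabla e(\xi),\, y-\xi \rangle = \int_0^1 \langle \nabla e(\xi),\, y-\xi \rangle \, dt$ from both sides to obtain
\begin{equation}
    e(y) - e(\xi) - \langle \nabla e(\xi),\, y-\xi \rangle = \int_0^1 \langle \nabla e(\xi + t(y-\xi)) - \nabla e(\xi),\, y-\xi \rangle \, dt.
\end{equation}
Next I would take absolute values, move the absolute value inside the integral via the triangle inequality, apply the Cauchy--Schwarz inequality pointwise in $t$, and use $L$-Lipschitz continuity of $\nabla e$ together with $\lVert \xi + t(y-\xi) - \xi \rVert = t\lVert y - \xi \rVert$ (valid since $t \geq 0$), which bounds the integrand by $tL\lVert y-\xi\rVert^2$; integrating $\int_0^1 t\, dt = \tfrac{1}{2}$ then yields the claimed bound $\tfrac{L}{2}\lVert y-\xi\rVert^2$.

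I do not anticipate any genuine obstacle: the argument is entirely standard, and the only minor points to verify are the chain-rule identity for $\phi'$ (justified by Fréchet differentiability of $e$) and integrability of the integrand (which is continuous in $t$ since $\nabla e$ is continuous). This lemma then feeds directly into the proof of Lemma~\ref{lemma:surrogate}, where it is applied to the error $e = f - \hat{f}$, whose value and gradient vanish at the approximation point.
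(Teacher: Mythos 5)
Your argument is correct and complete: the paper itself simply cites Nesterov's Lemma~1.2.3 for this statement, and your integral-form proof (fundamental theorem of calculus along the segment, Cauchy--Schwarz, $L$-Lipschitz bound, and $\int_0^1 t\,dt = \tfrac{1}{2}$) is exactly the standard proof of that cited result. No gaps.
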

\begin{proof}
This is Lemma 1.2.3 in~\citeSupp{nesterovIntroductoryLecturesConvex2004}. 
\end{proof}

\noindent
{\bf Lemma~\ref{lemma:surrogate}. }
{\em 
    Let $\hat{f}\in \mathcal{S}_{L,p} \left( f, \xi \right)$ (see Definition~\ref{def:surrogate}), and let $u\in \partial \hat{f}(x)$ be any subgradient of $\hat{f}$ at $x$.
    Then for any $y$,
    \begin{align*}
    f(x) - f(y) - &\left \langle u, x-y \right \rangle \\ 
    & \leq \frac{1}{2}  \left( L\left\lVert {y-\xi} \right\rVert^{2} - p \left\lVert {x-y} \right\rVert ^{2} \right) .
    \end{align*}
}
    
\begin{proof}[Proof of Lemma 2]
    We begin with the fact that $\hat{f}$ is $p$-strongly convex. 
    There are several equivalent definitions of strong convexity, see for example~\citeSupp[Section~9.1.2]{boydConvexOptimization2004}. 
    One such definition is that for all $x,y$ and subgradients $u\in \partial \hat{f}(x)$, we have 
    \begin{equation}
    \label{eq:leamma1-eq0}
        \hat{f}(y) \geq \hat{f}(x) - \left \langle u, x-y \right \rangle +\frac{p}{2}\left\lVert {y-x} \right\rVert ^{2}.
    \end{equation}
    Using this, we have
    \begin{align}
    \label{eq:leamma1-eq1}
        f(x) &\leq \hat{f}(x) \\
        \label{eq:leamma1-eq2}
        & \leq \hat{f}(y) + \left \langle u, x-y \right \rangle - \frac{p}{2}\left\lVert {y-x} \right\rVert ^{2},
    \end{align}
    where~\eqref{eq:leamma1-eq1} is the majorization property of $\hat{f}$, and~\eqref{eq:leamma1-eq2} follows from Equation~\eqref{eq:leamma1-eq0}.
    Now, defining $e(y) \triangleq \hat{f}(y) - f(y)$, the fact that $\hat{f}\in \mathcal{S}_{L,p} \left( f, \xi \right)$ implies that $e(\xi) = 0$ and $\nabla e(\xi) = 0$.  Hence Lemma~\ref{prop:lipschitz} implies that $e(y) \leq \frac{L}{2}\left\lVert {y-\xi} \right\rVert^2$. Therefore, continuing the inequalities in~\eqref{eq:leamma1-eq1}-\eqref{eq:leamma1-eq2}, we have
    \begin{align*}
        f(x) &\leq f(y)  + \frac{L}{2}\left\lVert {y-\xi} \right\rVert ^{2} + \left \langle u, x-y \right \rangle - \frac{p}{2}\left\lVert {y-x} \right\rVert ^{2},
    \end{align*}
    completing the proof.
\end{proof}

\section*{Proof of Lemma~\ref{lem:fejer}}

Before we prove Lemma~\ref{lem:fejer}, we prove two additional useful lemmas.

\begin{lemma}\label{lem:saddle-point-bnd}
    Assume the hypotheses of Proposition~\ref{prop:subgradients}. Then for any $(\mathbf{r}, z )$, we have
    \begin{align*}
    & f(\mathbf{r}^{(k+1)})-f(\mathbf{r})  \\
    & \quad + \frac{1}{\sigma^2} \sum\limits_{i=1}^{N}\left \langle u_i^{\left(k+\frac{1}{2}\right)}, \left( r_{i}^{(k+1)} - r_i \right) - \left( z ^{(k+1)} -z \right) \right \rangle\\
    & \leq \frac{1}{2} \sum\limits_{i=1}^{N} L_{i}  \left(\left\lVert {r_{i}^{(k)} - r_{i}} \right\rVert ^{2} -  \left\lVert {r_{i}^{(k+1)} - r_{i}} \right\rVert ^{2}  \right) \\ 
    & \quad + \frac{N}{2\sigma^2}\left\lVert {z^{(k)} - z} \right\rVert ^{2} - \frac{N}{2\sigma^2}\left\lVert {z^{(k+1)} - z} \right\rVert ^{2} \\
    & \quad - \frac{N}{2\sigma^2}\left\lVert {z^{(k+1)} - z^{(k)}} \right\rVert ^{2}.
    \end{align*}
\end{lemma}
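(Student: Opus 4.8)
The plan is to combine the surrogate subgradient inequality of Lemma~\ref{lemma:surrogate}, applied termwise in $i$, with an exact algebraic identity linking the intermediate dual iterates $u_i^{(k+\frac12)}$ to the consecutive $z$-iterates, and then add the two resulting bounds.

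First I would fix $i$ and invoke \eqref{eq:subgradient_f2}, which gives $-\tfrac{1}{\sigma^2} u_i^{(k+\frac12)} \in \partial \hat{f}_i^{(k)}(r_i^{(k+1)})$ with $\hat{f}_i^{(k)} \in \mathcal{S}_{L_i,p_i}(f_i, r_i^{(k)})$. Applying Lemma~\ref{lemma:surrogate} with $x = r_i^{(k+1)}$, $y = r_i$, $\xi = r_i^{(k)}$, and $u = -\tfrac{1}{\sigma^2} u_i^{(k+\frac12)}$ yields
\begin{align*}
f_i(r_i^{(k+1)}) - f_i(r_i) + \frac{1}{\sigma^2}\left\langle u_i^{(k+\frac12)}, r_i^{(k+1)} - r_i \right\rangle \leq \frac{1}{2}\left( L_i \|r_i^{(k)} - r_i\|^2 - p_i \|r_i^{(k+1)} - r_i\|^2 \right).
\end{align*}
Since $p_i \geq L_i$ by the strong-convexity clause of Definition~\ref{def:surrogate}, I would replace $-p_i\|r_i^{(k+1)} - r_i\|^2$ by $-L_i\|r_i^{(k+1)} - r_i\|^2$ and sum over $i = 1,\dots,N$; this produces $f(\mathbf{r}^{(k+1)}) - f(\mathbf{r})$, the $\mathbf{r}$-part of the inner product on the left-hand side, and the first line of the right-hand side of the claim.

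Next I would handle the $z$-part of the inner product. Summing the definition \eqref{eq:lambda-half2} over $i$ and using the $z$-update \eqref{eq:admmz} in the form $N z^{(k+1)} = \sum_i (r_i^{(k+1)} + u_i^{(k)})$ gives the key identity $\sum_{i=1}^N u_i^{(k+\frac12)} = N\bigl(z^{(k+1)} - z^{(k)}\bigr)$. Hence
\begin{align*}
-\frac{1}{\sigma^2}\sum_{i=1}^N \left\langle u_i^{(k+\frac12)}, z^{(k+1)} - z \right\rangle = -\frac{N}{\sigma^2}\left\langle z^{(k+1)} - z^{(k)}, z^{(k+1)} - z \right\rangle,
\end{align*}
and the polarization identity $\langle a-b, a-c\rangle = \tfrac12(\|a-b\|^2 + \|a-c\|^2 - \|b-c\|^2)$ with $a = z^{(k+1)}$, $b = z^{(k)}$, $c = z$ rewrites the right-hand side as exactly $\frac{N}{2\sigma^2}\|z^{(k)} - z\|^2 - \frac{N}{2\sigma^2}\|z^{(k+1)} - z\|^2 - \frac{N}{2\sigma^2}\|z^{(k+1)} - z^{(k)}\|^2$. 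Adding this equality to the summed inequality from the previous step yields the lemma, since by linearity $\langle u_i^{(k+\frac12)}, (r_i^{(k+1)}-r_i) - (z^{(k+1)}-z)\rangle$ is precisely the difference of the two inner-product terms handled separately.

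The argument is largely bookkeeping, so the one step worth flagging is the derivation of $\sum_i u_i^{(k+\frac12)} = N(z^{(k+1)} - z^{(k)})$: it is what makes the cross term involving $z$ telescope cleanly into the Fej\'er potential $E^{(k)}$ of Lemma~\ref{lem:fejer}, and without it the $z$-dependence would not collapse. The only other subtlety is using $p_i \geq L_i$ to discard the surplus negative quadratic, which keeps the coefficients of $\|r_i^{(k)}-r_i\|^2$ and $\|r_i^{(k+1)}-r_i\|^2$ equal so that these $\mathbf{r}$-terms can telescope in the subsequent proof of Lemma~\ref{lem:fejer}.
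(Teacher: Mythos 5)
Your proposal is correct and matches the paper's own proof essentially step for step: apply Lemma~\ref{lemma:surrogate} with the subgradient from \eqref{eq:subgradient_f2} and $x = r_i^{(k+1)}$, $y = r_i$, $\xi = r_i^{(k)}$, relax $-p_i$ to $-L_i$, sum over $i$, and combine with the identity $\sum_i u_i^{\left(k+\frac{1}{2}\right)} = N\left(z^{(k+1)} - z^{(k)}\right)$ and a norm expansion of the resulting inner product. The only cosmetic difference is that the paper invokes a four-point identity where you use the equivalent three-point polarization identity, so there is nothing to change.
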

\begin{proof}
    Since $f(\mathbf{r}) = \sum_{i=1}^N f_i(r_i)$, we create a bound for each $f_i$, and then sum over $i=1,\ldots, N$ to get the final result.
    First, we apply Lemma~\ref{lemma:surrogate} to $f_i$, mapping the variables $\xi$ to $r_i^{(k)},$ $x$ to $r_i^{(k+1)},$ $y$ to $r_i,$ and $u$ to the subgradient in Equation~\eqref{eq:subgradient_f2}. 
    This gives
    \begin{align}
    \label{eq:ineq-f0}
    f_{i}(r_{i}^{(k+1)}) & - f_{i}(r_{i}) + \frac{1}{\sigma^2}\left \langle u_i^{\left(k+\frac{1}{2}\right)}, r_{i}^{(k+1)} - r_{i} \right \rangle \\
    &\leq  \frac{1}{2}  \left( L_{i} \left\lVert {r_{i}^{(k)} - r_{i}} \right\rVert ^{2} - p_{i} \left\lVert {r_{i}^{(k+1)} - r_{i}} \right\rVert ^{2} \right)\\
    \label{eq:ineq-f2}
    &\leq \frac{L_{i}}{2}  \left(  \left\lVert {r_{i}^{(k)} - r_{i}} \right\rVert ^{2} - \left\lVert {r_{i}^{(k+1)} - r_{i}} \right\rVert ^{2} \right),
    \end{align}
    where we have used the assumption $p_{i}\geq L_{i}$ from Definition~\ref{def:surrogate} to obtain the second inequality.
    
    Next, we obtain a similar expression with $z$.
    First, note that by~\eqref{eq:lambda-half2} and \eqref{eq:admmz}, we have
    \begin{equation} \label{eq:sum_ui_half}
        \frac{1}{N}\sum_{i=1}^{N} u_i^{\left(k+\frac{1}{2}\right)} = z^{(k+1)} - z^{(k)}.
    \end{equation}
    We then use \eqref{eq:sum_ui_half} to rewrite the inner product of the subgradient in~\eqref{eq:subgradient_f2} with $\left(z ^{(k+1)} - z \right)$ as
    \begin{align}\label{eq:z-norm1}
        \frac{1}{\sigma^2} \sum\limits_{i=1}^{N} &  \left \langle u_i^{\left(k+\frac{1}{2}\right)}, z ^{(k+1)} - z  \right \rangle \\
        &= \frac{N}{\sigma^2} \left \langle z^{(k+1)} - z ^{(k)}, z^{(k+1)} - z \right \rangle \\
        &= \frac{N}{2\sigma^2}\Big(\left\lVert {z^{(k+1)} - z} \right\rVert ^{2} - \left\lVert {z^{(k)}-z} \right\rVert ^{2} \\
        \label{eq:z-norm2}
        & \quad  + \left\lVert {z^{(k+1)} - z^{(k)}} \right\rVert ^{2} \Big),
    \end{align}
    where we have used the identity $2\left \langle a-b, c - d\right \rangle =  \left\lVert {a-d} \right\rVert ^{2}-  \left\lVert {a-c} \right\rVert ^{2} -  \left\lVert {b-d} \right\rVert ^{2} +  \left\lVert {b-c} \right\rVert ^{2}$. 

    To prove the lemma, we sum the inequalities in~\eqref{eq:ineq-f0}-\eqref{eq:ineq-f2} over $i=1, \ldots, N$ and subtract~\eqref{eq:z-norm1}-\eqref{eq:z-norm2}.
\end{proof}

\begin{lemma}\label{lem:inner-prod}
    Assume the hypotheses of Proposition~\ref{prop:subgradients}. Then for any $\mathbf{u}$, 
    \begin{align}
        \frac{1}{\sigma^2}\sum_{i=1}^{N} & \left\langle u_i^{(k+\frac{1}{2})}-u_i, r_i^{(k+1)} - z^{(k+1)} \right\rangle \\
        &= \frac{1}{2\sigma^2} \sum_{i=1}^{N} \Big( \left\lVert {u_i^{(k+1)} - u_i} \right\rVert ^{2} - \left\lVert {u_i^{(k)}- u_i} \right\rVert ^{2} \\ 
        & \mbox{\hspace{3.9cm}} + \left\lVert {u_i^{(k+1)} - u_i^{(k)}} \right\rVert^{2} \Big)
    \end{align}
\end{lemma}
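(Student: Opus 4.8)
The plan is to reduce the claimed identity to the elementary polarization identity already used in Lemma~\ref{lem:saddle-point-bnd}, after substituting the defining relations for $u_i^{(k+1)}$, $u_i^{(k+\frac12)}$, and $z^{(k+1)}$. Two algebraic facts do all the work. First, the dual update~\eqref{eq:admm2} rearranges to $r_i^{(k+1)} - z^{(k+1)} = u_i^{(k+1)} - u_i^{(k)}$. Second, subtracting~\eqref{eq:admm2} from the definition~\eqref{eq:lambda-half2} of the half-step variable gives $u_i^{(k+\frac12)} - u_i^{(k+1)} = z^{(k+1)} - z^{(k)}$, a vector that does not depend on $i$.

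Using the first fact to rewrite the second slot of each inner product, the left-hand side becomes $\frac{1}{\sigma^2}\sum_{i=1}^{N}\langle u_i^{(k+\frac12)} - u_i,\ u_i^{(k+1)} - u_i^{(k)}\rangle$. Splitting $u_i^{(k+\frac12)} - u_i = (u_i^{(k+1)} - u_i) + (u_i^{(k+\frac12)} - u_i^{(k+1)})$ and invoking the second fact, this equals $\frac{1}{\sigma^2}\sum_{i=1}^{N}\langle u_i^{(k+1)} - u_i,\ u_i^{(k+1)} - u_i^{(k)}\rangle + \frac{1}{\sigma^2}\bigl\langle z^{(k+1)} - z^{(k)},\ \sum_{i=1}^{N}(u_i^{(k+1)} - u_i^{(k)})\bigr\rangle$. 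The second term vanishes: summing~\eqref{eq:admm2} over $i$ and substituting~\eqref{eq:admmz} gives $\sum_{i=1}^{N} u_i^{(k+1)} = 0$, so $\sum_{i=1}^{N}(u_i^{(k+1)} - u_i^{(k)}) = 0$ for the iterates in question (this is the consensus property $\overline{\mathbf u}^{(k)} = 0$ noted in the derivation of~\eqref{eq:admm1}--\eqref{eq:admm2}).

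It then remains to apply the polarization identity $\langle a - b,\ c - d\rangle = \tfrac12\bigl(\lVert a-d\rVert^2 - \lVert a-c\rVert^2 - \lVert b-d\rVert^2 + \lVert b-c\rVert^2\bigr)$ --- the same one cited in Lemma~\ref{lem:saddle-point-bnd} --- to each remaining term $\langle u_i^{(k+1)} - u_i,\ u_i^{(k+1)} - u_i^{(k)}\rangle$, with the assignment $a = u_i^{(k+1)}$, $b = u_i$, $c = u_i^{(k+1)}$, $d = u_i^{(k)}$. The $\lVert a - c\rVert^2$ term is zero, and what survives is exactly $\tfrac12\bigl(\lVert u_i^{(k+1)} - u_i\rVert^2 - \lVert u_i^{(k)} - u_i\rVert^2 + \lVert u_i^{(k+1)} - u_i^{(k)}\rVert^2\bigr)$; summing over $i$ and carrying the factor $1/\sigma^2$ yields the stated equality. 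The only step needing any care --- and the closest thing to an obstacle --- is bookkeeping the half-step variable $u_i^{(k+\frac12)}$ correctly and confirming that $\sum_{i=1}^{N} u_i^{(k)} = 0$ holds over the range of $k$ used here; everything else is mechanical and the argument does not depend on the choice of the reference dual vector $\mathbf u$.
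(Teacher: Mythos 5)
Your proposal is correct and follows essentially the same route as the paper's proof: rewrite $r_i^{(k+1)} - z^{(k+1)}$ as $u_i^{(k+1)} - u_i^{(k)}$ via~\eqref{eq:admm2}, split off the $i$-independent vector $z^{(k+1)} - z^{(k)} = u_i^{(k+\frac{1}{2})} - u_i^{(k+1)}$, kill that cross term using $\overline{\mathbf{u}}^{(k)} = 0$, and finish with the same polarization identity. Your explicit verification that $\sum_i u_i^{(k+1)} = 0$ follows from summing~\eqref{eq:admm2} and substituting~\eqref{eq:admmz} is a slightly more self-contained justification of the consensus property the paper simply cites, but the argument is otherwise identical.
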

\begin{proof}
    From~\eqref{eq:admm2}, we have $u_i^{(k+1)} - u_i^{(k)} = r_i^{(k+1)} - z^{(k+1)}$. Using this right hand side directly plus solving for $u_i^{(k+1)}$ to simplify~\eqref{eq:lambda-half2}, we can rewrite the inner products in the lemma statement as
    \begin{align}
        \Bigl< u_i^{(k+\frac{1}{2})}& -u_i, r_i^{(k+1)} - z^{(k+1)} \Bigr> \\
        \label{eq:u-bounds}
        &= \left\langle u_i^{(k+1)} - u_i, u_i^{(k+1)} - u_i^{(k)} \right\rangle \\
        \label{eq:zero-inner-prod}
        & \quad + \left\langle z^{(k+1)} - z^{(k)}, u_i^{(k+1)} - u_i^{(k)} \right\rangle.
    \end{align}
    As noted just before \eqref{eq:admm1}, $\overline{{\mathbf u}}^{(k)} = 0$ for $k \geq 2$, so summing \eqref{eq:zero-inner-prod} over $i$ as in the lemma statement yields 0, hence this term may be dropped.  
    The term in~\eqref{eq:u-bounds} can be rewritten as
    \begin{align}
        \Bigl< u_i^{(k+1)} & - u_i, u_i^{(k+1)} - u_i^{(k)} \Bigr> \\
        &= \frac{1}{2} \Big( \left\lVert {u_i^{(k+1 )}- u_i} \right\rVert ^{2} - \left\lVert {u_i^{(k)} - u_i} \right\rVert ^{2} \\ 
        & \quad + \left\lVert {u_i^{(k+1)} - u_i^{(k)}} \right\rVert ^{2} \Big),
    \end{align}
    which follows from the identity $2\left \langle a-b, c - d\right \rangle =  \left\lVert {a-d} \right\rVert ^{2}-  \left\lVert {a-c} \right\rVert ^{2} -  \left\lVert {b-d} \right\rVert ^{2} +  \left\lVert {b-c} \right\rVert ^{2}$. Summing over $i=1,\dots, N$ and multiplying by $\frac{1}{\sigma^2}$ completes the proof.
\end{proof}

Now, we prove Lemma~\ref{lem:fejer}.

\noindent
{\bf Lemma~\ref{lem:fejer}.}
{\em
    Assume the hypotheses of Theorem~\ref{THM:MAJORIZATION},  and that~\eqref{eq:consensus-problem} has a KKT point, $\left(\mathbf{r}^*, z^*, \mathbf{u}^*\right)$.
    Then the sequence $\left\{\left(\mathbf{r}^{(k)}, z^{(k)}, \mathbf{u}^{(k)}\right)\right\}$ generated by the iterations~\eqref{eq:admm1}-\eqref{eq:admm2} satisfies
    \begin{equation}
        E^{(k+1)}\left(\mathbf{r}^*, z^*, \mathbf{u}^*\right) \leq E^{(k)}\left(\mathbf{r}^*, z^*, \mathbf{u}^*\right),
    \end{equation}
    where
    \begin{align}
       E^{(k)}& \left(\mathbf{r}^*, z^*, \mathbf{u}^*\right) = \\
       & \sum_{i=1}^{N} \Big( L_i \left\lVert {r_{i}^{(k)} - r_{i}^*} \right\rVert^{2}  
       + \frac{1}{\sigma^2} \left\lVert {z^{(k)} - z^*} \right\rVert^{2} \\
       & \mbox{\hspace{3.15cm}} + \frac{1}{\sigma^2} \left\lVert {u_i^{(k)} - u_i^*} \right\rVert^{2} \Big).
    \end{align}
}
\begin{proof}
    Since $\left(\mathbf{r}^*, z^*, \mathbf{u}^*\right)$ is a KKT point, it satisfies the saddle point condition
    \begin{equation}
      \mathcal{L}(\mathbf{r}^*, z^*, \mathbf{u}) \leq \mathcal{L}(\mathbf{r}^*, z^*, \mathbf{u}^*) \leq \mathcal{L}(\mathbf{r}, z, \mathbf{u}^*)
    \end{equation}
    for any $(\mathbf{r}, z, \mathbf{u})$, where $\mathcal{L}$ is the Lagrangian function in~\eqref{eq:lagrangian}.
    Specifically at $(\mathbf{r},z) = ( \mathbf{r}^{(k+1)}, z^{(k+1)})$, we have
    \begin{align} \label{eq:thm-lhs1}
        &f(\mathbf{r}^{(k+1)}) - f(\mathbf{r}^*)\\ 
        \label{eq:thm-lhs2}
        & \quad + \frac{1}{\sigma^2} \sum_{i=1}^{N} \left\langle u_i^*, r_i^{(k+1)} - z^{(k+1)} \right\rangle \geq 0.
    \end{align}
    We now use Lemma~\ref{lem:saddle-point-bnd} and Lemma~\ref{lem:inner-prod} to produce an upper-bound for this quantity. 
    First we consider the left side of the inequalities in each lemma.
    By subtracting the left side of the inequality in Lemma~\ref{lem:inner-prod} evaluated at $\mathbf{u} = \mathbf{u}^*$ from the left side of the inequality in Lemma~\ref{lem:saddle-point-bnd} evaluated at $\left(\mathbf{r}, z \right) = \left(\mathbf{r}^*, z^* \right)$, we get
    
    \begin{align} \label{eq:lemma3-1}
    & f(\mathbf{r}^{(k+1)})-f(\mathbf{r}^{*})  \\
    & \quad + \frac{1}{\sigma^2} \sum\limits_{i=1}^{N}\left \langle u_i^{\left(k+\frac{1}{2}\right)}, \left( r_{i}^{(k+1)} - r_i^{*} \right) - \left( z ^{(k+1)} -z^{*} \right) \right \rangle\\
    & \quad - \frac{1}{\sigma^2}\sum_{i=1}^{N} \left\langle u_i^{(k+\frac{1}{2})}-u_i^{*}, r_i^{(k+1)} - z^{(k+1)} \right\rangle. \label{eq:lemma3-2}
    \end{align}
    By the feasibility condition of the KKT point, $r_i^* = z^{*}$ for each $i$, 
    and hence~\eqref{eq:lemma3-1}--\eqref{eq:lemma3-2} simplifies to the left side of~\eqref{eq:thm-lhs1}.
    Now, by subtracting the respective right sides of the inequalities in Lemma~\ref{lem:saddle-point-bnd} and Lemma~\ref{lem:inner-prod}, and combining with~\eqref{eq:thm-lhs1}, we get
    \begin{align}\label{eq:thm-bound1}
        0 & \leq f(\mathbf{r}^{(k+1)}) - f(\mathbf{r}^*)\\ 
        & \quad + \frac{1}{\sigma^2} \sum_{i=1}^{N}  \left\langle u_i^*, r_i^{(k+1)} - z^{(k+1)} \right\rangle \\
        \label{eq:telescope1}
        &\leq \frac{1}{2} \sum\limits_{i=1}^{N}  \Biggl[ L_{i} \left( \left\lVert {r_{i}^{(k)} - r_{i}^*} \right\rVert^{2} -  \left\lVert {r_{i}^{(k+1)} - r_{i}^*} \right\rVert ^{2} \right) \\
        & \quad + \frac{1}{\sigma^2} \left( \left\lVert {z^{(k)} - z^*} \right\rVert ^{2} - \left\lVert {z^{(k+1)} - z^*} \right\rVert ^{2} \right) \\
        \label{eq:telescope2}
        & \quad +  \frac{1}{\sigma^2} \left( \left\lVert {u_{i}^{(k)} - u_{i}^*} \right\rVert ^{2}  -  \left\lVert {u_{i}^{(k+1)} -u_{i}^*} \right\rVert^{2}\right) \\
        \label{eq:thm-bound2}
        & \quad - \frac{1}{\sigma^2} \left( \left\lVert {z^{(k+1)} - z^{(k)}} \right\rVert ^{2} + \left\lVert {u_i^{(k+1)} - u_i^{(k)}} \right\rVert ^{2} \right)\Biggr].
    \end{align}
    Finally, dropping the final non-positive term and rearranging the remaining terms gives
    \begin{align}\label{eq:lyapunov1}
        & \sum_{i=1}^{N} L_i \left\lVert {r_{i}^{(k)} - r_{i}^*} \right\rVert^{2} +  \frac{1}{\sigma^2} \left\lVert {z^{(k)} - z^*} \right\rVert^{2} + \frac{1}{\sigma^2} \left\lVert {u_i^{(k)} - u_i^*} \right\rVert^{2} \\
        &\geq  \sum_{i=1}^{N} L_i \left\lVert {r_{i}^{(k+1)} - r_{i}^*} \right\rVert ^{2} + \frac{1}{\sigma^2} \left\lVert {z^{(k+1)} - z^*} \right\rVert ^{2} \\ 
        \label{eq:lyapunov2}
        & \quad + \frac{1}{\sigma^2} \left\lVert {u_{i}^{(k+1)} - u_{i}^*} \right\rVert ^{2},
    \end{align}
    which completes the proof.
\end{proof}

\bibliographySupp{zotero}
\bibliographystyleSupp{ieeetr}
\end{document}